\documentclass[18pt,english]{article}
\usepackage[utf8]{inputenc}
\usepackage[a4paper, total={6in, 9in}]{geometry}
\usepackage[pdftex]{graphicx}
\usepackage[table]{xcolor}
\usepackage{float}
\usepackage{hhline}
\usepackage{amssymb}
\usepackage{biblatex}
\usepackage{amsmath}
\usepackage{enumitem}
\usepackage{amsthm}
\usepackage{eurosym}
\usepackage{hyperref}
\usepackage{hhline}
\newtheorem{definition}{Definition}
\newtheorem{proposition}{Proposition}
\newtheorem{theorem}{Theorem}

\newtheorem{lemma}{Lemma}
\newtheorem{corollary}{Corollary}
\newcommand{\commentout}[1]{}
\theoremstyle{remark}
\newtheorem{remark}{Remark}

\DeclareMathOperator{\OPT}{OPT}

\bibliography{bibl}

\title{Sender-Receiver Games with Payment Commitments}
\author{Ivan Geffner \\ Utrecht University}
\date{}

\begin{document}

\maketitle

\begin{abstract}
We study sender-receiver games in which players commit, before the state is drawn, to payments contingent on the subsequently verified state and the receiver's action. Payments may be transferred or burned, serving as contractual guarantees or promised tips. We analyze equilibrium guarantees, welfare, and computational complexity.

For arbitrary finite action sets, we characterize pessimistic utility guarantees under unilateral commitments. Sender-only and sender-first commitments have worst-equilibrium welfare equal to the best constant-action welfare. Simultaneous commitments can instead reduce welfare below every equilibrium of the original game. Neither simultaneous nor sender-first commitments always admit an efficient equilibrium. Receiver-first commitments always admit an efficiently computable welfare-maximizing equilibrium, but can also reduce worst-equilibrium welfare. For binary actions, we characterize their entire equilibrium payoff region.

With binary actions, optimal receiver commitments are computable in linear time, whereas optimizing deterministic sender commitments is NP-hard. Privately randomized sender commitments admit an exact $O(n\log n)$ algorithm, with an optimal lottery over at most two schemes differing at only one state. This yields an additive approximation by deterministic commitments. For arbitrary finite action sets, optimal privately randomized sender payoffs equal Bayesian persuasion values: ordinary persuasion with burning alone, and persuasion with nonnegative compliance payments when transfers are allowed. These equivalences yield polynomial-time algorithms with polynomial-size commitment lotteries.
\end{abstract}

\section{Introduction}

Suppose a customer is considering buying a product
that may be defective. The shopkeeper knows its
condition but benefits from making a sale.
The customer may therefore distrust the recommendation
to buy: the shopkeeper might give the same advice
regardless of quality. As a result, a customer who
would buy a product known to be sound may decline
to purchase.

A guarantee can change these incentives. By promising
a refund or replacement if the product proves
defective, the shopkeeper protects the customer
and makes selling a defective product less attractive.
The shopkeeper remains free to recommend a purchase,
but the guarantee changes the financial consequences
of that advice.

Payments can also originate on the customer's side.
Consider an appliance whose warranty has expired.
A technician knows whether a repair would work
but earns more from selling a replacement.
The customer can promise a tip if the appliance
is repaired successfully, making it more attractive
for the technician to recommend a viable repair.
Here, the customer rewards a favorable outcome,
rather than receiving compensation for an
unfavorable one. Both examples use commitments
to outcome-contingent payments to change the
incentives underlying advice.

The intuition for why such payments should help
is straightforward. If changing an outcome benefits
one player more than it costs the other, the
beneficiary can offer compensation that makes
the change attractive to both. However, the
existence of a mutually beneficial payment scheme
does not ensure that players will choose it.
The commitments themselves must survive strategic
deviations.

Jackson and Wilkie~\cite{jackson2005endogenous}
demonstrate this difficulty in normal-form games
with binding side-payment commitments. Payments
can fail to support efficiency even when information
is complete and contracting is costless.
Indeed, an efficient equilibrium of the original
game can cease to be supportable once players
can commit to payments. The problem is that
players choose commitments to improve their own
payoffs. A player may profit by redirecting behavior
toward an outcome that gives it a larger payoff
but creates less total welfare. The ability to
compensate others does not eliminate the conflict
over which commitments to make.

Do guarantees and tips overcome this difficulty
in strategic communication, or can the contracting
stage undermine their benefits? We study this
question in sender-receiver games. A sender
privately observes a state and communicates with
a receiver, who chooses an action affecting both
players. Before the state is drawn, one or both
players can commit to nonnegative payments contingent
on the subsequently verified state and the
receiver's action. Payments can be transferred
to the other player or paid to an external sink,
which we call money burning. Guarantees and
promised tips are interpretations of these
enforceable commitments; subsequent communication
and action choices remain strategic.

Our analysis distinguishes two questions.
What can commitments guarantee under unfavorable
equilibrium selection, and what can they achieve
in a favorable equilibrium? We study both
unilateral commitments and bilateral commitments,
with the latter made simultaneously or sequentially
in either order. Our results show that the identity
of the committing player and the order of commitments
determine which benefits of contracting survive.

\paragraph{Welfare guarantees and commitment order.}

For arbitrary finite action sets, we characterize
the pessimistic utility guarantees under unilateral
commitments. The receiver can guarantee its original
babbling payoff, but cannot guarantee more.
The sender can secure the welfare of the best
constant action minus that receiver payoff.
Consequently, worst-equilibrium welfare under
sender-only commitments equals the welfare of
the best constant action.

This welfare guarantee survives when the receiver
can commit after observing the sender's scheme.
Sender-first commitments preserve the sender's
unilateral pessimistic guarantee and the receiver's
original babbling guarantee. Simultaneous commitments,
however, can introduce equilibria with lower welfare
than every equilibrium of the original game.
This deterioration can occur even when both
players' commitments are strict best responses.

Protecting a welfare baseline does not ensure
that efficient decisions are attainable.
We exhibit a binary-action game in which neither
simultaneous nor sender-first commitments improve
even the best equilibrium beyond original babbling
welfare, despite the availability of strictly
better state-dependent outcomes.

Receiver-first commitments yield a different
conclusion. For every finite action set, they
admit a welfare-maximizing equilibrium, computable
using $O(n|A|)$ arithmetic operations, where $n$
is the number of states. Yet their worst equilibria
can have lower welfare than those of the original
game. Thus, sender-first commitments protect
against unfavorable equilibrium selection without
always permitting efficiency, whereas receiver-first
commitments always permit efficiency without
necessarily protecting the original welfare baseline.

For binary actions, we further characterize
the entire receiver-first equilibrium payoff
region. The characterization identifies when
efficient welfare can be divided flexibly between
the players and when its division is uniquely
determined. It also gives exact worst-equilibrium
welfare and linear-time algorithms for implementing
feasible target payoffs and maximizing either
player's utility.

\paragraph{Computation and private randomization.}

Designing an optimal unilateral commitment reveals
a computational asymmetry. With binary actions,
an optimal receiver commitment and a supporting
equilibrium can be computed in linear time.
Optimizing deterministic sender commitments is
NP-hard, even with only two actions. The difficulty
arises because payments can change the sender's
preferred recommendation, creating discrete
changes in the receiver's incentives to follow
the advice.

Private randomization removes this computational
obstruction. The sender publicly commits to a
lottery over payment schemes and privately observes
its realization before communicating. For binary
actions, we give an exact $O(n\log n)$ algorithm.
An optimal lottery requires at most two schemes
that differ at only one state. This structure
also yields an additive approximation by
deterministic commitments.

For arbitrary finite action sets, private
randomization connects payment commitments
to Bayesian persuasion. Although our sender
cannot directly commit to disclosure, privately
realized contracts can provide incentives to
follow a chosen disclosure policy. We show that
optimal sender payoffs coincide with ordinary
persuasion when only burning is allowed, and
with persuasion with nonnegative compliance
payments when transfers are also available
\cite{dughmi2019persuasion}. The equivalence gives
polynomial-time algorithms and polynomial-size
commitment lotteries for arbitrary finite action sets.

Guarantees and tips can therefore do more than
compensate a player after an unfavorable decision:
they can shape the credibility of advice and
the equilibrium use of information. Their benefits
depend, however, on how commitment opportunities
are assigned. Our results distinguish the protocols
that protect welfare from those that make efficiency
attainable, and identify when optimal commitments
can be computed.

\subsection{Related Work}
\label{sec:related-work}

\paragraph{Endogenous payments and implementation.}
Our closest classical antecedent is Jackson and
Wilkie~\cite{jackson2005endogenous}, who study games
in which players make binding offers of
strategy-contingent transfers before choosing
their actions. They show that endogenous payments
need not produce efficiency, even with complete
information and costless contracting.
Subsequent work explores how the contracting
institution changes this conclusion.
Geffner, Oesterheld, and
Conitzer~\cite{geffner2025maximizing} introduce a
staged-commitment protocol that, under their
assumptions, implements welfare-maximizing payoffs
that strictly Pareto-improve a given equilibrium.
Liu, Farina, and Ozdaglar~\cite{liu2026internal}
study self-enforcing internal transfers rewarding
adherence to prescribed strategies, together
with a mediated extension using binding offers.
We examine endogenous payments in strategic
communication, where the informed sender and
the decision-making receiver have different roles.
This structure allows us to distinguish unilateral
guarantees, simultaneous commitments, and the two
sequential commitment orders.

Other payment mechanisms assign incentives to
an external designer. Monderer and
Tennenholtz~\cite{monderer2004implementation}
study $k$-implementation through nonnegative
subsidies, while Zhang et al.~\cite{zhang2024steering}
use external payments to steer no-regret learners.
Our payments are instead chosen and financed
by the players, so the commitments themselves
must be optimal. Money burning has also been
studied as part of the sender's signaling
behavior~\cite{austensmith2000cheap,kartik2007note}
and within mediated communication
\cite{liu2024burning}. In our model, burning is
committed to before the state is drawn and
depends on the subsequently verified state
and action.

\paragraph{Persuasion, advice, and computation.}
Our communication stage follows the cheap-talk
tradition of Crawford and Sobel
\cite{crawford1982strategic}, while Bayesian
persuasion allows the sender to commit directly
to disclosure~\cite{kamenica2011persuasion}.
Dughmi, Niazadeh, Psomas, and
Weinberg~\cite{dughmi2019persuasion} augment
persuasion with payments for following
recommendations. Their framework includes
nonnegative compliance payments and characterizes
the compensation required to support a given
recommendation rule.
Proposition~\ref{prop:persuasion-equivalence}
identifies their nonnegative-payment optimization
problem with our sender's optimistic problem
under private randomization.
Our contribution is the implementation through
outcome-contingent contracts and strategic
communication, including commitment-stage
incentives. For binary actions, we additionally
give an explicit $O(n\log n)$ algorithm and an
optimal lottery over at most two schemes differing
at only one state. Their two-action allocation
characterization with payments of either sign
and their nonnegative-payment results under
symmetric-action assumptions provide related
benchmarks.

Receiver commitments are closely related to
Krishna and Morgan~\cite{krishna2008contracting},
who study an uninformed principal that commits
to paying an informed adviser while retaining
decision-making authority. We allow either or
both parties to commit to verified-state-and-action
contingent payments and compare commitment orders.
On the computational side, Babichenko et al.
\cite{babichenko2024algorithmic} show that
sender-optimal cheap-talk equilibria can be
computed efficiently with binary receiver actions.
Our hardness result therefore comes from designing
the payment scheme, rather than optimizing
equilibrium selection in the original game.
The tractability gained through randomization
also parallels results for menus of contracts
by Castiglioni, Marchesi, and
Gatti~\cite{castiglioni2023menus}, although our
private contract realizations influence
communication rather than directly specifying
an agent's action incentives alone.

\paragraph{Commitments as mediating devices.}
More broadly, we study payment enforcement as a
mediating device: an institution that changes
the strategic capabilities available to players
and hence the outcomes they can sustain.
Other devices intervene through communication,
commitment, observability, or the available actions.
Communication mediators collect reports and issue
recommendations; work on implementing these devices
through cheap talk identifies when players can
reproduce their behavior without a trusted
intermediary
\cite{abraham2019implementing, abraham2008lower, geffner2024communication, geffner2023lower, gerardi2004unmediated}.
Comparisons of mediated communication, cheap talk,
and persuasion further distinguish the value
of recommendations from commitment to disclosure
\cite{arieli2023mediated,arieli2023long,arieli2023resilient,corrao2023bounds}.
Program equilibrium allows commitment to programs
that inspect one another's code
\cite{tennenholtz2004program,Oesterheld18:Robust}, while simulation
provides information about another player's
behavior~\cite{kovarik2023simulation, kovarik2026recursive}. Both allow players, for instance, to avoid the inefficiencies of the Prisoner's Dilemma.
Expanding action sets can instead strengthen
a given Nash equilibrium against self-enforcing
coalitional deviations while preserving the
original action-profile payoffs
\cite{geffner2024making}.
Our device enforces payments chosen by the players.
We ask how its benefits depend on who receives
access to this capability, the order in which
it is exercised, and equilibrium selection.

\section{Model}\label{sec:model}

A \emph{sender--receiver game} is a tuple
$\Gamma=(\Theta,A,p,u_S,u_R)$, where $\Theta$ is a finite set of
states, $A$ is a finite set of actions, $p\in\Delta(\Theta)$ is a
full-support prior, and $u_S,u_R:\Theta\times A\to\mathbb{R}$ are
the utility functions of the sender $S$ and receiver $R$.
We write $n:=|\Theta|$, $P:=\{S,R\}$, and $-i$ for the player
other than $i\in P$. An \emph{outcome} is a pair $(\theta,a)$.

Nature draws $\theta\sim p$ and reveals it only to the sender.
The sender sends a message $m$ from a finite message set $M$;
the receiver observes $m$ and chooses $a\in A$. Player $i$ then
receives $u_i(\theta,a)$ utility. Messages are costless and do not enter
either player's utility function. We fix a message set with
$|M|\geq\max\{n+1,|A|\}$, allowing distinct messages for states
or action recommendations. This is an assumption on the
communication technology, rather than a claim that arbitrary
equilibria can be compressed to a specified number of messages.

A behavioral strategy profile is
$\vec{\sigma}=(\sigma_S,\sigma_R)$, with
$\sigma_S:\Theta\to\Delta(M)$ and
$\sigma_R:M\to\Delta(A)$. Its induced outcome distribution is
\[
x_{\vec{\sigma}}(\theta,a)
:=p(\theta)\sum_{m\in M}
  \sigma_S(m\mid\theta)\sigma_R(a\mid m).
\]
We write
\[
u_i(\vec{\sigma})
:=\sum_{\theta,a}x_{\vec{\sigma}}(\theta,a)u_i(\theta,a),
\qquad
w(\theta,a):=u_S(\theta,a)+u_R(\theta,a),
\]
and $w(\vec{\sigma}):=u_S(\vec{\sigma})+u_R(\vec{\sigma})$.
An outcome is \emph{on path} if
$x_{\vec{\sigma}}(\theta,a)>0$ and \emph{off path} otherwise;
we denote the set of on-path outcomes by
$\operatorname{supp}(\vec{\sigma})$.

We call a profile \emph{babbling} if the sender's message
distribution is independent of the state. For each $a\in A$, let
\[
S_a(\Gamma):=\sum_\theta p(\theta)u_S(\theta,a),
\qquad
R_a(\Gamma):=\sum_\theta p(\theta)u_R(\theta,a),
\qquad
W_a(\Gamma):=S_a(\Gamma)+R_a(\Gamma).
\]
We sometimes omit the argument $\Gamma$ and write
$u_S(a)=S_a$, $u_R(a)=R_a$, or $W(a)=W_a$.
The \emph{babbling payoffs} are
\[
b_R(\Gamma):=\max_{a\in A}R_a(\Gamma),
\qquad
b_S(\Gamma):=
\max_{a\in\arg\max_{a'\in A}R_{a'}(\Gamma)}S_a(\Gamma).
\]
Thus the receiver chooses a prior-optimal action, breaking
payoff ties in the sender's favor. We call a selected action
attaining these two values the \emph{babbling default action}.

\subsection{Payment schemes}

A payment scheme for player $i$ specifies nonnegative payments
to the other player and to an external sink $\bot$ at each
outcome:
\[
\tau_i:\Theta\times A\longrightarrow
\mathbb{R}_{\geq0}^{\{-i,\bot\}}.
\]
Here $\tau_i^{-i}(\theta,a)$ is transferred to the other player,
and $\tau_i^\bot(\theta,a)$ is \emph{burned}. Write
$\|\tau_i(\theta,a)\|:=
\tau_i^{-i}(\theta,a)+\tau_i^\bot(\theta,a)$,
let $T_i$ denote the set of such schemes, and let $\mathbf{0}$
denote the zero scheme. Each promised payment is finite, but
we impose no common upper bound or budget constraint.

For $\tau=(\tau_S,\tau_R)$, the modified payoffs are
\[
u_i^\tau(\theta,a)
:=u_i(\theta,a)-\|\tau_i(\theta,a)\|
  +\tau_{-i}^{\,i}(\theta,a).
\]
We denote the resulting communication game by $\Gamma^\tau$.
All payoff benchmarks apply to this game using its modified
utilities; for example, $R_a(\Gamma^\tau)$ is the receiver's
expected net payoff from always choosing $a$.
We write $u_i^\tau(\vec{\sigma})$ and
$w^\tau(\vec{\sigma})$ for expected net utilities and welfare.
A \emph{transfer scheme} has $\tau_i^\bot\equiv0$; a
\emph{burning scheme} has $\tau_i^{-i}\equiv0$.

\begin{remark}\label{rem:welfare}
Transfers cancel when the players' utilities are added,
whereas burning reduces welfare:
\[
w^\tau(\vec{\sigma})
=w(\vec{\sigma})-
\sum_{\theta,a}x_{\vec{\sigma}}(\theta,a)
 \bigl(\tau_S^\bot(\theta,a)+\tau_R^\bot(\theta,a)\bigr).
\]
Throughout, welfare in a commitment extension means net
welfare, after any realized burning.
\end{remark}

\begin{remark}\label{rem:offpath}
There is no cost of announcing a payment scheme. Payments
are executed only at the realized outcome. A promise at an
off-path outcome therefore has no realized cost under the
specified profile, although it can affect incentives to
deviate. Whether a promise is off path depends on subsequent
play.
\end{remark}

\subsection{Commitment protocols}
\label{sec:extensions}

We consider five protocols, indexed by
\[
\mathcal{C}:=\{S,R,SR,S\to R,R\to S\}.
\]
The corresponding extension of $\Gamma$ is denoted by
$\Gamma^c$.
\begin{itemize}
\item In $\Gamma^S$ or $\Gamma^R$, only the indicated player
chooses a payment scheme; the other player's scheme is
$\mathbf{0}$.
\item In $\Gamma^{SR}$, both players choose their schemes
simultaneously, after which both schemes are made public.
\item In $\Gamma^{S\to R}$, the sender publicly commits first,
and the receiver observes this scheme before choosing and
publicly announcing its own scheme.
\item In $\Gamma^{R\to S}$, the receiver publicly commits first,
and the sender observes this scheme before choosing and
publicly announcing its own scheme.
\end{itemize}
We refer to the last two protocols as \emph{Stackelberg
commitments}, with the first player the leader and the second
the follower. The order concerns payment commitments; the
sender still sends the message and the receiver still chooses
the final action in both protocols.

All commitment choices are made before the state is drawn.
Once the commitment stage is complete, nature draws
$\theta\sim p$ and reveals it to the sender. The players then
play the communication game $\Gamma^\tau$. After the receiver
chooses an action, the state is verified for enforcement and
all promised payments at the realized outcome are executed.
Commitments are binding and cannot be revoked. In particular,
in a Stackelberg protocol the leader's scheme remains in
force after the follower's choice.

The timing prevents the choice of a scheme from signaling
private information about the state. Ex post verification is
a separate assumption: it makes state-contingent payments
enforceable without revealing the state to the receiver before
the action is chosen.
We also write $\Gamma^c_{\mathrm{tr}}$ and $\Gamma^c_\bot$ for
variants restricting all committing players to transfers or
burning, respectively.

\subsection{Privately randomized commitments}
\label{sec:random-commitments}

We additionally allow a player to commit publicly to a
finitely supported lottery over payment schemes. Write
$\Delta_f(T_i)$ for these lotteries and
$\Gamma^c_\Delta$ for the corresponding extension. The protocol
for choosing lotteries is the same as the protocol for choosing
schemes: choices are simultaneous under $SR$, and under
$i\to j$ the follower observes the leader's chosen lottery
before choosing its own. A noncommitting player uses the
point mass at $\mathbf{0}$.

After all lotteries have been chosen and announced, nature
draws each scheme $\tau_i\sim\rho_i$, independently across
players and independently of $\theta\sim p$. Each player
privately observes its own realized scheme, and the sender
also observes $\theta$. The sender then communicates and the
receiver chooses an action. The realized schemes and state
are verified for enforcement afterward.

Thus a follower observes the leader's \emph{distribution}
over schemes, not its realization. During communication,
strategies may depend on a player's private realization, and
beliefs must account for the other player's realization.
For a fixed public lottery profile $\rho$, denote this finite
continuation game by $\Gamma^\rho$.

This randomization is part of the enforceable contract, not
unobserved mixing over public commitment announcements.
Different realizations can induce different optimal
communication choices, so a lottery can randomize induced
behavior without requiring the sender to be indifferent
between messages under a single deterministic scheme.
We do not require these optimal choices to be strict.

\subsection{Equilibrium}
\label{sec:solution-concept}

Our solution concept requires optimality throughout the
extended game: at the commitment stage, after every observed
commitment, and during communication. We restrict attention
to pure strategies at public commitment decisions. In a
randomized-commitment extension, a pure commitment decision
chooses a lottery; behavioral randomization in the finite
continuation game is allowed in all extensions.

For a fixed public commitment history $h$, let $G(h)$ be the
continuation game: $\Gamma^\tau$ for deterministic schemes or
$\Gamma^\rho$ for lotteries. An assessment in $G(h)$ specifies
behavioral strategies and beliefs at information sets.
It is a \emph{sequential equilibrium} if strategies are
sequentially rational at every information set and beliefs
are limits of Bayes posteriors along completely mixed
strategy profiles converging to the specified strategies.
We write $\operatorname{SE}(G(h))$ for these assessments.

At communication and action decisions, we use favorable
tie-breaking: among choices maximizing its own expected net
payoff, a player chooses those maximizing the other player's
expected net payoff, evaluated at the same information set.
Randomization is permitted among choices tied under both
criteria. For binary actions, when both players are indifferent
in a given state, we use action $0$ in defining the
sender-preferred recommendation below. At commitment decisions,
optimality refers to the committing player's own expected
payoff; no additional tie-breaking restriction is imposed.

\begin{remark}\label{rem:consistency}
In a deterministic continuation $\Gamma^\tau$, the receiver's
beliefs at messages sent with positive probability are fixed
by Bayes' rule. At unused messages, any posterior over
$\Theta$ can be obtained as a limit of full-support trembles,
with smaller-order trembles for states assigned zero limiting
probability. The receiver must nevertheless respond optimally
to the specified posterior, and the sender must not benefit
from using the unused message. We impose no general requirement
that off-path beliefs equal the prior. In $\Gamma^\rho$,
consistency also concerns privately realized schemes and is
imposed through the finite-game definition above.
\end{remark}

\begin{remark}\label{rem:menu}
A message offers the sender the action lottery specified by
the receiver's response. Sender optimality compares these
lotteries across all available messages, including unused
ones. Receiver optimality is required at every message,
including those off path. In particular, splitting a mixed
receiver response into separate messages recommending its
individual actions need not preserve equilibrium.
\end{remark}

To define equilibrium at the commitment stage, select a
continuation assessment $e(h)\in\operatorname{SE}(G(h))$ after
\emph{every} possible completed public commitment history.
Let $F_i(h)$ be player $i$'s expected net payoff in this
assessment. For deterministic schemes write
$F_i(\tau_S,\tau_R)$, and for lotteries write
$F_i(\rho_S,\rho_R)$.

\begin{definition}\label{def:commitment-equilibrium}
A \emph{sequential equilibrium of the commitment extension}
consists of such a family of continuation assessments and
commitment strategies satisfying the following conditions.
\begin{enumerate}
\item For a unilateral protocol, the committing player's
chosen scheme maximizes its payoff $F_i$, with the other
scheme fixed at $\mathbf{0}$.
\item For simultaneous commitments, the chosen pair is a
Nash equilibrium of the induced commitment game:
\[
F_i(\tau_i^*,\tau_{-i}^*)
\geq F_i(\widehat\tau_i,\tau_{-i}^*)
\quad\text{for every }i\in P
\text{ and }\widehat\tau_i\in T_i,
\]
where arguments are placed in the appropriate player
coordinates.
\item For a Stackelberg protocol $i\to j$, the follower's
strategy specifies a response $r_j(\tau_i)\in T_j$ after
\emph{every} leader scheme. It satisfies
\[
r_j(\tau_i)\in
\arg\max_{\tau_j\in T_j}F_j(\tau_i,\tau_j)
\quad\text{for every }\tau_i\in T_i.
\]
The leader's chosen scheme satisfies
\[
\tau_i^*\in
\arg\max_{\tau_i\in T_i}
F_i\bigl(\tau_i,r_j(\tau_i)\bigr).
\]
Here $F_i$ and $F_j$ are written with the leader's coordinate
first for convenience.
\end{enumerate}
For $\Gamma^c_\Delta$, replace payment schemes by lotteries
in these conditions and use the assessments of $\Gamma^\rho$.
\end{definition}

This definition imposes sequential equilibrium in each finite
continuation and subgame-perfect optimality at the public
commitment stage. It specifies our use of sequential
equilibrium for an extension with a continuum of available
commitments, without invoking a finite-game tremble definition
at that stage. In particular, a deviation by a Stackelberg
leader is evaluated using the follower's specified optimal
response to that deviation. We do not assume that a follower
breaks commitment ties in the leader's favor.
We write $\operatorname{SE}(\Gamma^c)$ for the equilibria so
defined, and analogously for the randomized extensions.

\subsection{Optimal schemes and equilibrium values}

For a deterministic profile $\tau$, define the continuation
values
\[
V_i^+(\tau):=
\sup_{e\in\operatorname{SE}(\Gamma^\tau)}u_i^\tau(e),
\qquad
V_i^-(\tau):=
\inf_{e\in\operatorname{SE}(\Gamma^\tau)}u_i^\tau(e).
\]
These evaluate a fixed scheme profile under optimistic and
pessimistic continuation selection. For unilateral commitment,
let
\[
\OPT_S^\pm(\Gamma):=
\sup_{\tau_S\in T_S}V_S^\pm(\tau_S,\mathbf{0}),
\qquad
\OPT_R^\pm(\Gamma):=
\sup_{\tau_R\in T_R}V_R^\pm(\mathbf{0},\tau_R).
\]
Thus $\OPT_i^-$ is the best payoff that player $i$ can guarantee
across continuation equilibria by choosing its scheme, while
$\OPT_i^+$ is its best payoff under favorable continuation
selection. Define $\OPT_{i,\Delta}^\pm$ analogously by optimizing
over $\Delta_f(T_i)$ and using the continuation equilibria of
$\Gamma^\rho$.

For $c\in\mathcal C$ and an equilibrium $E$ of the entire
extension, write $u_i(E)$ for expected net utility and
$w(E):=u_S(E)+u_R(E)$. Whenever the equilibrium set is nonempty,
define, for $f\in\{u_S,u_R,w\}$,
\[
\mathrm{EQ}_{c,f}^-(\Gamma):=
\inf_{E\in\operatorname{SE}(\Gamma^c)}f(E),
\qquad
\mathrm{EQ}_{c,f}^+(\Gamma):=
\sup_{E\in\operatorname{SE}(\Gamma^c)}f(E).
\]
The notation $\mathrm{EQ}_{c,\Delta,f}^\pm$ refers to
$\Gamma^c_\Delta$. The extrema range over full equilibrium
strategies, including continuation selection and, for a
Stackelberg protocol, the follower's responses after all
leader commitments. We use suprema and infima without
assuming attainment unless explicitly stated.

\begin{lemma}\label{lem:opt-eq}
Fix $i\in P$. Suppose that both continuation extrema are
attained after every unilateral scheme and that both outer
optima defining $\OPT_i^\pm(\Gamma)$ are attained. Then
\[
\OPT_i^\pm(\Gamma)=\mathrm{EQ}_{i,u_i}^\pm(\Gamma).
\]
The same conclusion holds for privately randomized commitments
under the analogous attainment assumptions.
\end{lemma}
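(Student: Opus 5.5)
We prove the two equalities separately; each reduces to checking two inequalities by building, or reading off, a full equilibrium of the unilateral extension $\Gamma^i$. Recall that an equilibrium $E\in\operatorname{SE}(\Gamma^i)$ consists of a continuation assessment $e(\tau_i)\in\operatorname{SE}(\Gamma^{(\tau_i,\mathbf 0)})$ after every scheme $\tau_i$, together with a chosen scheme $\tau_i^*$ maximizing $F_i(\tau_i)=u_i^{\tau}(e(\tau_i))$. Its payoff is $u_i(E)=F_i(\tau_i^*)$. For brevity write $V_i^\pm(\tau_i)$ for $V_i^\pm(\tau_i,\mathbf 0)$, with the scheme in the appropriate coordinate.

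\emph{Optimistic case.} For any $E$, we have $u_i(E)=u_i^\tau(e(\tau_i^*))\le V_i^+(\tau_i^*)\le \OPT_i^+$, so $\mathrm{EQ}^+_{i,u_i}\le\OPT_i^+$. For the reverse inequality, let $\hat\tau_i$ attain $\OPT_i^+$. After $\hat\tau_i$, select a continuation attaining $V_i^+(\hat\tau_i)$. After every other scheme $\tau_i$, select a continuation attaining the minimum $V_i^-(\tau_i)$. Then $F_i(\tau_i)=V_i^-(\tau_i)\le V_i^+(\tau_i)\le\OPT_i^+=F_i(\hat\tau_i)$, so $\hat\tau_i$ is optimal. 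This gives an equilibrium with payoff $\OPT_i^+$. (Selecting $V_i^+$ everywhere would also work.) The supremum is therefore attained and the two values are equal.

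\emph{Pessimistic case.} Selecting $V_i^-$-attaining continuations after every scheme yields $F_i(\tau_i)=V_i^-(\tau_i)$. Choosing $\tau_i^-$ attaining $\OPT_i^-$ then gives an equilibrium with payoff $\OPT_i^-$, so $\mathrm{EQ}^-_{i,u_i}\le \OPT_i^-$. Conversely, take any $E$. Optimality of $\tau_i^*$ at the commitment stage gives
\[
u_i(E)=F_i(\tau_i^*)\ge F_i(\tau_i^-)=u_i^{(\tau_i^-,\mathbf 0)}(e(\tau_i^-))\ge V_i^-(\tau_i^-)=\OPT_i^-.
\]
Hence $\mathrm{EQ}^-_{i,u_i}\ge\OPT_i^-$. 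Note that this direction uses the deviation to $\tau_i^-$ rather than the chosen scheme: the chosen scheme's own pessimistic value could be smaller.

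\emph{Main obstacle.} The only delicate point is that every constructed family must be a legitimate equilibrium. This requires a sequential equilibrium after \emph{every} scheme, which is exactly what the attainment hypotheses supply (attainment of $V_i^\pm$ presupposes nonempty $\operatorname{SE}$ sets). We must also ensure that the commitment-stage maximum exists; this holds because each construction makes $F_i$ bounded above by a value attained at the chosen scheme. The randomized version follows verbatim, replacing $T_i$ by $\Delta_f(T_i)$ and $\Gamma^\tau$ by $\Gamma^\rho$.
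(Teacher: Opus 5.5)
Your proof is correct and follows essentially the same route as the paper. Every equilibrium payoff is bounded below by $\OPT_i^-$, via the deviation to a scheme attaining it, and above by $\OPT_i^+$, and each bound is attained by selecting $i$-worst (respectively $i$-best) continuations. The only difference is cosmetic: in the optimistic construction you select worst continuations away from the chosen scheme, whereas the paper selects best continuations everywhere, and both choices work.
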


The distinction between these definitions matters:
$\OPT_i^\pm$ concerns unilateral scheme optimization, whereas
$\mathrm{EQ}_{c,f}^\pm$ evaluates equilibria in which the
commitments themselves must be optimal. We use the latter
notation for welfare and for all bilateral protocols.

\subsection{Binary actions and welfare benchmarks}

Many of our results concern binary actions, $A=\{0,1\}$.
Define
\[
\Delta_i(\theta):=
|u_i(\theta,0)-u_i(\theta,1)|.
\]
Let $a_S(\theta)$ maximize the sender's utility, breaking ties
in the receiver's favor and selecting $0$ if both players are
indifferent. Define $a_R(\theta)$ symmetrically. A state is
\emph{agreeing} if $a_S(\theta)=a_R(\theta)$ and
\emph{disagreeing} otherwise.

The \emph{sender-preferred recommendation profile} recommends
$a_S(\theta)$ and has the receiver follow this recommendation.
Its expected payoffs, whether or not it is an equilibrium, are
\[
d_i(\Gamma):=
\sum_\theta p(\theta)u_i(\theta,a_S(\theta)),
\qquad i\in P.
\]
If this profile can be completed to a sequential equilibrium,
we call it a \emph{sender-preferred obedient equilibrium}, or
simply an \emph{obedient equilibrium}. Recommending an action
need not reveal the state: different states may induce the
same recommendation. All these definitions also apply to
$\Gamma^\tau$ using its modified utilities.

\begin{proposition}\label{prop:binary-pareto}
Under favorable tie-breaking, a binary-action game satisfies:
\begin{itemize}
\item If $b_R(\Gamma)>d_R(\Gamma)$, a babbling equilibrium
weakly Pareto-dominates every sequential equilibrium.
\item If $b_R(\Gamma)\leq d_R(\Gamma)$, the sender-preferred
recommendation profile can be completed to a sequential
equilibrium that weakly Pareto-dominates every sequential
equilibrium.
\end{itemize}
\end{proposition}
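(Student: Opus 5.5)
The plan is to bound the payoffs of an arbitrary sequential equilibrium $e$ of $\Gamma$ from above, and then to exhibit an equilibrium attaining those bounds in each case. Two bounds are immediate. First, in every state $\theta$ the sender's realized payoff is at most $u_S(\theta,a_S(\theta))$, so $u_S(e)\le d_S(\Gamma)$. Second, the receiver best-responds to its posterior at every on-path message, so by convexity of the maximum $u_R(e)\ge b_R(\Gamma)$.

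\textbf{Structure of an equilibrium.} Let $q(m)$ be the probability that the receiver chooses action $1$ after message $m$, and let $q_{\max}$ and $q_{\min}$ be its extrema over all messages. Sender optimality (Remark~\ref{rem:menu}) forces the following:
\begin{itemize}
\item every state with a strict sender preference for $1$ uses only messages with $q=q_{\max}$;
\item every state with a strict sender preference for $0$ uses only messages with $q=q_{\min}$;
\item in sender-indifferent states, favorable tie-breaking makes the sender pick messages that are best for the receiver.
\end{itemize}
On the strictly-preferring states, the induced state-action distribution is therefore the mixture with weights $q_{\min}$, $1-q_{\max}$ and $q_{\max}-q_{\min}$ of three profiles: ``always $1$'', ``always $0$'' and the sender-preferred recommendation profile.

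This gives
\[
u_R(e)\le q_{\min}R_1+(1-q_{\max})R_0+(q_{\max}-q_{\min})\,d_R
\le \max\{b_R(\Gamma),d_R(\Gamma)\}.
\]
The main obstacle is the treatment of sender-indifferent states inside the constant-action components. In those states the equilibrium may induce the receiver-preferred action, $a_S(\theta)$, rather than the constant action. I would handle this by splitting $u_R$ into a strict-state part and an indifferent-state part. The indifferent part is pointwise at most $\sum p(\theta)u_R(\theta,a_S(\theta))$. I would then check that the comparison with $b_R$ still goes through, using the receiver's optimality at the $q_{\max}$ and $q_{\min}$ messages; if needed, I would first show that those messages are chosen by the indifferent states exactly when receiver-optimal.

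\textbf{Case $b_R>d_R$.} Combining $u_R(e)\ge b_R$ with the displayed bound and $b_R>d_R$ forces the weight $q_{\max}-q_{\min}$ on the obedient component to be zero. Hence the receiver's action lottery is the same after every message, so the outcome is a constant lottery. That lottery gives the receiver at most $b_R$, and it mixes only between actions in $\arg\max_a R_a$. Favorable tie-breaking then caps the sender's payoff at $b_S$. A babbling equilibrium with the default action attains $(b_S,b_R)$, and it exists: use a single message, play the default action at every message, and set off-path beliefs equal to the prior. It therefore weakly Pareto-dominates $e$.

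\textbf{Case $b_R\le d_R$.} I would construct the obedient equilibrium directly. Use two messages ``$0$'' and ``$1$'', with the sender sending ``$a_S(\theta)$'' in state $\theta$ and the receiver following.
\begin{itemize}
\item \emph{Receiver obedience at ``$1$''.} With only two messages, deviating after ``$1$'' alone yields the constant-$0$ payoff $R_0\le b_R\le d_R$, so it is unprofitable; the argument at ``$0$'' is symmetric.
\item \emph{Tie-breaking.} Where the receiver is indifferent, favorable tie-breaking selects the recommended action, since that action is the sender's preferred one.
\item \emph{Unused messages.} Assign any posterior together with a receiver best response (Remark~\ref{rem:consistency}). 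The sender already obtains its preferred action in every state, so it cannot gain by deviating.
\end{itemize}
This equilibrium yields $(d_S,d_R)$. By the bounds $u_S(e)\le d_S$ and $u_R(e)\le\max\{b_R,d_R\}=d_R$, it weakly Pareto-dominates every sequential equilibrium.
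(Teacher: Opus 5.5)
The route works and differs from the paper's, but the plan you sketch for the step you call the main obstacle would fail. Suppose you split off the sender-indifferent states and bound their contribution by $\sum_{\theta}p(\theta)u_R(\theta,a_S(\theta))$. Then the two constant-action components no longer carry $R_0$ and $R_1$, but larger quantities that can exceed $b_R(\Gamma)$. In the first case, $b_R(\Gamma)\leq u_R(e)$ no longer forces $q_{\max}=q_{\min}$. In the second case, the bound $u_R(e)\leq d_R(\Gamma)$ is lost.

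No splitting is needed, and the fix is already in your third bullet. Take a sender-indifferent state in which the receiver strictly prefers $a$. There $a_S(\theta)=a$, because $a_S$ breaks sender ties in the receiver's favor. By favorable tie-breaking, that state sends only messages at the extreme favoring $a$, exactly as a state in which the sender strictly prefers $a$ does. A state in which both players are indifferent contributes the same payoff to all three profiles. So your decomposition is valid in every state, and the first inequality of your display holds with equality:
\[
u_R(e)=q_{\min}R_1+(1-q_{\max})R_0+(q_{\max}-q_{\min})\,d_R(\Gamma).
\]
With this, both of your cases go through as written. One small addition in the second case: the sender's secondary tie-breaking criterion at unused messages must also hold. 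It does, because in sender-indifferent states the recommended action $a_S(\theta)$ is already the receiver's preferred one.

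How the paper proceeds instead: for an equilibrium with nonconstant receiver responses, it aggregates the receiver's best-response inequalities over the extreme messages. This shows that the sender-preferred recommendation profile is itself obedient. It then purifies the mixed responses at those messages, where the receiver was indifferent, to conclude $u_R(e)=d_R(\Gamma)$. Constant-response equilibria are dominated by babbling, and $d_R<b_R$ rules out the nonconstant case.

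Your convex-combination identity replaces both the aggregation and the purification step. You never have to extract obedience from the given equilibrium, because in the second case you verify it directly from $d_R\geq b_R$. The paper's route, in exchange, gives a sharper structural fact: every equilibrium with nonconstant responses has receiver payoff exactly $d_R(\Gamma)$ and can be purified to a deterministic recommendation rule. The paper reuses this purification argument in the receiver-first payoff characterization.
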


Indeed, the receiver's two obedience inequalities for the
recommendation profile are precisely
$d_R(\Gamma)\geq R_0(\Gamma)$ and
$d_R(\Gamma)\geq R_1(\Gamma)$. This characterization concerns
the payoffs attainable in equilibrium; it does not assert
that babbling and obedient profiles are the only equilibria.
It applies to deterministic continuation games. A privately
randomized commitment introduces additional private information
and must be analyzed in its enlarged continuation game.

Finally, define
\[
W_{\mathrm{const}}(\Gamma):=\max_{a\in A}W_a(\Gamma),
\qquad
W^*(\Gamma):=\sum_\theta p(\theta)\max_{a\in A}w(\theta,a).
\]
The first is the best welfare from a constant action, and the
second is the maximum welfare attainable before burning.
We call a commitment equilibrium \emph{efficient} if its net
welfare equals $W^*(\Gamma)$.
An action rule $f:\Theta\to A$ is \emph{efficient} if
$w(\theta,f(\theta))=\max_{a\in A}w(\theta,a)$ in every
state, and \emph{nonconstant} if it assigns different actions to at least two states.
For binary actions, we say that $a$ \emph{weakly socially
dominates} $1-a$ if
$w(\theta,a)\geq w(\theta,1-a)$ in every state, and that it
\emph{strictly socially dominates} $1-a$ if all these
inequalities are strict. An action can weakly socially dominate
the other while nonconstant efficient action rules still exist
because of ties. Whenever this distinction matters, we state
explicitly whether a nonconstant efficient rule exists.

\subsection{Computational conventions}

For computational results, prior probabilities and payoffs
are rational numbers given explicitly in binary.
The stated running-time bounds count arithmetic operations
and comparisons. Our algorithms also have polynomial
bit complexity, as justified in their respective proofs.

An algorithm that computes an equilibrium outputs a finite
description of the equilibrium-path commitments and
continuation strategies, together with rules specifying
play after every off-path commitment history. In a
Stackelberg protocol, these rules also specify the
follower's commitment after every leader commitment.
Thus, the output describes contingent behavior without
explicitly listing the infinitely many possible
commitment histories.

\section{Main Results}

We first study guarantees under unfavorable equilibrium
selection, then compare bilateral commitment protocols,
and finally consider optimistic optimization.
Unless explicitly restricted to binary actions, the
results hold for arbitrary finite action sets.
Table~\ref{tab:main-results} summarizes the conclusions.
Throughout, $n$ denotes the number of states, and
running-time bounds count arithmetic operations.
We explain the main constructions and proof ideas below;
full proofs appear in the Appendix.

\begin{table}[H]
\centering
\small
\setlength{\tabcolsep}{4pt}
\renewcommand{\arraystretch}{1.25}

\textbf{(a) Unilateral commitments}
\medskip

\begin{tabular}{@{}p{0.22\linewidth}
                  p{0.35\linewidth}
                  p{0.35\linewidth}@{}}
\hline
Committing player
& Pessimistic utility
& Optimistic optimization\\
\hline
Receiver
& $b_R(\Gamma)$; attained in $O(n|A|)$.
& $O(n)$ with binary actions\\
Sender
& $W_{\mathrm{const}}(\Gamma)-b_R(\Gamma)$;
  \par attained in $O(n|A|)$
& NP-hard with deterministic schemes, even for binary
actions; polynomial time with private randomization,
and $O(n\log n)$ for binary actions\\
\hline
\end{tabular}

\bigskip
\textbf{(b) Bilateral commitments}
\medskip

\begin{tabular}{@{}p{0.22\linewidth}
                  p{0.35\linewidth}
                  p{0.35\linewidth}@{}}
\hline
Commitment order
& Worst-equilibrium welfare
& Efficient equilibrium\\
\hline
Simultaneous
& Can fall below original babbling \par welfare
& Need not exist\\
Sender first
& $W_{\mathrm{const}}(\Gamma)$
& Need not exist\\
Receiver first
& Can fall below original babbling \par welfare
& Always exists; computable in $O(n|A|)$.\\
\hline
\end{tabular}

\caption{Summary of results for games with
$n$ states. In panel (a), utility and optimistic optimization
refer to the committing player. Sender-only commitments
also satisfy
$\mathrm{EQ}_{S,w}^{-}(\Gamma)=W_{\mathrm{const}}(\Gamma)$.
In panel (b), commitments are deterministic.
For binary actions, receiver-first commitments
additionally admit a complete payoff-region
characterization and $O(n)$ algorithms for
implementing a feasible target profile or
maximizing either player's utility.}
\label{tab:main-results}
\end{table}

\subsection{Pessimistic Guarantees}

A unilateral commitment must protect the committing
player against every continuation equilibrium.
For the receiver, the original babbling payoff is
both a lower and an upper bound on this guarantee.

\begin{proposition}\label{prop:rec-pessimistic}
For every sender-receiver game $\Gamma$,
\[
    \OPT_R^{-}(\Gamma)=b_R(\Gamma).
\]
The zero-payment scheme attains this guarantee.
An equilibrium of $\Gamma^R$ attaining this payoff
is obtained by choosing this scheme and prescribing
a babbling continuation after every receiver commitment.
The equilibrium-path action can be computed in
$O(n|A|)$ arithmetic operations.
\end{proposition}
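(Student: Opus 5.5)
We prove the two inequalities $\OPT_R^-(\Gamma)\ge b_R(\Gamma)$ and $\OPT_R^-(\Gamma)\le b_R(\Gamma)$ separately, and then assemble the equilibrium and the computation.

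\textbf{Lower bound.} Fix the zero scheme $\tau_R=\mathbf{0}$, so the continuation game is $\Gamma$ itself. Let $e$ be any sequential equilibrium of $\Gamma$. For each $a\in A$, the receiver can deviate to the strategy that plays $a$ after every message. This yields expected payoff $R_a(\Gamma)$, whatever the sender does. Sequential rationality at every information set implies ex ante optimality among the receiver's strategies. Hence $u_R(e)\ge \max_a R_a(\Gamma)=b_R(\Gamma)$, and so $V_R^-(\mathbf{0},\mathbf{0})\ge b_R(\Gamma)$.

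\textbf{Upper bound.} Fix an arbitrary receiver scheme $\tau_R$. It suffices to exhibit one sequential equilibrium of $\Gamma^{(\mathbf{0},\tau_R)}$ in which the receiver gets at most $b_R(\Gamma)$. The natural candidate is babbling:
\begin{itemize}
\item the sender sends a fixed message $m_0$ in every state;
\item all beliefs, both on and off path, equal the prior $p$;
\item after every message the receiver plays an action maximizing $R_a(\Gamma^\tau)$, with ties broken in the sender's favor.
\end{itemize}
The receiver responds identically to every message, so the sender has no profitable deviation. Prior beliefs at unused messages are consistent by Remark~\ref{rem:consistency}. Because every message yields the same action, the sender is indifferent, so the favorable tie-breaking rule is also satisfied.

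The receiver's payoff in this equilibrium is $\max_a R_a(\Gamma^\tau)$. Only the receiver pays, so
\[
R_a(\Gamma^\tau)=R_a(\Gamma)-\sum_\theta p(\theta)\|\tau_R(\theta,a)\|\le R_a(\Gamma)\le b_R(\Gamma).
\]
Thus $V_R^-(\mathbf{0},\tau_R)\le b_R(\Gamma)$ for every $\tau_R$, and taking the supremum gives $\OPT_R^-(\Gamma)\le b_R(\Gamma)$. Combined with the lower bound, this also shows that $\mathbf{0}$ attains the guarantee.

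\textbf{Equilibrium of $\Gamma^R$ and computation.} After every receiver commitment we prescribe the babbling continuation above. Under this selection the receiver's payoff from any scheme $\tau_R$ is $\max_a R_a(\Gamma^\tau)\le b_R(\Gamma)$, with equality at $\mathbf{0}$. Hence choosing $\mathbf{0}$ is optimal at the commitment stage, and the whole profile is an equilibrium of $\Gamma^R$ with receiver payoff $b_R(\Gamma)$. The equilibrium-path action is the babbling default action. Computing it requires the values $R_a$ and $S_a$ for every $a$, which takes $O(n|A|)$ arithmetic operations, followed by an $O(|A|)$ selection of the maximizer with sender-favorable tie-breaking.

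\textbf{Main obstacle.} The only delicate point is verifying that the babbling continuation is a valid sequential equilibrium under the paper's favorable tie-breaking convention for every $\tau_R$. This holds because the receiver's response to every message is the same action, chosen with sender-favorable tie-breaking, so the sender is indifferent across messages.
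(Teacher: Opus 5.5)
Your proof is correct and follows essentially the same route as the paper: the lower bound comes from ignoring the message under the zero scheme, and the upper bound from a babbling continuation after any receiver scheme, in which the receiver's own nonnegative payments can only lower $\max_a R_a$. You additionally make explicit the commitment-stage optimality of $\mathbf{0}$ under babbling continuations, the tie-breaking check, and the $O(n|A|)$ computation, all of which the paper's proof leaves implicit.
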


The receiver can always decline to make payments,
ignore the message, and choose an original prior-best
action. Conversely, every receiver commitment admits
a babbling continuation. In such a continuation,
the receiver's own transfers and burns can only
reduce its payoff from the selected constant action.
Commitments may therefore improve the receiver's
best equilibrium payoff, but cannot improve its
pessimistic guarantee. The full proof appears in
Appendix~\ref{sec:proof-rec-pessimistic}.

The sender faces a different problem. Her payments
can change which constant action the receiver prefers,
allowing her to compensate the receiver for choosing
an action with higher total welfare.

\begin{proposition}\label{prop:sen-pessimistic}
For every finite sender-receiver game $\Gamma$,
\[
    \OPT_S^{-}(\Gamma)
    =
    W_{\mathrm{const}}(\Gamma)-b_R(\Gamma).
\]
An optimal scheme can be computed in $O(n|A|)$
arithmetic operations. It transfers
\[
    t:=b_R(\Gamma)-R_c(\Gamma)
\]
on an action $c$ maximizing $W_c(\Gamma)$ and uses
only off-path burning on the other actions.
Every continuation equilibrium under this scheme
plays $c$ with probability one.
\end{proposition}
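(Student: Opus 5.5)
The argument splits into an upper bound, valid for every sender scheme, and a construction that meets it.

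\textbf{Upper bound.} Fix any $\tau_S\in T_S$. Under every scheme the continuation game $\Gamma^{(\tau_S,\mathbf 0)}$ admits a babbling equilibrium, so we exhibit one that gives the sender at most $W_{\mathrm{const}}(\Gamma)-b_R(\Gamma)$. The sender babbles, sending a fixed message $m_0$ regardless of the state. At every message the receiver holds the prior belief, including off path; Remark~\ref{rem:consistency} allows this. The receiver then chooses an action $a^*$ maximizing $R_a(\Gamma^{\tau})$, with ties broken in the sender's favor. Since every message yields the same response, the sender has no profitable deviation and the profile is a sequential equilibrium.

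The receiver's net payoff is $R_{a^*}(\Gamma^\tau)\ge R_{a_0}(\Gamma^\tau)\ge R_{a_0}(\Gamma)=b_R(\Gamma)$. Here $a_0$ is the original babbling default action, and the second inequality holds because sender payments to the receiver are nonnegative. Net welfare at constant action $a^*$ is at most $W_{a^*}(\Gamma)\le W_{\mathrm{const}}(\Gamma)$, because transfers cancel and burning only reduces welfare (Remark~\ref{rem:welfare}). Hence the sender gets at most $W_{\mathrm{const}}(\Gamma)-b_R(\Gamma)$, so $V_S^-(\tau_S,\mathbf 0)$ is bounded by this value and so is $\OPT_S^-$.

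\textbf{Lower bound.} Let $c$ maximize $W_c(\Gamma)$, and let the sender transfer the state-independent amount $t=b_R(\Gamma)-R_c(\Gamma)\ge 0$ at every outcome $(\theta,c)$. At every outcome $(\theta,a)$ with $a\ne c$, she burns a large constant $B$. This makes $a$ unattractive to her, since she would then pay $B$ whenever the receiver plays $a$. Because the receiver never pays, the receiver's utilities are altered only by the transfer on $c$.

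The key claim is that every sequential equilibrium of $\Gamma^\tau$ plays $c$ with probability one. Suppose instead some message $m$ is sent on path and draws a response placing positive weight on an action $a\ne c$.
\begin{itemize}
\item The sender could deviate to a message whose response is $c$, avoiding the burn $B$. For $B$ exceeding $\max u_S-\min u_S$, the sender strictly prefers any response that puts less weight on non-$c$ actions.
\item So it suffices to show that some message, on or off path, elicits $c$.
\end{itemize}

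\textbf{Main obstacle: pooling messages.} If all messages elicit non-$c$ actions, the argument above gives no contradiction. The sender is then indifferent only among messages whose responses put equal mass on non-$c$ actions. I would handle this by considering the aggregate over all on-path messages. The prior average of receiver payoffs under the equilibrium is at least $b_R(\Gamma)$ only if the receiver is best responding at each message, but the receiver could be gaining from information. So I expect to use the burn structure: the sender's payoff in any equilibrium with non-$c$ mass is below her payoff from every message's mixture, and the sender partitions to minimize non-$c$ mass.

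The cleaner route I would try first is to show that the sender must make the receiver's total non-$c$ probability equal across all messages she uses. She would then be unable to convey useful information except within non-$c$ actions, while the sender's state-dependent preferences among those actions are swamped by $B$. Ruling out residual information transmission that makes some non-$c$ action optimal for the receiver against $c$ is the delicate step, and it may require a state-dependent choice of $B$, or perturbing the transfer slightly above $t$ to make $c$ strictly preferred under every posterior where the receiver's gain is limited.

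Once $c$ is played with probability one, the sender's payoff is $S_c(\Gamma)-t=W_c(\Gamma)-b_R(\Gamma)$ and the burns stay off path. The computation of $W_a(\Gamma)$, $R_a(\Gamma)$ and $b_R(\Gamma)$ takes $O(n|A|)$ operations.
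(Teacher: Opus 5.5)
Your upper bound is correct and is the paper's argument. The lower bound has a genuine gap, and the scheme you propose cannot be completed. A state-independent burn $B$ on every $a\neq c$ lowers all non-$c$ actions by the same amount. It therefore does not ``swamp'' the sender's state-dependent ranking among non-$c$ actions; it leaves that ranking intact, and informative equilibria that never use $c$ survive.

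Concretely, take two equally likely states $\theta_1,\theta_2$ and actions $c,a,b$. Across $(\theta_1,\theta_2)$, the receiver's payoffs are $(0,0)$, $(1,-3)$, $(-3,1)$ and the sender's are $(5,5)$, $(1,0)$, $(0,1)$, for $c,a,b$ respectively. Then $c$ uniquely maximizes constant-action welfare, $b_R(\Gamma)=R_c(\Gamma)=0$, and $t=0$. Under your scheme, consider truthful revelation followed by $a$ in $\theta_1$ and $b$ in $\theta_2$, with every unused message assigned belief $\theta_1$ and response $a$. This is a sequential equilibrium: every receiver response is a strict best response, and no message gives any sender type more than its equilibrium payoff $1-B$. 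That payoff is below $5=W_{\mathrm{const}}(\Gamma)-b_R(\Gamma)$.

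So your reduction to showing that ``some message elicits $c$'' fails, because off-path beliefs can be chosen so that no message does. Raising the transfer slightly above $t$ does not help either, since the receiver's informational gain over $c$ here is strict. It would also lose the claimed optimality of the scheme with transfer exactly $t$.

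The missing idea is to make the burns state-dependent, so that the sender's preferences become identical across states. At each $(\theta,a)$ with $a\neq c$, burn $u_S(\theta,a)-u_S(\theta,c)+t+K_a$. Here $K_c=0$, and the other $K_a$ are distinct multiples of a constant $G$ large enough to keep all burns nonnegative. Then $u_S^\tau(\theta,a)=u_S(\theta,c)-t-K_a$. At every receiver information set, favorable tie-breaking therefore selects a unique pure response: the receiver-optimal action with the smallest $K_a$. Since every sender type ranks these pure responses identically, all on-path messages induce a single action $a^*$. Averaging receiver optimality over on-path messages shows that $a^*$ is prior-optimal.

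Now $R_c(\Gamma)+t=b_R(\Gamma)\geq R_a(\Gamma)$ for every $a$. So if $a^*\neq c$, the receiver's conditional gains from $a^*$ over $c$ are nonnegative at every on-path message and average to at most zero, hence all vanish. Favorable tie-breaking, with every sender type strictly preferring $c$ by $K_{a^*}>0$, then selects $c$, a contradiction. This pins down $c$ in every continuation equilibrium with exactly the transfer $t$. The burns are never realized, and the sender obtains $W_c(\Gamma)-b_R(\Gamma)$, which gives attainment rather than only a supremum.
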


The transfer makes $c$ optimal for the receiver
under the prior, leaving the sender the remaining
constant-action welfare. Prior optimality alone,
however, does not rule out unfavorable informative
continuations.

To exclude them, the sender commits to burns that
give her the same strict ranking of actions in
every state, with $c$ ranked first. Favorable receiver
tie-breaking then makes every receiver response
pure. Since all sender types rank these responses
identically, all on-path messages induce the same
action. Receiver optimality and favorable
tie-breaking force that action to be $c$.
The burns are therefore never realized.
For the matching upper bound, any sender scheme
admits a babbling continuation, whose total welfare
is at most $W_{\mathrm{const}}(\Gamma)$ and whose
receiver payoff is at least $b_R(\Gamma)$. The full proof appears in
Appendix~\ref{sec:proof-sen-pessimistic}.

These individual guarantees yield an exact
worst-equilibrium welfare bound.

\begin{corollary}\label{cor:sw-pessimistic}
For every finite sender-receiver game $\Gamma$,
\[
    \mathrm{EQ}_{S,w}^{-}(\Gamma)
    =
    W_{\mathrm{const}}(\Gamma).
\]
An equilibrium attaining this welfare can be computed
in $O(n|A|)$ arithmetic operations.
\end{corollary}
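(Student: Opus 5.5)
We prove the two inequalities $\mathrm{EQ}_{S,w}^{-}(\Gamma)\geq W_{\mathrm{const}}(\Gamma)$ and $\mathrm{EQ}_{S,w}^{-}(\Gamma)\leq W_{\mathrm{const}}(\Gamma)$ separately, relying on Propositions~\ref{prop:rec-pessimistic} and~\ref{prop:sen-pessimistic}.

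\emph{Lower bound.} Fix an arbitrary equilibrium $E$ of $\Gamma^S$ with on-path scheme $\tau_S^*$.
\begin{itemize}
\item Since $\tau_S^*$ is optimal at the commitment stage, $u_S(E)=F_S(\tau_S^*)\geq F_S(\tau_S)$ for every $\tau_S$. Here $F_S(\tau_S)\geq V_S^-(\tau_S,\mathbf 0)$, because the selected continuation after $\tau_S$ is some sequential equilibrium. Taking the optimal scheme of Proposition~\ref{prop:sen-pessimistic} gives $u_S(E)\geq W_{\mathrm{const}}(\Gamma)-b_R(\Gamma)$. This bound is attained, since that proposition computes the pessimistic value exactly with a specific scheme.
\item The receiver makes no payments in $\Gamma^S$. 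After any sender scheme it can ignore the message and play a constant action $a$, earning $R_a(\Gamma)$ plus nonnegative transfers. Sequential rationality at the action stage then gives $u_R(E)\geq \max_a R_a(\Gamma)=b_R(\Gamma)$. Concretely, at each message the receiver's chosen action is optimal given its posterior. Averaging over messages, its payoff is at least that of any fixed action $a$, which is $R_a(\Gamma^{\tau})\geq R_a(\Gamma)$.
\item Summing gives $w(E)\geq W_{\mathrm{const}}(\Gamma)$.
\end{itemize}

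\emph{Upper bound and attainment.} We exhibit the equilibrium from Proposition~\ref{prop:sen-pessimistic}.
\begin{itemize}
\item The sender commits to the scheme transferring $t=b_R(\Gamma)-R_c(\Gamma)$ on $c$ and burning only off path.
\item After this scheme, every continuation equilibrium plays $c$ with probability one. No burning is realized and transfers cancel, so by Remark~\ref{rem:welfare} the net welfare is exactly $W_c(\Gamma)=W_{\mathrm{const}}(\Gamma)$.
\item After every other scheme we select a continuation minimizing the sender's payoff. Only the commitment-stage optimality of the on-path scheme needs checking: it yields $V_S^-$ at its maximum $\OPT_S^-$. Every deviation gives $V_S^-(\tau_S,\mathbf 0)\leq \OPT_S^-$, provided the infimum is attained. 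If it is not attained, we instead pick the babbling continuation used in the upper-bound part of Proposition~\ref{prop:sen-pessimistic}, which gives the sender at most $W_{\mathrm{const}}-b_R$ directly.
\end{itemize}
Hence an equilibrium with welfare $W_{\mathrm{const}}(\Gamma)$ exists, and the infimum equals $W_{\mathrm{const}}(\Gamma)$. Computing $c$, $t$ and the burns takes $O(n|A|)$ operations, as in that proposition.

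\emph{Main obstacle.} The delicate point is the off-path specification. We must show that after every deviating scheme there is a sequential equilibrium, namely a babbling one under favorable tie-breaking, giving the sender at most $W_{\mathrm{const}}(\Gamma)-b_R(\Gamma)$. In the babbling continuation the receiver picks the prior-optimal action under the modified payoffs, so its payoff is at least $b_R(\Gamma)$. Welfare is at most $W_{\mathrm{const}}(\Gamma)$ because burning only lowers it. The sender's payoff is therefore at most the difference. I would cite this step from the proof of Proposition~\ref{prop:sen-pessimistic} rather than redo it.
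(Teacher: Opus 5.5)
Your proof is correct and follows the paper's argument. The lower bound adds the sender's guarantee $W_{\mathrm{const}}(\Gamma)-b_R(\Gamma)$, secured by deviating to the scheme of Proposition~\ref{prop:sen-pessimistic}, to the receiver's guarantee $b_R(\Gamma)$; equality is attained by that scheme with babbling continuations after every other commitment, which is the babbling selection you fall back on (the detour through sender-minimizing continuations is unnecessary).
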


In every equilibrium of the sender-only extension,
the sender can secure her commitment guarantee,
while the receiver can still secure its original
babbling payoff. Adding these guarantees gives
the lower bound. Equality is supported by the
scheme above and babbling continuations after
other commitments. Thus, even unfavorable equilibrium
selection preserves the welfare of the best
constant action, which weakly exceeds original
babbling welfare.

\subsection{Bilateral Commitments: Guarantees and Obstructions}

Allowing the receiver to commit as well changes
the incentives at the commitment stage. Under
simultaneous commitments, the receiver's scheme
remains fixed when the sender deviates. It can
therefore make the sender's alternative commitments
less attractive and support an outcome with
lower welfare.

\begin{proposition}\label{prop:simultaneous-welfare}
There exists a sender-receiver game $\Gamma$ with two states
and binary actions such that every equilibrium of the
original game has welfare $6$, while
\[
    \mathrm{EQ}_{SR,w}^{-}(\Gamma)\leq5
    <6
    =
    \mathrm{EQ}_{SR,w}^{+}(\Gamma)
    =
    W^*(\Gamma).
\]
The equilibrium witnessing the welfare decrease has strict
commitment-stage best responses for both players.
\end{proposition}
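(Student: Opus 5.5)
We will prove the proposition with an explicit two-state, binary-action game, built so that the original game is pinned down and every deviation at the commitment stage can be checked by hand. The states are $\theta_1,\theta_2$, each with prior $1/2$. Action $1$ is efficient in both states, with $w(\theta_1,1)=w(\theta_2,1)=6$. In $\theta_2$, action $0$ is inefficient, say $w(\theta_2,0)=4$. The preferences are chosen as follows.
\begin{itemize}
\item The sender strictly prefers action $1$ in both states.
\item The receiver strictly prefers $1$ in $\theta_1$ and $0$ in $\theta_2$.
\item The receiver's prior-optimal action is $1$, so $R_1(\Gamma)>R_0(\Gamma)$.
\end{itemize}
Then $W^*(\Gamma)=W_{\mathrm{const}}(\Gamma)=6$, and a constant action $1$ yields welfare $6$.

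\textbf{Step 1: the original game.} Since the sender strictly prefers $1$ in every state, every on-path message that induces a pure response must induce $1$ under sender optimality. The argument follows the one sketched after Proposition~\ref{prop:sen-pessimistic}: all sender types rank receiver responses identically, so every on-path message induces the same action lottery. The receiver best-responds to the prior, which gives action $1$ because $R_1>R_0$ and ties are broken favorably. Hence every sequential equilibrium of $\Gamma$ plays $1$ with probability one and has welfare $6$. This is consistent with Proposition~\ref{prop:binary-pareto}, since $d_R=R_1\ge b_R$.

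\textbf{Step 2: $\mathrm{EQ}^+_{SR,w}=6$.} We exhibit an equilibrium of $\Gamma^{SR}$ with welfare $6$. Both players choose $\mathbf 0$, and the continuation is babbling on action $1$. After any unilateral sender deviation, prescribe a babbling continuation. Its receiver payoff is at least $b_R$ by the argument of Proposition~\ref{prop:sen-pessimistic}, and its welfare is at most $W_{\mathrm{const}}=6$. So the sender gets at most $6-b_R$, which is what she already obtains. After any receiver deviation, prescribe a babbling continuation in which the receiver's payoff is at most $b_R$; Proposition~\ref{prop:rec-pessimistic} supplies this. The upper bound $\mathrm{EQ}^+_{SR,w}\le W^*=6$ holds trivially.

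\textbf{Step 3: the low-welfare equilibrium.} This is the core of the proof. The receiver commits to a burn $\beta$ at the outcome $(\theta_2,1)$. We choose $\beta$ so that the receiver's expected net payoff from constant action $1$ falls slightly below its payoff from $0$. Then receiver obedience requires the sender to separate $\theta_2$, and action $0$ is played there. Given this fixed receiver burn, the sender's best response is a small transfer scheme that rewards the receiver at $(\theta_1,1)$ and leaves $(\theta_2,0)$ on path. The play path is then: reveal the state, play $1$ in $\theta_1$ and $0$ in $\theta_2$. The burn is off path, and the welfare is $(6+4)/2=5$.

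Two strictness checks are needed.
\begin{enumerate}
\item \textbf{Sender strictness.} Any sender scheme that restores action $1$ in $\theta_2$ must compensate the receiver for its own burn $\beta$ plus its preference gap there. We pick numbers so that this compensation strictly exceeds the sender's gain $u_S(\theta_2,1)-u_S(\theta_2,0)$. This is where the burn is used: because $\beta$ is fixed under simultaneity, any sender deviation must pay for it.
\item \textbf{Receiver strictness.} Compared with $\mathbf 0$, the receiver strictly gains from its commitment, because it obtains its preferred action in $\theta_2$. We also specify continuations after receiver deviations. The natural choice is to let the sender's fixed scheme combined with a babbling-type continuation (or full revelation) give the receiver strictly less.
\end{enumerate}

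The main obstacle is Step 3, specifically choosing concrete numbers so that all three conditions hold simultaneously: receiver obedience flips under the burn, the sender's best compensation is strictly unprofitable, and every receiver deviation is strictly unprofitable. The one-state asymmetry in the receiver's preferences must be large enough to make the commitment worthwhile, yet the prior must still favor $1$ so that Step 1 holds. I would first try $u_S(\theta_2,1)-u_S(\theta_2,0)=1$ and $u_R(\theta_2,0)-u_R(\theta_2,1)=1$. With these values the inefficiency of action $0$ in $\theta_2$ must come from the level of welfare rather than the gaps, so the numbers need adjusting. I would then tune the burn so that restoring action $1$ costs the sender more than $1$, and verify each inequality directly, including sequential rationality at unused messages as in Remark~\ref{rem:menu}.
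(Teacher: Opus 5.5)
Your architecture is the paper's. Its example, with states $L,R$ and payoffs $(4,4),(0,0)$ in $L$ and $(4,0),(0,2)$ in $R$ for actions $0,1$, has your qualitative structure with the action labels swapped, and your Steps 1 and 2 go through. \textbf{Step 3, however, fails as written.} The receiver's burn $\beta$ at $(\theta_2,1)$ changes only the receiver's payoffs, so the sender still strictly prefers action $1$ in $\theta_2$. A small transfer at $(\theta_1,1)$ does nothing about this. Under your commitment profile the sender ranks the actions identically in both states, so in every continuation equilibrium all on-path messages induce the same lottery; the $\theta_2$ type simply mimics $\theta_1$. Since you made constant $1$ worse than constant $0$ for the receiver, every continuation pools on $0$. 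The separating path is therefore not a continuation equilibrium. Moreover, pooling on $0$ gives the receiver $R_0<R_1$, so it would drop its burn. The missing idea is a \emph{sender} self-penalty on the outcome she prefers in the disagreeing state. She must burn at least $u_S(\theta_2,1)-u_S(\theta_2,0)$ at $(\theta_2,1)$, as in the paper's burn of $5$ at $(R,0)$. This is what makes revealing $\theta_2$ and recommending $0$ there credible. The receiver's burn has a different role: because it stays fixed when the sender deviates, it makes buying back the constant action expensive.

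This also shows that your calibration of $\beta$ is backwards. Suppose the receiver's net constant-$1$ payoff is only slightly below $R_0$. Then the sender can deviate to a small transfer $t$ on action $1$ in every state. This makes action $1$ strictly prior-optimal while keeping her state-independent strict preference for $1$, so every continuation plays $1$ by the argument of Proposition~\ref{prop:sen-pessimistic}. She then gets $S_1-t$, which exceeds her separating payoff $S_1-\frac12\Delta_S(\theta_2)$ for small $t$. The compensation that matters is the prior-level gap $R_0-R_1+\beta/2$, not the statewise amount $\beta$ plus the preference gap. It must exceed $\frac12\Delta_S(\theta_2)$, so $\beta$ must be large; the paper's burn of $12$ creates a gap of $5$ against a sender loss of $2$. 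For receiver strictness, prescribe babbling after receiver deviations, not full revelation. Once the sender's burn is in place, full revelation after a deviation to $\mathbf{0}$ gives the receiver exactly its equilibrium payoff. Finally, an existence claim needs a verified instance. With your welfare levels, let $\theta_1$ have payoffs $(3,3)$ at $1$ and $(0,0)$ at $0$, and let $\theta_2$ have payoffs $(5,1)$ at $1$ and $(2,2)$ at $0$. Let the sender burn $4$ and the receiver burn $6$ at $(\theta_2,1)$. Truthful revelation then yields payoffs $(5/2,5/2)$ and welfare $5$. Under babbling continuations, every sender deviation and every receiver deviation gives at most $2$.
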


The example has equally likely states and the following
payoffs, listed as $(u_S,u_R)$:
\[
\begin{array}{c|cc}
    &0&1\\ \hline
    L&(4,4)&(0,0)\\
    R&(4,0)&(0,2)
\end{array}
\]
Without commitments, every equilibrium selects action
$0$ and has welfare $6$. Under simultaneous commitments,
the sender burns $5$ and the receiver burns $12$
at $(R,0)$. The resulting preferences support
recommendations $0$ in state $L$ and $1$ in state $R$,
giving payoffs $(2,3)$ and welfare $5$.
No burning occurs on path: the welfare loss comes
from choosing the less efficient action in state $R$.

Babbling continuations after commitment deviations
support this outcome. The receiver's burn makes it
too costly for the sender to induce constant action
$0$, while the sender's burn makes the receiver's
babbling alternatives unattractive. The full proof
checks arbitrary unilateral payment deviations and
shows that both prescribed commitments are strict
best responses and can be found in Appendix~\ref{sec:proof-worst-welfare}.

Sequencing commitments with the sender first restores
the sender-only guarantee, despite allowing the
receiver to respond with its own scheme.

\begin{theorem}\label{thm:sender-first-guarantees}
Let $\Gamma$ be a sender-receiver game with a finite action
set. Every sequential equilibrium of $\Gamma^{S\to R}$
gives the sender at least
$W_{\mathrm{const}}(\Gamma)-b_R(\Gamma)$ expected utility
and the receiver at least $b_R(\Gamma)$ expected utility.
Moreover,
\[
    \mathrm{EQ}_{S\to R,w}^{-}(\Gamma)
    =W_{\mathrm{const}}(\Gamma).
\]
An equilibrium attaining this welfare can be computed
using $O(n|A|)$ arithmetic operations.
\end{theorem}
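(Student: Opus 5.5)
The plan is to prove the two individual guarantees separately, add them to get the welfare lower bound, and then use an explicit construction for equality.

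\textbf{Receiver guarantee.} Fix an equilibrium with on-path sender scheme $\tau_S$ and receiver response $r_R$. The receiver may respond to $\tau_S$ with $\mathbf{0}$. In any continuation of $\Gamma^{(\tau_S,\mathbf{0})}$, sequential rationality at the information sets gives an ex ante payoff at least that of ignoring messages and playing any fixed action $a$. The sender's payments to the receiver are nonnegative, and the receiver makes none, so this payoff is at least $R_a(\Gamma)$. Hence $F_R(\tau_S,\mathbf{0})\geq b_R(\Gamma)$. Optimality of $r_R(\tau_S)$ in Definition~\ref{def:commitment-equilibrium} then gives $F_R(\tau_S,r_R(\tau_S))\geq b_R(\Gamma)$. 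The same argument applies after every leader scheme, not only on path.

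\textbf{Sender guarantee.} The sender can deviate to a scheme $\tau^*$ built as in Proposition~\ref{prop:sen-pessimistic}: transfer $t=b_R-R_c$ on $c$, and place burns on the other actions so that the sender ranks actions identically, with $c$ first, in every state. The one change is that I make the burns very large, say exceeding a bound $B$ to be chosen in terms of $\max|u_i|$ and $1/\min_\theta p(\theta)$. Let $\tau_R=r_R(\tau^*)$ and take any continuation equilibrium of $\Gamma^{(\tau^*,\tau_R)}$. The receiver's payments to the sender are nonnegative, so if $c$ is played with probability one the sender gets at least $S_c-t=W_c-b_R$, and we are done. The main obstacle is the remaining case, where $\tau_R$ has transfers to the sender large enough to reverse her ranking, so that non-$c$ actions occur on path.

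For that case I would argue through the receiver's optimality. The receiver's payoff equals net welfare minus the sender's payoff. Net welfare is at most $W^*$ minus the realized burns of $\tau^*$, and the sender's payoff on a non-$c$ outcome is at most her original utility minus the burn plus the receiver's transfer. To make a non-$c$ action attractive to the sender in state $\theta$, the receiver must transfer roughly $B$ there. Once that outcome is on path with probability bounded below, the receiver pays about $B$ on it while gaining at most $O(\max|u|)$. For $B$ large, its payoff then falls below $b_R=F_R(\tau^*,\mathbf{0})$, contradicting optimality of $r_R$.

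The delicate point is that on-path probabilities of non-$c$ outcomes can be arbitrarily small, so the transfer cost cannot be bounded away from zero directly. I would first try to avoid this by conditioning: the receiver's own sequential rationality at messages inducing non-$c$ actions, combined with the sender's rationality in each state, bounds its net gain per unit of probability. This would show that the receiver's payoff is at most $b_R$ plus something, which must be offset exactly. Failing that, I would write the receiver's payoff as $W_c$ minus the sender's payoff plus a term bounded by the probability of non-$c$ outcomes times $(\max|w|-B)$. Since $B$ exceeds $\max|w|$, this term is negative, which yields sender payoff $\geq W_c-b_R$. Either way, the sender's equilibrium payoff is at least $F_S(\tau^*,r_R(\tau^*))\geq W_{\mathrm{const}}-b_R$.

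\textbf{Welfare and equality.} Adding the two guarantees gives $\mathrm{EQ}^-_{S\to R,w}\geq W_{\mathrm{const}}$. For equality, the sender chooses $\tau^*$ and the receiver answers it with $\mathbf{0}$. By Proposition~\ref{prop:sen-pessimistic} the continuation plays $c$, giving welfare exactly $W_c$ with no burns realized. After other leader schemes, the receiver responds with $\mathbf{0}$ and the continuation is babbling. Two checks are needed. First, the receiver's response $\mathbf{0}$ must be optimal; this uses babbling continuations after receiver deviations together with the first part. Second, the sender's choice must be optimal, which holds because by Proposition~\ref{prop:sen-pessimistic} babbling gives the sender at most $W_{\mathrm{const}}-b_R$. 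The construction needs one pass over $\Theta\times A$, giving $O(n|A|)$ arithmetic operations.
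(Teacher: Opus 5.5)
Your receiver guarantee and your equality construction are sound, and the overall plan for the sender guarantee matches the paper's. The sender guarantee itself, however, has a genuine gap in two places.

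First, the deviation $\tau^*$ with transfer exactly $t=b_R(\Gamma)-R_c(\Gamma)$ does not secure $W_{\mathrm{const}}(\Gamma)-b_R(\Gamma)$, however large the burns. Suppose some $a^*\neq c$ has $R_{a^*}(\Gamma)=b_R(\Gamma)$; this happens in particular whenever $c$ is not originally receiver-optimal. Then after $\tau^*$ the receiver is exactly indifferent between $c$ and $a^*$ under the prior. It can respond by burning a small $\delta>0$ on $c$ in every state. This leaves the sender's strict, state-independent ranking intact. The argument of Proposition~\ref{prop:sen-pessimistic} then shows that every continuation plays a single prior-optimal action, which is now some $a'\neq c$. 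The receiver gets $b_R(\Gamma)$, exactly as under $\mathbf{0}$, while the sender bears her large burn at $a'$. For instance, take state-independent payoffs $u_S(\cdot,0)=0$, $u_S(\cdot,1)=10$, $u_R(\cdot,0)=1$, $u_R(\cdot,1)=0$. Then $c=1$, $t=1$, and the sender's payoff under this response is $9-K_0$, where $K_0$ is the penalty you place on action $0$. The follower is not assumed to break commitment ties in the leader's favor, so $r_R(\tau^*)$ may be this response, and your concluding inequality $F_S(\tau^*,r_R(\tau^*))\geq W_{\mathrm{const}}(\Gamma)-b_R(\Gamma)$ fails. The paper instead transfers $s=b_R(\Gamma)-R_c(\Gamma)+\varepsilon$, so that $\mathbf{0}$ gives the receiver strictly $b_R(\Gamma)+\varepsilon$. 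It proves the guarantee up to $\varepsilon$ and lets $\varepsilon\to0$.

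Second, the core step is not supplied: you never show that no optimal receiver response profitably induces non-$c$ play. Your fallback also runs in the wrong direction. Welfare accounting together with $U_R\geq b_R(\Gamma)$ yields only $U_S\leq W_c-b_R(\Gamma)$ plus a nonpositive term, which is an upper bound. The sender's realized burns hurt the sender, not the receiver, as the example shows. A lower bound on $U_S$ requires an \emph{upper} bound on $U_R$, and that must come from the sender's incentive constraints. For the same reason, ``the receiver pays about $B$'' is unjustified: if every message, including unused ones, induces a non-$c$ action, the receiver need pay nothing.

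The paper's mechanism uses penalties $K_{a_j}=jG$ with $G>6D$, where $D$ bounds the receiver's payoff differences after the sender's transfers.
\begin{itemize}
\item Let $k_m$ be the expected penalty induced by message $m$, and $k_*$ its minimum over all messages. Because the sender can always send a $k_*$-message, $\mathbb{E}[K_a]-k_*\leq Q$, where $Q$ is the receiver's expected transfer.
\item Favorable tie-breaking at a mixed response forces conditional transfers $q_a(m)\geq K_a-L_m$, where $L_m$ is the smallest penalty in the support. This gives $\mathbb{E}[k_m-L_m]\leq Q$.
\item Hence $\mathbb{E}|K_a-k_*|\leq 3Q$, and $\Pr(a\neq j)\leq 6Q/G$ for the action $j$ whose penalty is closest to $k_*$.
\end{itemize}
This bound is linear in $Q$, which resolves your concern about small deviation probabilities. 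It gives $U_R\leq b_R(\Gamma)+\varepsilon-(1-6D/G)Q-H$, with $H$ the receiver's expected burning. Optimality then forces $Q=H=0$, and the strict $\varepsilon$ forces $j=c$. What matters is the spacing of successive penalties relative to $D$, not the absolute size of the burns or $1/\min_\theta p(\theta)$.
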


The sender again uses off-path burns to create
a common strict ranking of actions across states.
The new difficulty is that receiver transfers can
change that ranking. The construction therefore
places sufficiently large gaps between successive
sender payoffs and slightly overcompensates the
receiver for selecting the intended constant action.

The key argument relates the receiver's payments
to the extent of nonconstant play. Sender message
incentives and favorable receiver tie-breaking imply
that substantial departures from a constant action
require substantial receiver transfers. When the
ranking gaps are sufficiently large, the cost of
these transfers exceeds any possible receiver gain
from changing the outcome. Every optimal receiver
response consequently preserves the intended action.
Taking the overcompensation arbitrarily close to zero
gives the stated sender guarantee. A separate
babbling construction attains the welfare bound
exactly. The full proof can be found in Appendix~\ref{sec:proof-sender-first-guarantees}.

This guarantee protects the best constant-action
welfare, but does not ensure that information is
used efficiently. The same binary-action example 
can obstruct efficiency under both simultaneous
and sender-first commitments.

\begin{proposition}\label{prop:bilateral-efficiency-obstruction}
There exists a sender-receiver game $\Gamma$ with two
states and binary actions such that
\[
    \mathrm{EQ}_{SR,w}^{+}(\Gamma)
    =
    \mathrm{EQ}_{S\to R,w}^{+}(\Gamma)
    =
    b_S(\Gamma)+b_R(\Gamma)
    =
    \frac52
    <
    3
    =
    W^*(\Gamma).
\]
\end{proposition}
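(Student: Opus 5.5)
The plan is to build an explicit two-state, binary-action example and sandwich both optimistic welfare values between $5/2$ and $5/2$. Take equally likely states $L,R$ with payoffs $(u_S,u_R)$ such that $\max_a w(\theta,a)=3$ in each state, so $W^*(\Gamma)=3$. The efficient rule should be nonconstant, say $L\mapsto 0$ and $R\mapsto 1$, with $w(L,1)=w(R,0)=2$, so that $W_0=W_1=5/2=W_{\mathrm{const}}(\Gamma)$.

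The split of each welfare number between the players is the design freedom. I would choose it to meet three requirements.
\begin{itemize}
\item[(i)] The babbling default action gives $b_S+b_R=5/2$.
\item[(ii)] In state $R$ the sender strictly prefers action $0$, while in state $L$ the receiver strictly prefers action $0$. Hence separation toward the efficient rule requires the receiver to pay the sender to reveal $R$, or the sender to pay the receiver.
\item[(iii)] Either party's gain from the efficient separating outcome is smaller than the other party's loss relative to what it can secure by steering play to a constant action.
\end{itemize}
Concretely, I would first try payoffs of the form $L\!:\,0\mapsto(0,3)$, $1\mapsto(2,0)$ and $R\!:\,0\mapsto(2,0)$, $1\mapsto(0,3)$, then adjust entries until (i)–(iii) hold.

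\emph{Lower bound.} For $S\to R$, Theorem~\ref{thm:sender-first-guarantees} gives every equilibrium welfare at least $W_{\mathrm{const}}(\Gamma)=5/2$. Hence $\mathrm{EQ}^+_{S\to R,w}\ge 5/2$ once the equilibrium set is nonempty, and that theorem supplies an equilibrium. For $SR$, I would exhibit an equilibrium with zero commitments on path and babbling continuations after every commitment profile. In it, each player's payoff equals its babbling payoff. A unilateral deviation by the receiver cannot beat $b_R$, by the argument of Proposition~\ref{prop:rec-pessimistic}. For a sender deviation, the continuation is babbling, so she can at best buy a constant action. Since $W_0=W_1$ and the receiver must be compensated up to $b_R$, she gains nothing. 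This holds exactly because $W_{\mathrm{const}}=b_S+b_R$.

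\emph{Upper bound.} This is the main step. Let $E$ be an equilibrium with on-path profile $\tau$. By Proposition~\ref{prop:binary-pareto} applied to $\Gamma^\tau$, and because there are only two states, the continuation is either effectively babbling (welfare at most $W_{\mathrm{const}}$) or, up to payoff equivalence, separating/obedient. Welfare above $5/2$ requires the efficient action in at least one state with positive probability beyond what a constant action gives, so some state is revealed.

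Write down the sender's incentive constraint for reporting truthfully in state $R$ under $\tau$. By (ii), it forces $\tau_R^S(R,1)-\tau_S(R,1)$ to exceed the sender's gain from action $0$ in state $R$, or forces burns that lower welfare. Summing expected net payoffs, one obtains an upper bound on either $u_S(E)$ or $u_R(E)$ in terms of the efficient welfare minus the required compensation.

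I would then compare this bound with each player's deviation value.
\begin{itemize}
\item Under $SR$, the sender can deviate to the scheme of Proposition~\ref{prop:sen-pessimistic}, adjusted against the fixed $\tau_R$. The receiver can deviate to $\mathbf 0$ and secure $b_R$.
\item Under $S\to R$, the follower's best response to the sender's on-path scheme must be optimal. The sender's deviation to the Theorem~\ref{thm:sender-first-guarantees} scheme secures her $W_{\mathrm{const}}-b_R$ against any follower response.
\end{itemize}
The payoffs from (iii) should make these two lower bounds plus the incentive-constraint costs exceed any welfare strictly above $5/2$. That contradiction yields $\mathrm{EQ}^+\le 5/2$ in both protocols.

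\emph{Main obstacle.} The hard part is handling arbitrary receiver schemes in $SR$ when the sender deviates, since the receiver's fixed $\tau_R$ can both subsidize and penalize outcomes. I would try to show that any $\tau_R$ supporting separation makes the sender's constant-action deviation at least as profitable as under $\tau_R=\mathbf 0$, because $\tau_R$ only adds transfers to the sender. I would also allow for the possibility that the concrete numbers need retuning at this point.
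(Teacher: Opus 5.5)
\textbf{There is a genuine gap, and it starts with the orientation of your example.} With your efficient rule ($L\mapsto0$, $R\mapsto1$, welfare $3$ versus $2$), requirement (ii) makes the receiver strictly prefer the efficient action in both states. In $L$ this holds by assumption. In $R$ it follows from $u_R(R,1)-u_R(R,0)=1+u_S(R,0)-u_S(R,1)>1$. No choice of entries in this family produces an obstruction.

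The sender can commit to large burns at $(L,1)$ and $(R,0)$. These are never realized (Remark~\ref{rem:offpath}), yet they make recommending the efficient action strictly optimal for her. So your step claiming that truthful reporting in $R$ forces either compensation or welfare-reducing burns is false. The receiver then pays the sender her statewise loss in one state only: at $(L,0)$ if action $1$ is receiver-optimal under the prior, and at $(R,1)$ otherwise. That outcome is avoided by the receiver's best constant action, so $b^\rho=b_R(\Gamma)$. With babbling after every deviation, the sender's deviation value is $\max_xW^\rho_x-b^\rho=5/2-b_R(\Gamma)$, which her on-path payoff meets. The receiver obtains at least $b_R(\Gamma)$ on path, which is its deviation value.

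Your own candidate illustrates this. Let the sender burn $3$ at $(L,1)$ and $(R,0)$, and let the receiver transfer $2$ at $(L,0)$. Truthful recommendations are then strictly obeyed, with payoffs $(1,2)$ and welfare $3$. A sender deviation yields at most $5/2-3/2=1$, and a receiver deviation at most $3/2$. The same profile also gives an equilibrium of $\Gamma^{S\to R}$: the receiver answers any other sender scheme with zero payments, and play is babbling after all other histories. Hence $\mathrm{EQ}^+_{SR,w}=\mathrm{EQ}^+_{S\to R,w}=W^*$ throughout your family.

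\textbf{The needed orientation and the missing bound.} The example must have the opposite orientation. The paper uses $L:(0,2),(3,0)$ and $R:(3,0),(0,2)$, where the sender's preferred action is efficient in both states. The sender then has no credibility problem, and it is the receiver who must be paid for obedience. Even there, the bounds you plan to combine, $U_S\ge3/2$ and $U_R\ge b_R(\Gamma)=1$, do not exclude welfare $3$. Always playing the sender's preferred action with expected sender transfer $T=1$ gives $(2,1)$.

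The missing idea is a stronger receiver deviation bound that exploits the fact that the receiver chooses the action. By committing to nothing and ignoring the message, the receiver obtains $1+\max\{t_0,t_1\}\ge1+T/2$. Here $t_a$ is the expected sender transfer on constant action $a$, and $t_0+t_1\ge T$. Let $x$ be the probability of the sender's preferred action, $Q$ the realized receiver transfers, and $B_S,B_R$ the realized burning. The sender's guarantee from no payments and a fixed message is $3x-T+Q-B_S\ge3/2$. The receiver's bound is $2(1-x)+T-Q-B_R\ge1+T/2$. Together they give $x+Q+B_S+2B_R\le1/2$, hence welfare $2+x-B_S-B_R\le5/2$. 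This holds under both protocols and for arbitrary receiver schemes, so your anticipated main obstacle disappears.

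This asymmetry explains both outcomes. The receiver captures the larger of the sender's constant-action transfers, whereas the sender cannot choose which constant action is played. That is why the obstruction holds in the paper's orientation and fails in yours. Your lower-bound arguments are fine once the example is corrected: zero commitments with babbling under $SR$, and Theorem~\ref{thm:sender-first-guarantees} under $S\to R$.
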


In the example, the players disagree in both states.
The sender obtains $3$ from her preferred action
and $0$ from the other action; the receiver obtains
$2$ from its preferred action and $0$ from the other.
Their preferred actions reverse across the two
equally likely states. Efficiency therefore requires
always choosing the sender's preferred action,
whereas either constant action has welfare $5/2$.

The obstruction comes from commitment-stage incentives.
The sender can secure a positive payoff without
making payments, while the receiver can decline
its own commitments and benefit from any incoming
sender transfers. Combining these deviation
guarantees bounds the probability of selecting
the sender's preferred action and rules out any
welfare improvement over $5/2$.
The argument applies to every equilibrium under
either protocol, rather than only to a particular
choice of continuation equilibria. This is fully analyzed in Appendix~\ref{sec:prop-bilateral-efficiency-obstruction}.

\subsection{Receiver-First Commitments}

Reversing the commitment order changes the efficiency
conclusion. The receiver can use its initial scheme
to shape the sender's subsequent commitment incentives.

\begin{theorem}\label{thm:opt-welfare}
For every finite sender-receiver game $\Gamma$,
\[
    \mathrm{EQ}_{R\to S,w}^{+}(\Gamma)=W^*(\Gamma).
\]
An efficient equilibrium can be computed in
$O(n|A|)$ arithmetic operations.
\end{theorem}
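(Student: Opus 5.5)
The upper bound is immediate. By Remark~\ref{rem:welfare}, transfers cancel and burning only lowers welfare, so every equilibrium of $\Gamma^{R\to S}$ has net welfare at most $\sum_{\theta,a}x(\theta,a)w(\theta,a)\le W^*(\Gamma)$. The work is in constructing an equilibrium that attains $W^*(\Gamma)$.

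\textbf{Construction.} I would use the receiver's leading scheme to align the sender with welfare. Fix an efficient action rule $f$ (one maximization per state, $O(n|A|)$ operations). Set $m_\theta:=\min_{a}u_R(\theta,a)$ and let the receiver commit to the transfer
\[
\tau_R^{S}(\theta,a):=u_R(\theta,a)-m_\theta\ge0 .
\]
In $\Gamma^{(\cdot,\tau_R)}$ the sender's net payoff is $w(\theta,a)-m_\theta$, so her state-wise preferred action is efficient. The receiver's net payoff $m_\theta$ does not depend on the action, so the receiver is indifferent everywhere and favorable tie-breaking makes it follow any recommendation. Recommending $f(\theta)$ is therefore a sequential equilibrium of the continuation with welfare $W^*(\Gamma)$, provided the sender follows with a scheme that burns nothing on path.

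\textbf{Receiver's commitment incentive (main obstacle).} After this scheme the receiver gets only $\bar m:=\sum_\theta p(\theta)m_\theta$, which may be below $b_R(\Gamma)$. I expect deviations to secure at least $b_R(\Gamma)$: the receiver moves last and the sender's payments are nonnegative, so it can ignore messages and play a prior-best action. The leading scheme must therefore leave the receiver at least $b_R(\Gamma)$, and I see two ways to do this.

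\emph{First attempt.} Make the sender's optimal reply include a return transfer. Modify the leading scheme so that the receiver has a strict, but small, preference for its babbling default action unless compensated. For example, lower $\tau_R^S$ by a margin at non-default actions, so that obedience binds exactly as in Proposition~\ref{prop:binary-pareto}. The follower then faces a sender-only problem in the modified game, and by the logic of Proposition~\ref{prop:sen-pessimistic} her best reply transfers just enough on the efficient outcomes to restore obedience. This returns $b_R(\Gamma)-\bar m$ to the receiver while keeping the sender's state-wise ranking efficient.

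\emph{Fallback.} If that fails, let the receiver pay only $(u_R(\theta,a)-m_\theta)$ restricted to outcomes it wants to induce, and choose $m_\theta$ differently. The goal is a receiver net payoff that is constant across actions within each state, while summing under $p$ to some value $\ge b_R(\Gamma)$.

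In either version I will check three things. First, sender deviations in her own scheme cannot yield more than her equilibrium share $W^*(\Gamma)-b_R(\Gamma)$, since her net payoff is welfare minus the receiver's guaranteed share. Second, after every other leader scheme the follower has a specified optimal reply: take a babbling continuation, together with the scheme of Proposition~\ref{prop:sen-pessimistic} for the induced game. Third, under those replies no receiver deviation beats $b_R(\Gamma)$. The tie-breaking convention and the off-path continuations, chosen as in Proposition~\ref{prop:rec-pessimistic}, should close the argument.

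\textbf{Complexity.} Computing $f$, the values $m_\theta$, $b_R$, and the single return transfer each takes $O(n|A|)$ operations. The off-path rules are described uniformly, as the algorithmic conventions permit.
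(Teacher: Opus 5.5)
\textbf{Genuine gap: the equilibrium-path leading scheme.} Your upper bound is fine. So is your treatment of receiver deviations: with babbling after every sender commitment, a deviating receiver gets at most its own best constant-action payoff, which is at most $b_R(\Gamma)$. The gap is the step you yourself flag as the main obstacle, which you never pin down; the versions you describe fail.

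After a receiver scheme $\rho_R$, the follower only has to leave the receiver its outside option in the modified game, $b^\rho:=\max_x\mathbb{E}_\theta[u_R(\theta,x)-\|\rho_R(\theta,x)\|]$. Suppose the receiver keeps a margin $\mu_\theta$ on the default action $d$, so its net payoff is $m_\theta+\mu_\theta$ at $(\theta,d)$ and $m_\theta$ elsewhere. Then playing $d$ guarantees it $b^\rho=\sum_\theta p(\theta)(m_\theta+\mu_\theta)$, and the sender is capped at $W^*(\Gamma)-b^\rho$. She attains this cap efficiently by paying only $\mu_\theta$ at $(\theta,f(\theta))$ whenever $f(\theta)\neq d$, which leaves the receiver exactly $b^\rho$. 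This is below $b_R(\Gamma)=R_d$ unless $\mu_\theta=u_R(\theta,d)-m_\theta$ in every state, which is hardly a small margin. So the claimed return of $b_R(\Gamma)-\bar m$ does not follow.

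For the same reason, your check that the sender gets ``welfare minus the receiver's guaranteed share'' caps her at $W^*(\Gamma)-b_R(\Gamma)$ only if $b^\rho=b_R(\Gamma)$. Two smaller points: lowering $\tau_R^S$ at non-default actions, as literally written, tilts the receiver away from $d$; and Proposition~\ref{prop:binary-pareto} covers only binary actions. The fallback is infeasible. With nonnegative payments, a receiver net payoff that is constant across actions in state $\theta$ is at most $m_\theta$, so its expectation is at most $\bar m$.

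\textbf{The missing idea and the repair.} The receiver must pay nothing at all on a prior-best action $d$, while transferring $u_R(\theta,a)-m_\theta$ on every $a\neq d$. Then $b^\rho=b_R(\Gamma)$, so after any sender reply the receiver gets at least $b_R(\Gamma)$ and the sender at most $W^*(\Gamma)-b_R(\Gamma)$. She attains this by transferring $u_R(\theta,d)-m_\theta$ at $(\theta,f(\theta))$ whenever $f(\theta)\neq d$, which leaves the receiver $u_R(\theta,d)$ in every state. She must also add her own off-path burns making $f(\theta)$ strictly best for her. These burns are essential: once the receiver withholds its surplus on $d$, her ranking is no longer welfare-aligned, and after compensating she may prefer some $x\notin\{d,f(\theta)\}$. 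The receiver is then indifferent between $f(\theta)$ and $d$, weakly prefers both to every other action, and follows by favorable tie-breaking.

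Repaired this way, your route is a valid alternative to the paper's. The paper splits cases. With an efficient constant action, the receiver commits to nothing. Otherwise, Lemma~\ref{lem:nonconstant-implementation} uses receiver off-path burns to cut the sender's babbling value to $\underline S$, implementing $(\underline S,W^*(\Gamma)-\underline S)$. Your corrected construction yields $(W^*(\Gamma)-b_R(\Gamma),b_R(\Gamma))$ with no case split. The paper's lemma, however, is the tool it reuses for the full binary-action payoff region.
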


The construction distinguishes constant and nonconstant
efficient action rules. If an efficient constant action
exists, the receiver makes no payments and the sender
compensates it for selecting that action. Suitable
babbling continuations support this outcome.

For a nonconstant rule, every constant action is
avoided in at least one state. The receiver can
therefore place large penalties on selected outcomes
without paying them along the intended rule.
These penalties make the sender's babbling alternatives
unattractive. On-path transfers divide the welfare,
while additional off-path burns make the intended
recommendations credible.

Commitment deviations are controlled in both directions.
After a different sender commitment, play switches
to a babbling continuation whose payoff does not
exceed the sender's prescribed payoff. After a
different receiver commitment, the sender optimally
responds within a prescribed family of babbling
continuations, leaving the receiver at most its
original babbling payoff. These responses are
optimal in the corresponding continuation games,
so the construction supports an equilibrium of
the entire extension. For a full proof, see Appendix~\ref{sec:receiver-first-constructions} and \ref{sec:thm-opt-welfare}.

For binary actions, we can characterize not only
efficient outcomes but every implementable payoff
profile. Define
\[
    W_{\mathrm{nc}}(\Gamma)
    :=
    \max_{\substack{f:\Theta\to\{0,1\}\\
                    f\text{ nonconstant}}}
    \mathbb{E}_{\theta\sim p}[w(\theta,f(\theta))],
\]
with $W_{\mathrm{nc}}(\Gamma)=-\infty$ if no
nonconstant rule exists. This is the largest
original welfare attainable by a nonconstant
deterministic action rule. In particular, such
a rule is efficient exactly when it attains $W^*(\Gamma)$.

Also define
\[
    \underline S:=\min_{a\in\{0,1\}}S_a,
    \qquad
    \overline R:=\max_{a\in\{0,1\}}R_a=b_R(\Gamma),
    \qquad
    \underline S_R
    :=
    \min_{\substack{a\in\{0,1\}\\R_a=\overline R}}S_a.
\]
Here, $\underline S$ is the sender's worst original
constant-action payoff, and $\underline S_R$ is
her worst payoff among the receiver's original
prior-best actions.

\begin{theorem}\label{thm:receiver-first-payoffs}
Let $\Gamma$ be a sender-receiver game with binary actions.
An expected utility profile $(U_S,U_R)$ is achieved by
a sequential equilibrium of $\Gamma^{R\to S}$ if and
only if it belongs to
\[
    \mathcal U_{\mathrm{nc}}(\Gamma)
    \cup
    \mathcal U_{\mathrm c}(\Gamma),
\]
where
\[
\begin{aligned}
    \mathcal U_{\mathrm{nc}}(\Gamma)
    &:=
    \left\{
        (U_S,U_R):
        U_S\geq\underline S,\;
        U_R\geq\overline R,\;
        U_S+U_R\leq W_{\mathrm{nc}}(\Gamma)
    \right\},\\
    \mathcal U_{\mathrm c}(\Gamma)
    &:=
    \left\{
        (U_S,\overline R):
        \underline S_R\leq U_S
        \leq W_{\mathrm{const}}(\Gamma)-\overline R
    \right\}.
\end{aligned}
\]
\end{theorem}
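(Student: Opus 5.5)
The proof has two directions: necessity (every equilibrium payoff profile lies in $\mathcal U_{\mathrm{nc}}\cup\mathcal U_{\mathrm c}$) and sufficiency (every such profile is attained). Both rest on the commitment-stage deviation guarantees the leader and the follower retain in $\Gamma^{R\to S}$. The classification of continuations in Proposition~\ref{prop:binary-pareto} applies to every deterministic $\Gamma^\tau$.

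\textbf{Necessity.} Fix an equilibrium $E$ with payoffs $(U_S,U_R)$.

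First, $U_R\ge\overline R$. The receiver can deviate to $\mathbf 0$. Whatever the sender then commits to, that continuation admits a babbling selection, and by the argument of Proposition~\ref{prop:rec-pessimistic} the receiver obtains at least $b_R$ there. One subtlety remains: the follower's specified response and continuation are fixed by $E$, not chosen by us. So I would instead use the sender's own pessimistic guarantee, as follows. Deviating to the zero sender scheme after the receiver's on-path scheme $\tau_R$ yields a babbling-selectable continuation. Since $\tau_R$ pays only nonnegative amounts, the sender receives nonnegative transfers, so she gets at least $\min_a S_a$. This gives $U_S\ge\underline S$.

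For $U_R\ge\overline R$, I would argue directly in the realized continuation. The receiver can ignore messages and play any constant action $a$, earning $R_a$ minus its own expected payments plus incoming sender transfers. I would then show that on-path receiver payments are offset: the receiver could have chosen $\tau_R=\mathbf 0$ at the commitment stage, and for any follower response the receiver can ignore messages, yielding at least $\overline R$ plus nonnegative sender transfers.

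Next, split by whether on-path play is constant. If the induced action is a.s.\ constant $a$, then welfare equals $W_a$ minus burning. The receiver gets exactly its continuation value, which must equal $\overline R$: any excess would have to come from sender transfers, and the sender could withdraw these while keeping $a$ by using the Proposition~\ref{prop:sen-pessimistic}-type burning that forces $a$. Hence $U_S\le W_{\mathrm{const}}-\overline R$. The lower bound $\underline S_R$ follows because, with $U_R=\overline R$ and no inducement, the sender can always secure a receiver-prior-best action with zero commitment. If play is nonconstant (in the mixed sense), welfare is a convex combination of deterministic-rule welfares including some nonconstant rule. I need the sharper bound $U_S+U_R\le W_{\mathrm{nc}}$; I would derive it from the fact that a mixture of constant rules with positive weight on both is dominated by a nonconstant rule when both states types exist, i.e.\ $\lambda W_0+(1-\lambda)W_1\le W_{\mathrm{nc}}$ whenever $n\ge2$.

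\textbf{Sufficiency.} I would reuse the constructions from Theorem~\ref{thm:opt-welfare}, replacing the efficient rule by an arbitrary nonconstant rule $f$ with welfare $W_{\mathrm{nc}}$, or by constant actions. For $\mathcal U_{\mathrm{nc}}$, the receiver penalizes outcomes off $f$ so that the sender's babbling alternatives give at most $\underline S$. On-path transfers split the welfare to hit $(U_S,U_R)$ on the frontier, and on-path burning lowers total welfare to reach interior points. For $\mathcal U_{\mathrm c}$, the receiver pays nothing, and the sender transfers to induce a constant action, with burning or a choice among prior-best actions used to reach any $U_S$ in the interval. Receiver deviations are punished by switching to babbling continuations selected for the sender's best response that leave the receiver at $\overline R$.

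\textbf{Main obstacle.} I expect the hardest step to be the tight necessity bounds: showing the constant case forces $U_R=\overline R$ exactly, and that mixing continuations cannot exceed $W_{\mathrm{nc}}$ together with the receiver's $\overline R$ floor. This requires carefully combining the receiver's leader deviation with arbitrary follower responses.
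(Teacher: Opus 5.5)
Your sufficiency plan for $\mathcal U_{\mathrm{nc}}$ matches the paper's, which applies Lemma~\ref{lem:nonconstant-implementation} to a rule attaining $W_{\mathrm{nc}}$. Your necessity argument, however, has a genuine gap in the nonconstant case.

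The inequality $\lambda W_0+(1-\lambda)W_1\le W_{\mathrm{nc}}$ is false. If action $0$ strictly socially dominates, every nonconstant rule loses welfare in some state, so $W_0>W_{\mathrm{nc}}$, and the mixture exceeds $W_{\mathrm{nc}}$ for $\lambda$ near $1$. Likewise, a state-dependent mixture with most weight on a high-welfare constant rule exceeds $W_{\mathrm{nc}}$. The welfare of a randomized continuation is therefore not bounded by $W_{\mathrm{nc}}$ in general.

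The missing idea is the paper's purification step. With binary actions and favorable tie-breaking, types with a strict preference send only messages attaining the extreme probabilities of action $0$. Wherever the receiver mixes, it is indifferent, and by tie-breaking so is the sender in conditional expectation. Replacing mixed responses by pure ones therefore preserves both players' payoffs. A state-independent mixture is likewise payoff-equivalent to pure babbling. Only after reducing every equilibrium to a deterministic rule does ``nonconstant'' yield $U_S+U_R\le W_{\mathrm{nc}}$. State-independent mixtures must then be routed into the constant case, which your ``a.s.\ constant'' analysis does not cover.

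Within the constant case, your lower bound $\underline S_R$ is also unsupported. A zero sender commitment does not control which continuation is selected. Moreover, unless you first rule it out, the receiver's commitment might have burned on its original best action. The paper proceeds as follows:
\begin{itemize}
\item Apply Proposition~\ref{prop:sen-pessimistic} to $\Gamma^{(\mathbf 0,\rho_R)}$ to get $U_S\ge V(\rho_R)\ge W_c^\rho-b^\rho$.
\item Combine this with $U_S+U_R=W_c^\rho-(\text{sender's burning on }c)$ to obtain $U_R\le b^\rho\le\overline R$, hence $b^\rho=\overline R$.
\item An action $j$ attaining $b^\rho$ is then an original receiver-best action on which the receiver pays nothing.
\item So $U_S\ge W_j^\rho-b^\rho=S_j\ge\underline S_R$.
\end{itemize}
Your ``the sender could withdraw excess transfers'' idea is an informal version of the first step. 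You need to extract $b^\rho=\overline R$ from it.

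Your sufficiency construction for $\mathcal U_{\mathrm c}$ fails as stated. If the receiver pays nothing, Proposition~\ref{prop:sen-pessimistic} lets the sender guarantee $W_{\mathrm{const}}(\Gamma)-\overline R$ against every continuation selection. Only the upper endpoint of the interval is then implementable. On-path sender burning, or steering to a worse prior-best action, would violate the sender's commitment-stage optimality.

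To push the sender down to $U_S$, the receiver must lower the sender's deviation value itself. In the paper, the receiver burns $D=W_{\mathrm{const}}(\Gamma)-\overline R-U_S$ on a welfare-maximizing action $c$ in every state. This keeps $b^\rho=\overline R$, via an original best action $j$ with $S_j=\underline S_R$, and makes $V(\rho_R)=U_S$. The sender's optimal babbling response from Lemma~\ref{lem:receiver-first-babbling} then transfers $\overline R-R_c^\rho$ on $c$. This yields $(U_S,\overline R)$, with the receiver's burn realized on path and compensated.

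Separately, your $U_S\ge\underline S$ step should not rely on a ``babbling-selectable'' continuation. Use the sender's fixed-message deviation inside an arbitrary continuation equilibrium, for the same selection reason you flagged for the receiver.
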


The nonconstant region combines two participation
constraints with a welfare bound. The receiver
can secure $\overline R$ by making no payments.
The sender can secure at least $\underline S$
by making no payments and sending a fixed message.
The nonconstant construction described above
implements every profile satisfying these bounds,
using on-path transfers to divide welfare and
burning any surplus not assigned to either player.

The constant region is more restrictive. After
the receiver's commitment, the sender can compensate
the receiver for choosing a constant action.
This deviation opportunity, together with the
receiver's participation constraint, forces the
receiver's payoff at a constant outcome to equal
$\overline R$. The remaining bounds determine
the sender's possible share.

For necessity, the binary-action structure allows
any equilibrium continuation to be replaced by a
deterministic action rule preserving both players'
expected payoffs. The proof (found in Appendix~\ref{sec:receiver-first-constructions} and \ref{sec:thm-receiver-first-payoffs}) then separates constant
and nonconstant rules and applies the corresponding
bounds. This is where the restriction to binary
actions enters the full characterization.

The payoff region immediately determines the
worst-equilibrium welfare.

\begin{corollary}\label{cor:receiver-first-worst-welfare}
For every sender-receiver game $\Gamma$ with binary actions,
\[
    \mathrm{EQ}_{R\to S,w}^{-}(\Gamma)
    =
    \begin{cases}
        \underline S+\overline R,
        &\text{if }\underline S+\overline R
                    \leq W_{\mathrm{nc}}(\Gamma),\\[1mm]
        \underline S_R+\overline R,
        &\text{otherwise}.
    \end{cases}
\]
\end{corollary}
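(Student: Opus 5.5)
The corollary follows from Theorem~\ref{thm:receiver-first-payoffs} by minimizing $U_S+U_R$ over the region $\mathcal U_{\mathrm{nc}}(\Gamma)\cup\mathcal U_{\mathrm c}(\Gamma)$. The theorem states that this union is exactly the set of payoff profiles of sequential equilibria of $\Gamma^{R\to S}$. Hence $\mathrm{EQ}_{R\to S,w}^{-}(\Gamma)$ is the infimum of $U_S+U_R$ over the union, and it suffices to minimize welfare over each piece separately.

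For the nonconstant piece, every point satisfies $U_S+U_R\geq\underline S+\overline R$. The piece is nonempty exactly when $\underline S+\overline R\leq W_{\mathrm{nc}}(\Gamma)$; if $W_{\mathrm{nc}}=-\infty$, it is empty. When it is nonempty, the point $(\underline S,\overline R)$ lies in it, so its minimum welfare is exactly $\underline S+\overline R$.

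For the constant piece, the welfare of a point is $U_S+\overline R$ with $U_S\in[\underline S_R,W_{\mathrm{const}}-\overline R]$. This interval is nonempty: taking $a$ with $R_a=\overline R$ and $S_a=\underline S_R$ gives $W_{\mathrm{const}}\geq W_a=\underline S_R+\overline R$. Its minimum welfare is therefore $\underline S_R+\overline R$, attained at $(\underline S_R,\overline R)$.

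Now compare the two pieces. Since $\underline S\leq\underline S_R$, we have $\underline S+\overline R\leq\underline S_R+\overline R$.
\begin{itemize}
\item In the first case, $\underline S+\overline R\leq W_{\mathrm{nc}}(\Gamma)$, so the nonconstant piece is nonempty. The overall minimum is $\min\{\underline S+\overline R,\underline S_R+\overline R\}=\underline S+\overline R$.
\item Otherwise the nonconstant piece is empty, because any of its points would give $\underline S+\overline R\leq U_S+U_R\leq W_{\mathrm{nc}}$. The minimum is then $\underline S_R+\overline R$.
\end{itemize}
In both cases the infimum is attained, so it is a minimum.

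The only point needing care is the emptiness argument for $\mathcal U_{\mathrm{nc}}$, including the convention $W_{\mathrm{nc}}=-\infty$ when no nonconstant rule exists (for example, when there is a single state). Nonemptiness of the whole equilibrium set, which the definition of $\mathrm{EQ}^-$ requires, is guaranteed because $\mathcal U_{\mathrm c}$ is always nonempty.
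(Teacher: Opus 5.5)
Your proposal is correct and follows the paper's argument. You minimize $U_S+U_R$ over $\mathcal U_{\mathrm{nc}}(\Gamma)\cup\mathcal U_{\mathrm c}(\Gamma)$ from Theorem~\ref{thm:receiver-first-payoffs}, taking $(\underline S,\overline R)$ when the nonconstant region is nonempty and $(\underline S_R,\overline R)$ otherwise, and you also make explicit details the paper leaves implicit: $\underline S\le\underline S_R$, nonemptiness of $\mathcal U_{\mathrm c}(\Gamma)$ (hence of the equilibrium set), and the $W_{\mathrm{nc}}=-\infty$ convention.
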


When the nonconstant region is nonempty, it contains
the pair of participation payoffs
$(\underline S,\overline R)$, which minimizes welfare.
Otherwise, only the constant region remains, whose
lowest sender payoff is $\underline S_R$.

Two further consequences describe efficient
payoff divisions.

\begin{corollary}\label{cor:receiver-first-nonconstant}
Suppose $\Gamma$ has binary actions and admits an efficient
nonconstant action rule. Then $(U_S,U_R)$ is achieved by
a sequential equilibrium of $\Gamma^{R\to S}$ if and
only if
\[
    U_S\geq\underline S,\qquad
    U_R\geq b_R(\Gamma),\qquad
    U_S+U_R\leq W^*(\Gamma).
\]
\end{corollary}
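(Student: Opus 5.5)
The plan is to derive this corollary directly from Theorem~\ref{thm:receiver-first-payoffs}. We show that, when an efficient nonconstant action rule exists, the union $\mathcal U_{\mathrm{nc}}(\Gamma)\cup\mathcal U_{\mathrm c}(\Gamma)$ coincides with the set described in the statement.

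\textbf{Step 1: $W_{\mathrm{nc}}(\Gamma)=W^*(\Gamma)$.} An efficient nonconstant rule $f$ has original welfare $W^*(\Gamma)$. Every deterministic rule has welfare at most $W^*(\Gamma)$, so $W_{\mathrm{nc}}(\Gamma)=W^*(\Gamma)$. Recalling $\overline R=b_R(\Gamma)$, the region $\mathcal U_{\mathrm{nc}}(\Gamma)$ is then exactly the set $\{U_S\geq\underline S,\ U_R\geq b_R(\Gamma),\ U_S+U_R\leq W^*(\Gamma)\}$ appearing in the statement.

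\textbf{Step 2: $\mathcal U_{\mathrm c}(\Gamma)\subseteq\mathcal U_{\mathrm{nc}}(\Gamma)$.} Take $(U_S,\overline R)$ with $\underline S_R\leq U_S\leq W_{\mathrm{const}}(\Gamma)-\overline R$. Three checks are needed.
\begin{itemize}
\item The receiver coordinate equals $\overline R=b_R(\Gamma)$, so the receiver bound holds.
\item Since $\underline S_R$ is a minimum of $S_a$ over a subset of actions, $\underline S_R\geq\underline S$, and hence $U_S\geq\underline S$.
\item The welfare is $U_S+\overline R\leq W_{\mathrm{const}}(\Gamma)\leq W^*(\Gamma)$, because every constant action's welfare is bounded by the statewise maximum.
\end{itemize}
So the union collapses to $\mathcal U_{\mathrm{nc}}(\Gamma)$, and Theorem~\ref{thm:receiver-first-payoffs} gives both directions of the equivalence.

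No step is a real obstacle. The only point needing care is that the set $\mathcal U_{\mathrm c}(\Gamma)$ may be empty when $\underline S_R>W_{\mathrm{const}}(\Gamma)-\overline R$, and the inclusion argument in Step 2 holds trivially in that case.
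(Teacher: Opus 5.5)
Your proposal is correct and follows the paper's reasoning: you specialize Theorem~\ref{thm:receiver-first-payoffs} using $W_{\mathrm{nc}}(\Gamma)=W^*(\Gamma)$. Your explicit check that $\mathcal U_{\mathrm c}(\Gamma)\subseteq\mathcal U_{\mathrm{nc}}(\Gamma)$, via $\underline S_R\geq\underline S$ and $W_{\mathrm{const}}(\Gamma)\leq W^*(\Gamma)$, spells out a step the paper leaves implicit when it says participation and total welfare are the only restrictions.
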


Here, $W_{\mathrm{nc}}=W^*$, so participation and
total welfare are the only restrictions.
If no nonconstant efficient rule exists, the
division of efficient welfare is instead unique.

\begin{corollary}\label{cor:receiver-first-constant}
Suppose $\Gamma$ has binary actions and admits no efficient
nonconstant action rule. Then the unique expected utility
profile achieved by welfare-maximizing sequential equilibria
of $\Gamma^{R\to S}$ is
\[
    \bigl(W^*(\Gamma)-b_R(\Gamma),\ b_R(\Gamma)\bigr).
\]
\end{corollary}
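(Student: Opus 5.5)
The plan is to read Corollary~\ref{cor:receiver-first-constant} off Theorem~\ref{thm:receiver-first-payoffs}, using Theorem~\ref{thm:opt-welfare} to identify the maximal welfare. By Theorem~\ref{thm:opt-welfare}, $\mathrm{EQ}_{R\to S,w}^{+}(\Gamma)=W^*(\Gamma)$ and this value is attained. Hence the welfare-maximizing sequential equilibria of $\Gamma^{R\to S}$ are exactly those whose payoff profile $(U_S,U_R)$ satisfies $U_S+U_R=W^*(\Gamma)$. By Theorem~\ref{thm:receiver-first-payoffs}, every such profile lies in $\mathcal U_{\mathrm{nc}}(\Gamma)\cup\mathcal U_{\mathrm c}(\Gamma)$, and conversely every profile there is achieved. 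So it suffices to determine the points of this union on the line $U_S+U_R=W^*(\Gamma)$.

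First I will show that no such point lies in $\mathcal U_{\mathrm{nc}}(\Gamma)$. Every nonconstant deterministic rule $f$ satisfies $\mathbb{E}[w(\theta,f(\theta))]\leq W^*(\Gamma)$, with equality exactly when $f$ is efficient. Since there are finitely many rules, the maximum defining $W_{\mathrm{nc}}(\Gamma)$ is attained. Under the hypothesis that no nonconstant efficient rule exists, this gives $W_{\mathrm{nc}}(\Gamma)<W^*(\Gamma)$, including the case $W_{\mathrm{nc}}=-\infty$ when $n=1$. Therefore every profile in $\mathcal U_{\mathrm{nc}}(\Gamma)$ has $U_S+U_R\leq W_{\mathrm{nc}}(\Gamma)<W^*(\Gamma)$, so none of them is welfare-maximizing.

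It remains to handle $\mathcal U_{\mathrm c}(\Gamma)$. Every profile in this set has $U_R=\overline R=b_R(\Gamma)$ and $U_S+U_R\leq W_{\mathrm{const}}(\Gamma)\leq W^*(\Gamma)$. Since Theorem~\ref{thm:opt-welfare} guarantees that welfare $W^*(\Gamma)$ is attained, and only $\mathcal U_{\mathrm c}(\Gamma)$ remains, some point of it reaches $W^*(\Gamma)$. This forces $W_{\mathrm{const}}(\Gamma)=W^*(\Gamma)$; one can also see this directly, since some efficient rule exists and it must be constant. On the line $U_R=b_R(\Gamma)$, the condition $U_S+U_R=W^*(\Gamma)$ pins down $U_S=W^*(\Gamma)-b_R(\Gamma)$, which is the upper endpoint of the interval defining $\mathcal U_{\mathrm c}(\Gamma)$. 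The lower bound $\underline S_R\leq W_{\mathrm{const}}-\overline R$ holds automatically, because the action attaining $\overline R$ has $S_a\leq W_{\mathrm{const}}-\overline R$. So the profile is admissible and is achieved.

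The only delicate point is confirming that the theorem's characterization includes this endpoint and that nonexistence of a nonconstant efficient rule gives a strict gap $W_{\mathrm{nc}}<W^*$. Both follow from finiteness of $\Theta$ and the closed inequalities in the definitions, so I do not expect a real obstacle.
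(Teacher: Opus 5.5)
Your proposal is correct and follows the route the paper intends. The paper gives no separate proof, but its remark that the division of efficient welfare is unique reads the corollary off Theorem~\ref{thm:receiver-first-payoffs} exactly as you do. The absence of an efficient nonconstant rule gives $W_{\mathrm{nc}}(\Gamma)<W^*(\Gamma)$, where the full-support prior justifies your ``equality exactly when $f$ is efficient'' step. So $\mathcal U_{\mathrm{nc}}(\Gamma)$ contains no efficient profile, and the constant region meets $U_S+U_R=W^*(\Gamma)$ only at its admissible upper endpoint $\bigl(W^*(\Gamma)-b_R(\Gamma),\,b_R(\Gamma)\bigr)$, since $W_{\mathrm{const}}(\Gamma)=W^*(\Gamma)$.
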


Although receiver-first commitments always permit
efficiency, they need not preserve welfare in every
equilibrium.

\begin{corollary}\label{cor:receiver-first-welfare-loss}
There exists a sender-receiver game $\Gamma$ with two
states and binary actions such that every equilibrium
of the original game has welfare $6$, whereas
\[
    \mathrm{EQ}_{R\to S,w}^{-}(\Gamma)=2
    <6
    =\mathrm{EQ}_{R\to S,w}^{+}(\Gamma).
\]
\end{corollary}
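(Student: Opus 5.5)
We take the example game from Proposition~\ref{prop:simultaneous-welfare}. The two states $L,R$ are equally likely, and the payoffs $(u_S,u_R)$ are $(4,4),(0,0)$ in state $L$ and $(4,0),(0,2)$ in state $R$ for actions $0,1$. The plan is to compute the benchmarks appearing in Theorem~\ref{thm:opt-welfare} and Corollary~\ref{cor:receiver-first-worst-welfare}, read off both welfare extrema, and separately verify the claim about the original game.

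\textbf{Benchmarks.} First, $W^*(\Gamma)=\tfrac12\cdot 8+\tfrac12\cdot 4=6$. The constant-action values are $S_0=4$, $R_0=2$ and $S_1=0$, $R_1=1$. Hence $\underline S=0$, $\overline R=b_R(\Gamma)=2$, and $\underline S+\overline R=2$. There are two nonconstant rules. The rule $(L\mapsto 0,R\mapsto 1)$ has welfare $\tfrac12\cdot 8+\tfrac12\cdot 2=5$, and the rule $(L\mapsto1,R\mapsto0)$ has welfare $\tfrac12\cdot 0+\tfrac12\cdot 4=2$. So $W_{\mathrm{nc}}(\Gamma)=5\geq 2=\underline S+\overline R$. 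The first case of Corollary~\ref{cor:receiver-first-worst-welfare} then gives $\mathrm{EQ}_{R\to S,w}^-(\Gamma)=2$, and Theorem~\ref{thm:opt-welfare} gives $\mathrm{EQ}_{R\to S,w}^+(\Gamma)=W^*(\Gamma)=6$.

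\textbf{Original game.} It remains to show that every sequential equilibrium of $\Gamma$ has welfare $6$, that is, the receiver plays $0$ with probability one. In both states the sender gets $4$ from action $0$ and $0$ from action $1$. So every sender type ranks message responses by the probability $q(m)$ of action $0$. Every on-path message must therefore attain the same maximal value $q^*$.

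Suppose $q^*<1$. Then at every on-path message action $1$ is in the receiver's support, so action $1$ is weakly optimal at each on-path posterior. These posteriors average to the prior, so action $1$ would be weakly optimal at the prior. This contradicts $R_0=2>R_1=1$. Hence $q^*=1$, action $0$ is played in both states, and welfare is $W_0=4+2=6$. Proposition~\ref{prop:binary-pareto} is consistent with this: $d_R=R_0=2=b_R$.

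\textbf{Main obstacle.} There is essentially none. The only care needed is to confirm that $W_{\mathrm{nc}}\ge \underline S+\overline R$, so that the first branch of Corollary~\ref{cor:receiver-first-worst-welfare} applies, and to handle the averaging argument in the original game, which also covers mixed receiver responses.
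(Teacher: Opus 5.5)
Your proof is correct and follows the paper's route: the same example game from Proposition~\ref{prop:simultaneous-welfare}, the worst-equilibrium welfare read off from the binary receiver-first characterization (the first branch of Corollary~\ref{cor:receiver-first-worst-welfare} applies because $W_{\mathrm{nc}}(\Gamma)=5\geq\underline S+\overline R=2$), and the best welfare from Theorem~\ref{thm:opt-welfare}. For the original game, your averaging argument is only a minor variant of the paper's, which notes that a state-independent action distribution gives the receiver $1+q$, below its guarantee of $2$ unless $q=1$.
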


The game in Proposition~\ref{prop:simultaneous-welfare}
also witnesses this loss. It has
$\underline S=0$, $\overline R=2$, and a nonconstant
rule with welfare $5$. The characterization therefore
implements the participation payoffs $(0,2)$,
while Theorem~\ref{thm:opt-welfare} ensures that
welfare $6$ remains attainable.

Finally, the characterization is constructive.

\begin{corollary}\label{cor:receiver-first-computation}
There exists an $O(n)$ algorithm that, given a
sender-receiver game $\Gamma$ with binary actions and
$n$ states and a target utility profile $(U_S,U_R)$,
decides whether this profile is implementable by a
sequential equilibrium of $\Gamma^{R\to S}$ and,
if so, constructs such an equilibrium.

Moreover, a sequential equilibrium maximizing social
welfare, the sender's expected utility, or the receiver's
expected utility can be computed in $O(n)$
arithmetic operations.
\end{corollary}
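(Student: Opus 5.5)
The plan is to read the algorithm directly off Theorem~\ref{thm:receiver-first-payoffs}. The decision problem is membership in $\mathcal U_{\mathrm{nc}}(\Gamma)\cup\mathcal U_{\mathrm c}(\Gamma)$, and every quantity defining these sets can be computed in $O(n)$ operations.

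First, we compute $S_0,S_1,R_0,R_1$ in one pass over the states. From these we obtain $\underline S$, $\overline R=b_R(\Gamma)$, $\underline S_R$, and $W_{\mathrm{const}}(\Gamma)=\max\{W_0,W_1\}$. Next, we compute $W^*(\Gamma)$ and the efficient pointwise rule $f^*(\theta)\in\arg\max_a w(\theta,a)$, again in one pass. For $W_{\mathrm{nc}}(\Gamma)$ we argue as follows. If $n=1$, there is no nonconstant rule, so $W_{\mathrm{nc}}=-\infty$. If $f^*$ can be chosen nonconstant, then $W_{\mathrm{nc}}=W^*$; ties can be exploited to make it nonconstant. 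Otherwise every efficient rule is the constant $a$, and the best nonconstant rule flips exactly one state, namely the one minimizing $p(\theta)\bigl(w(\theta,a)-w(\theta,1-a)\bigr)$. This gives
\[
W_{\mathrm{nc}}=W^*-\min_\theta p(\theta)\bigl(w(\theta,a)-w(\theta,1-a)\bigr).
\]
The justification is that the loss of a rule is additive over the states where it deviates from $a$, and every such term is nonnegative, so a single deviation is optimal. All of this is linear time.

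Membership is then a constant number of comparisons. For the construction, we invoke the explicit equilibria behind the sufficiency direction of Theorem~\ref{thm:receiver-first-payoffs} (Appendix~\ref{sec:receiver-first-constructions}). In the nonconstant case, we take the maximizing nonconstant rule $f$, the receiver's penalty scheme, and on-path transfers and burns dividing $W_{\mathrm{nc}}$ as $(U_S,U_R)$ plus burned surplus. In the constant case, the receiver makes no payments and the sender compensates for a constant action with $R_a=\overline R$. These schemes are specified state by state with $O(1)$ data per outcome, and the off-path rules (babbling continuations and the follower's response family) are described by fixed formulas rather than enumerated. This fits the output convention of the computational conventions subsection. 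The main obstacle is to check that these constructions really use only $O(n)$ arithmetic, including the sizes of the penalties and burns. I expect those constants to be simple expressions in the maxima of $|u_i|$ and $1/p(\theta)$, computable in one pass, and the bit complexity stays polynomial because all quantities are rational combinations of the inputs.

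For optimization, we maximize over the region directly. The maximal welfare is $W^*$, by Theorem~\ref{thm:opt-welfare}; we take any feasible point attaining it, for instance from Corollary~\ref{cor:receiver-first-nonconstant} or Corollary~\ref{cor:receiver-first-constant}. The sender-optimal point is the better of $(W_{\mathrm{nc}}-\overline R,\overline R)$, when $\mathcal U_{\mathrm{nc}}$ is nonempty, and $(W_{\mathrm{const}}-\overline R,\overline R)$. The receiver-optimal point is $(\underline S,W_{\mathrm{nc}}-\underline S)$ if $\underline S+\overline R\le W_{\mathrm{nc}}$, and otherwise $\overline R$ at any constant point. We then run the constructive procedure above on the chosen target.
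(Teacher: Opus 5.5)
Your overall route matches the paper's: one pass for the constant-action benchmarks, and $W_{\mathrm{nc}}(\Gamma)$ via a single minimum-loss flip of an efficient rule. Membership in $\mathcal U_{\mathrm{nc}}(\Gamma)\cup\mathcal U_{\mathrm c}(\Gamma)$ then takes a constant number of comparisons, followed by the sufficiency constructions of Theorem~\ref{thm:receiver-first-payoffs}. Your optimal targets for welfare, sender utility, and receiver utility are also correct.

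The gap is in your description of the constant-region construction. Suppose the receiver makes no payments. Proposition~\ref{prop:sen-pessimistic}, applied to the game after that commitment, lets the sender secure $V(\mathbf 0)=W_{\mathrm{const}}(\Gamma)-\overline R$ against every continuation selection. The receiver can still secure $\overline R$. So every equilibrium following $\rho_R=\mathbf 0$ gives the sender at least $W_{\mathrm{const}}(\Gamma)-\overline R$. With constant play it gives exactly $(W_{\mathrm{const}}(\Gamma)-\overline R,\overline R)$. Your construction therefore implements only the top endpoint of $\mathcal U_{\mathrm c}(\Gamma)$. For targets with $\underline S_R\le U_S<W_{\mathrm{const}}(\Gamma)-\overline R$, the decision is correct but the construction step fails.

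The missing idea is realized receiver burning. The necessity proof shows $B_S(c)=0$ in any constant-outcome equilibrium. Hence the welfare shortfall $W_{\mathrm{const}}(\Gamma)-\overline R-U_S$ must come from the receiver's own burns on path.
\begin{itemize}
\item Take an original receiver-best action $j$ with $S_j=\underline S_R$ and $c\in\arg\max_a W_a$.
\item If $c\neq j$, the receiver burns $D:=W_{\mathrm{const}}(\Gamma)-\overline R-U_S$ on $c$ in every state. The untouched action $j$ keeps $b^\rho=\overline R$.
\item This gives $W_c^\rho=U_S+\overline R\geq W_j^\rho$, hence $V(\rho_R)=U_S$.
\item The sender's response from Lemma~\ref{lem:receiver-first-babbling} transfers $\overline R-R_c+D$ on $c$. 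This delivers $(U_S,\overline R)$, with the burn realized.
\item If $c=j$, the region is the single point $(S_j,\overline R)$, and zero payments suffice.
\end{itemize}
This is still $O(n)$.

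Your optimization step survives if, in the constant case, you select the endpoint $(W_{\mathrm{const}}(\Gamma)-\overline R,\overline R)$ rather than ``any constant point''. This endpoint is also the welfare-maximizing profile $(W^*(\Gamma)-b_R(\Gamma),b_R(\Gamma))$ when no nonconstant efficient rule exists.

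The arithmetic-cost concern you flag for the nonconstant case is already settled by the computation paragraph of Lemma~\ref{lem:nonconstant-implementation}. It gives $O(n|A|)=O(n)$ with explicit off-path burns.
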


The constant-action benchmarks require a single
pass through the states. To compute $W_{\mathrm{nc}}$,
first choose a welfare-maximizing action in each state.
If this rule is constant and there are at least two
states, switch the action in a state where doing so
causes the smallest expected welfare loss.
The resulting value and the constant-action
benchmarks describe two simple payoff regions.
Membership and optimization over their union then
require only a constant number of additional
comparisons, followed by the corresponding
implementation construction.

\subsection{Optimistic Optimization}

We now turn to the largest equilibrium payoff
available to a unilateral committing player.
For binary actions, Proposition~\ref{prop:binary-pareto}
reduces the deterministic problem to comparing
babbling with sender-preferred obedient continuations.
Despite this common structure, sender and receiver
optimization have different computational properties.

\begin{theorem}\label{thm:opt-rec}
There exists an $O(n)$ algorithm that, given a sender-receiver
game $\Gamma$ with binary actions and $n$ states, computes
a sequential equilibrium of $\Gamma^R$ that gives the
receiver $\OPT_R^{+}(\Gamma)$ expected utility.
\end{theorem}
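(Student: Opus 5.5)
By Proposition~\ref{prop:binary-pareto} applied to each continuation $\Gamma^{(\mathbf 0,\tau_R)}$, the receiver's optimistic value $V_R^+(\mathbf 0,\tau_R)$ equals the larger of two quantities: the babbling payoff of $\Gamma^\tau$, and the obedient payoff $d_R(\Gamma^\tau)$ when the obedience inequalities hold. The plan is to bound both quantities from above by a closed-form value $V^\star$, exhibit a scheme attaining $V^\star$, and note that $V^\star$ is computable in one pass over the states.

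\emph{Upper bound.} For each state $\theta$ and action $a\in\{0,1\}$, define
\[
g_\theta(a):=u_R(\theta,a)-\max\{0,\;u_S(\theta,1-a)-u_S(\theta,a)\}.
\]
Fix any receiver scheme $\tau_R$.
\begin{itemize}
\item A babbling continuation gives the receiver $\max_a R_a(\Gamma^\tau)$. The receiver's own payments only reduce its payoff, so this is at most $b_R(\Gamma)$.
\item In an obedient continuation, let $a=a_S^\tau(\theta)$ be the sender's preferred action under $\tau$. Sender preference gives $u_S(\theta,a)+\tau_R^S(\theta,a)\ge u_S(\theta,1-a)$, so the receiver's payoff in state $\theta$ is at most $u_R(\theta,a)-\tau_R^S(\theta,a)\le g_\theta(a)$.
\item If the induced rule $\theta\mapsto a_S^\tau(\theta)$ is constant, say equal to $a$, the obedient payoff is at most $R_a(\Gamma)\le b_R(\Gamma)$.
\end{itemize}
Hence
\[
\OPT_R^+(\Gamma)\le V^\star:=\max\Bigl\{\,b_R(\Gamma),\ \max_{f\text{ nonconstant}}\textstyle\sum_\theta p(\theta)\,g_\theta(f(\theta))\Bigr\}.
\]

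\emph{Attainment.} If the maximum is $b_R$, the zero scheme with a babbling continuation attains it. Otherwise, take an optimal nonconstant $f$ and build $\tau_R$ as follows.
\begin{itemize}
\item Transfer $\max\{0,u_S(\theta,1-f(\theta))-u_S(\theta,f(\theta))\}$ to the sender at $(\theta,f(\theta))$. This makes the sender weakly prefer $f(\theta)$.
\item Burn a large amount $B$ at each off-rule outcome $(\theta,1-f(\theta))$. These outcomes are never reached, so the burns cost nothing on path. They leave the sender's preferences unchanged and make $f(\theta)$ the receiver's strict preference in every state.
\end{itemize}
Where the transfer creates sender indifference, favorable tie-breaking selects the receiver's preferred action, which is $f(\theta)$; so the sender-preferred rule in $\Gamma^\tau$ is exactly $f$. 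Because $f$ is nonconstant, each constant deviation incurs the burn $B$ in some state with positive probability. Taking $B$ large makes both obedience inequalities $d_R(\Gamma^\tau)\ge R_0(\Gamma^\tau),R_1(\Gamma^\tau)$ hold, so Proposition~\ref{prop:binary-pareto} gives an obedient sequential equilibrium with receiver payoff $\sum_\theta p(\theta)g_\theta(f(\theta))$.

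\emph{Equilibrium of $\Gamma^R$.} After every receiver scheme, select a receiver-optimal continuation: obedient if it is an equilibrium and weakly better for the receiver, babbling otherwise. By Proposition~\ref{prop:binary-pareto} both alternatives are sequential equilibria of the continuation. The chosen scheme is then optimal for the receiver by the upper bound above. This is essentially the attainment situation of Lemma~\ref{lem:opt-eq}.

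\emph{Computation.} The quantity $\max_{f\text{ nonconstant}}$ is computed as in Corollary~\ref{cor:receiver-first-computation}.
\begin{enumerate}
\item Pick $\arg\max_a g_\theta(a)$ in each state.
\item If the resulting rule is constant (and $n\ge2$), flip the state with the smallest loss $p(\theta)\,|g_\theta(0)-g_\theta(1)|$.
\item Compare the result with $b_R$.
\end{enumerate}
Each step is one linear pass, so the algorithm runs in $O(n)$ operations, and the scheme itself is written down in $O(n)$. The burn $B$ can be any value exceeding the receiver's maximal payoff range divided by $\min_\theta p(\theta)$, which has polynomial bit size.

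\emph{Main obstacle.} The delicate step is the attainment argument at ties. It must be verified that favorable tie-breaking, applied with the \emph{modified} payoffs, makes the sender recommend $f(\theta)$ exactly when the transfer equals the sender's loss; otherwise an $\varepsilon$ surplus would be needed and the supremum would not be attained. It must also be checked that the off-path burns do not disturb the sender's message incentives or the off-path beliefs used to complete the obedient profile to a sequential equilibrium. To handle this I would appeal directly to the completion in Proposition~\ref{prop:binary-pareto}, which is stated under favorable tie-breaking and applies verbatim to $\Gamma^\tau$.
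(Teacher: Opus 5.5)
Your proposal is correct and is essentially the paper's proof: reduce via Proposition~\ref{prop:binary-pareto} to babbling versus the sender-preferred obedient profile, bound each state by $g_\theta$ (the receiver's utility net of the minimal transfer making the sender prefer that action), attain this with exact transfers and favorable tie-breaking, and compare with $b_R(\Gamma)$. The nonconstancy restriction and the off-path burns are harmless but unnecessary. The paper maximizes $g_\theta$ statewise, since a constant maximizer is worth at most $b_R(\Gamma)$ anyway. It gets obedience for free whenever this value exceeds $b_R(\Gamma)$, because receiver payments can only lower $R_0$ and $R_1$. It avoids ties by switching only when $\Delta_R(\theta)>\Delta_S(\theta)$.
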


Receiver payments cannot improve its own payoff
in a babbling continuation. In an obedient
continuation, the receiver can instead compensate
the sender for recommending a different action.
At a disagreeing state, changing the sender's
preferred action costs $\Delta_S(\theta)$ and
benefits the receiver by $\Delta_R(\theta)$
before payment. The algorithm makes this change
when $\Delta_R(\theta)>\Delta_S(\theta)$ and
compares the resulting receiver payoff with
$b_R(\Gamma)$. If it exceeds that benchmark,
the recommendation profile is obedient; otherwise,
zero payments and babbling are optimal.
All calculations are statewise and take linear time. See Appendix~\ref{sec:proof-thm-opt-rec} for a full proof.

Sender optimization involves a discrete selection
problem.

\begin{theorem}\label{thm:opt-send-hard}
Given a sender-receiver game $\Gamma$ with binary actions
and rationally encoded data, computing
$\OPT_S^{+}(\Gamma)$ is NP-hard.
In particular, deciding whether
$\OPT_S^{+}(\Gamma)\geq v$ for a given rational threshold
$v$ is NP-hard.
\end{theorem}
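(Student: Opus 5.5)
By Proposition~\ref{prop:binary-pareto}, for each fixed sender scheme $\tau_S$ the sender's best continuation payoff in $\Gamma^{(\tau_S,\mathbf 0)}$ is either the babbling payoff $b_S(\Gamma^\tau)$ or, if obedience holds, the obedient payoff $d_S(\Gamma^\tau)$. I would therefore first reduce $\OPT_S^+(\Gamma)$ to a combinatorial optimization problem, and then reduce Subset Sum (or Partition) to it.

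\textbf{Step 1: reformulation.} Fix the action rule $f:\Theta\to\{0,1\}$ that the sender wants to induce as her preferred recommendation in $\Gamma^\tau$. She can make $f(\theta)=a_S^\tau(\theta)$ at no realized cost by burning a large amount at the off-path outcome $(\theta,1-f(\theta))$ (Remark~\ref{rem:offpath}). Burning at on-path outcomes only lowers her payoff, so on-path payments can be taken to be transfers $t(\theta)\ge0$ at $(\theta,f(\theta))$. Transfers at off-path outcomes only raise $R_a(\Gamma^\tau)$, which hurts obedience, so they can be set to zero. The receiver's two obedience inequalities $d_R\ge R_a$ then become, for $a\in\{0,1\}$,
\[
\sum_{\theta:\,f(\theta)\neq a}p(\theta)\bigl(u_R(\theta,f(\theta))-u_R(\theta,a)+t(\theta)\bigr)\ \ge\ 0 .
\]
The sender's obedient payoff is $\sum_\theta p(\theta)\bigl(u_S(\theta,f(\theta))-t(\theta)\bigr)$. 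Hence $\OPT_S^+(\Gamma)$ is the maximum of the babbling value and
\[
\max_{f}\Bigl[\sum_\theta p(\theta)u_S(\theta,f(\theta)) - \text{(min transfer cost making both inequalities hold)}\Bigr].
\]
For fixed $f$ the inner problem is a tiny LP: each transfer helps exactly the one constraint with $a\neq f(\theta)$. So the minimum cost is $\max\{0,-g_0(f)\}+\max\{0,-g_1(f)\}$, where $g_a(f)$ is the transfer-free slack of constraint $a$. The whole difficulty is therefore the choice of $f$, a binary selection problem.

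\textbf{Step 2: the gadget.} Given Subset Sum weights $w_1,\dots,w_k$ and target $T$, I would create one disagreeing item state $\theta_i$ per weight. In $\theta_i$ the sender prefers action $1$ and the receiver prefers action $0$, with $p(\theta_i)\Delta_S(\theta_i)$ and $p(\theta_i)\Delta_R(\theta_i)$ both proportional to $w_i$. I would add a few anchor states with large payoffs that pin the baseline slacks $g_0,g_1$. The aim is that choosing $f(\theta_i)=0$ on a set $F$ shifts $g_0$ upward and $g_1$ downward by amounts proportional to $\sum_{i\in F}w_i$. The sender's value is then a concave, tent-shaped function of $\sum_F w_i$ that reaches the threshold $v$ only when $\sum_F w_i=T$ exactly. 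The natural way to achieve this is to give the sender slightly steeper losses than the receiver's gains, so that each unit of deviation from $T$ in either direction forces a strictly costly compensating transfer. Finally I would set the babbling value strictly below $v$ and check that all numbers have polynomial bit length.

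\textbf{Main obstacle.} The hard part is tuning the ratios so that transfers can never substitute for the discrete choice. Because transfers are continuous and cost the sender exactly one unit per unit of receiver slack, the item states must have flip ratios that make the exact hit strictly better than any fractional compensation. Both obedience constraints must be active in the right direction, one penalizing undershoot and the other penalizing overshoot. My first attempt would be to set $\Delta_S(\theta_i)=(1+\varepsilon)\Delta_R(\theta_i)$ together with anchor states that place both slacks at zero exactly at sum $T$. I would then verify that the piecewise-linear value has a unique maximum there, with a gap of order $\varepsilon\min_i w_i$, so that integrality of the weights separates yes-instances from no-instances.
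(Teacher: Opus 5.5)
Your Step~1 is sound. For a fixed rule $f$, the cheapest compensation is $(-g_0)_++(-g_1)_+$, the deterministic form of $h(v)=(R_0-v)_++(R_1-v)_+$ from the proof of Theorem~\ref{thm:opt-send-rand}. The gap is in Step~2, whose mechanism cannot be built. Because the terms with $f(\theta)=a$ vanish, $g_a(f)=\sum_\theta p(\theta)\bigl(u_R(\theta,f(\theta))-u_R(\theta,a)\bigr)=d_R(f)-R_a$, and $R_a$ does not depend on $f$. Flipping an item state (sender prefers $1$, receiver prefers $0$) to action $0$ raises $d_R$ by $p(\theta_i)\Delta_R(\theta_i)$. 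It therefore raises \emph{both} slacks by that amount, whatever the anchors. No obedience constraint ever penalizes overshoot, and the required compensation is nonincreasing in the selected weight. Your planned verification would not go through either. In your model ($g_0$ rising, $g_1$ falling, both zero at $s=T$), the choice $\Delta_S=(1+\varepsilon)\Delta_R$ makes each unit of flipping below $T$ cost $1+\varepsilon$ while saving only $1$. The value would then decrease in $s$ on both sides and peak at $s=0$.

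The route can be repaired, but the tent must come from the convexity of $h$, and overshoot is penalized only by the sender's own switching cost. Write $p(\theta_i)\Delta_R(\theta_i)=\gamma w_i$, $\beta:=\Delta_S/\Delta_R$ on item states, and $s:=\sum_{i\in F}w_i$. Let $F_S$ be the sender's original payoff at $s=0$. If both slacks start at $-\gamma T$, the value is $F_S-2\gamma T+(2-\beta)\gamma s$ for $s\le T$ and $F_S-\beta\gamma s$ for $s\ge T$. So any $\beta\in(0,2)$, including $1+\varepsilon$, peaks at $s=T$. By integrality, the gap is $\min\{2-\beta,\beta\}\gamma$, not of order $\varepsilon\min_i w_i$.

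However, making $g_1$ negative initially requires an anchor where the sender strictly prefers $0$ and the receiver prefers $1$. That anchor is itself a flip opportunity, so its sender cost must be made large, for example more than twice its receiver gain. It is simpler to keep $g_1$ slack and take $\beta<1$, which gives value $F_S-\beta\gamma s-\gamma(T-s)_+$.

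The paper pushes this idea further and reduces from Knapsack. Only the recommendation-$1$ constraint binds, and state $L$ keeps the other constraint slack and helps hold babbling at a nonpositive sender payoff. The paper also decouples the sender's cost $w_i$ from the receiver's gain $C_1v_i$, with $C_1=B+1>B+\epsilon$. Any integer shortfall in value would then require transfers exceeding all sender utility. No tent balancing is needed, and the optimum is exactly $(B-\kappa+\epsilon)/N$, where $\kappa$ is the minimum weight of a subset reaching value $V$.
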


The reduction encodes a knapsack item in each
of a collection of disagreeing states. By promising
a payment on her originally preferred action,
the sender can make herself willing to recommend
the receiver's preferred action instead.
This sacrifices a prescribed amount of sender
utility and creates a prescribed gain for the
receiver. The sender must select enough such
changes to make recommendations obedient while
limiting her own loss.

Payments below the switching threshold can also
affect obedience, but only continuously.
The reduction scales the receiver's utility gains
so that these smaller adjustments cannot bridge
the gap left by an insufficient collection of
switched states. Additional states control the
babbling alternatives. Thus an equilibrium meeting
the sender's target payoff exists exactly when
the corresponding knapsack instance is feasible. A full proof can be found in Appendix~\ref{sec:proof-opt-send-hard}.

Private randomization makes these changes divisible.
The sender publicly commits to a lottery over
payment schemes and privately observes its
realization before communicating.

\begin{theorem}\label{thm:opt-send-rand}
There exists an $O(n\log n)$ algorithm that, given a
sender-receiver game $\Gamma$ with binary actions and
$n$ states, computes a sequential equilibrium of
$\Gamma^S_\Delta$ giving the sender
$\OPT_{S,\Delta}^{+}(\Gamma)$ expected utility.

The equilibrium commitment is supported on at most
two payment schemes, which coincide at every state
except possibly one. In the constructed equilibrium,
the receiver's action is deterministic conditional
on each state except possibly that same state.
\end{theorem}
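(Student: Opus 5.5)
The plan is to reduce the randomized sender problem to a fractional knapsack problem, solve it by sorting, and then implement the fractional solution with a two-scheme lottery.

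\textbf{Reduction.} First I would establish an upper bound through a relaxation. Fix any equilibrium of $\Gamma^S_\Delta$ under a lottery $\rho_S$. It induces an outcome distribution $x(\theta,a)$, and it has some expected payment profile. As in Proposition~\ref{prop:binary-pareto}, the receiver can always ignore messages, so $U_R \ge b_R(\Gamma)$. Let $q(\theta)$ be the probability of action $1$ in state $\theta$. I expect the binding constraints to be the following.
\begin{enumerate}[label=(\roman*)]
\item Receiver obedience, which reduces to $U_R\ge \max(R_0,R_1)$ after pooling recommendations.
\item Sender incentive compatibility at each realized scheme and state.
\end{enumerate}
Constraint (ii) forces the sender either to receive her preferred action $a_S(\theta)$ or to have burned or transferred at least $\Delta_S(\theta)$ on that action, so that she is indifferent or prefers the other recommendation. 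Averaging over realizations, a switch to $a_R(\theta)$ in a disagreeing state with probability $\lambda_\theta$ costs the sender $\lambda_\theta\Delta_S(\theta)$ and gains the receiver $\lambda_\theta\Delta_R(\theta)$. Any transfers to the receiver cost the sender one for one.

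This should yield the linear program: maximize
\[
d_S-\sum_\theta p(\theta)\lambda_\theta\Delta_S(\theta)-T
\]
over $\lambda\in[0,1]^n$ and $T\ge0$, subject to
\[
d_R+\sum_\theta p(\theta)\lambda_\theta\Delta_R(\theta)+T\ge b_R .
\]
The value is to be compared with the babbling alternative. Here the sender can also pay transfers to make the receiver's babbling default a preferred constant action, which gives a value like $\max_a S_a$ adjusted by the transfer needed to make $a$ prior-optimal. Mixing between a babbling mode and an obedient mode should not help beyond taking the better of the two, because the sender's realized scheme is private and a single public recommendation structure suffices. I would argue this by showing that a lottery inducing babbling under some realizations and informative play under others is dominated.

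\textbf{Algorithm.} The LP is a fractional knapsack. Every transfer unit has ratio $1$, while switching a disagreeing state has cost-to-gain ratio $\Delta_S(\theta)/\Delta_R(\theta)$. Sorting the disagreeing states by this ratio gives $O(n\log n)$. Greedily fill the deficit $b_R-d_R$ using states with ratio below $1$, then use transfers. The greedy solution has at most one fractional $\lambda_\theta$.

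\textbf{Implementation.} Consider scheme $\tau^1$, which burns just over $\Delta_S$ at the sender-preferred outcome in every fully switched state, plus the fractional state $\theta^*$, and makes the required transfer at some action. Let $\tau^0$ be identical except that it does not switch at $\theta^*$. Mixing these with weights $\lambda_{\theta^*}$ and $1-\lambda_{\theta^*}$ gives two schemes differing at one state. The sender recommends her net-preferred action, so the receiver's action is deterministic given the state except at $\theta^*$.

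The burns lie on off-path outcomes, so they are never paid. To make the sender strictly, rather than weakly, switch, I would use a burn of exactly $\Delta_S$ together with the tie-breaking rule. Off-path messages get beliefs supporting the favorable babbling punishment, and deviations from the lottery trigger babbling continuations. The receiver's obedience follows from the constraint, with its posterior computed across both scheme realizations.

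\textbf{Main obstacle.} I expect the hardest step to be the upper bound. It must show that no lottery, with arbitrary private realizations and arbitrary sequential equilibria including partially informative or mixed ones, beats the LP. The plan there is to aggregate each realization's incentive constraint state by state and then use the binary-action structure to show that the receiver's payoff depends only on the action rule. This lets the fractional relaxation dominate.
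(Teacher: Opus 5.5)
\textbf{Gap: the obedience constraint in your linear program is wrong.} Your constraint $d_R+\sum_\theta p(\theta)\lambda_\theta\Delta_R(\theta)+T\ge b_R$ treats the receiver's constant-action benchmarks as fixed, but transfers move them. The condition $U_R\ge\max\{R_0,R_1\}$ is correct only when $R_a$ is evaluated \emph{after} payments.

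Payments depend only on the verified state, the action and the private realization. So a transfer on outcomes where $a$ is prescribed is also collected by the deviation ``always play $a$''. It raises the recommended payoff and $R_a$ by the same amount, and it helps only obedience to recommendation $a$.

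Aggregate the receiver's optimality conditions over messages at which $a$ is played, weighted by their probabilities. This gives, for each action separately, $P_a\ge(R_{1-a}-v)_{+}$. Here $v$ is the receiver's \emph{original} payoff from the allocation, and $P_a$ is the expected transfer paid on outcomes with action $a$. The minimum compensation is therefore $h(v)=(R_0-v)_{+}+(R_1-v)_{+}$, not $(b_R-v)_{+}$. While $v<\min\{R_0,R_1\}$, each unit of transfer closes only half a unit of gap. Your LP is a strict relaxation, and the greedy rule ``switch only when $\Delta_R>\Delta_S$, then transfer'' is not optimal.

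\textbf{Counterexample.} Take two equally likely states. Let $(u_S,u_R)$ be $(0,3)$ for action $0$ and $(4,0)$ for action $1$ in the first state, and $(4,0)$ and $(0,3)$ in the second.
\begin{itemize}
\item Then $R_0=R_1=3/2$, the sender-preferred profile gives the receiver $0$, and $\Delta_R/\Delta_S=3/4$ in both states.
\item Your procedure switches nothing, uses $T=3/2$, and claims sender value $5/2$.
\item Obedience actually requires $P_0\ge3/2$ and $P_1\ge3/2$. With total switching mass $s\in[0,2]$, the sender gets at most $4-2s-3(1-s)_{+}\le2$. This is attained at $s=1$, and babbling is one such allocation.
\end{itemize}
So your algorithm outputs an unattainable value, and no split of $T=3/2$ across the two recommendations supports an obedient continuation.

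\textbf{How to repair it.} Maximize $F_S-\sum_\theta c_\theta x_\theta-h(v^0+\sum_\theta g_\theta x_\theta)$ instead. The marginal value of receiver gain is $2$, $1$ and $0$ below, between and above the two benchmarks. The optimal order is then:
\begin{enumerate}
\item switch states with $\Delta_R/\Delta_S\ge1$;
\item transfer on recommendations of the worse constant action until the benchmarks coincide;
\item switch the remaining states with ratio above $1/2$;
\item only then use paired transfers, placing $T_a=(R_{1-a}-v^*)_{+}$ on recommendation $a$.
\end{enumerate}
This still leaves one fractional state and the two-scheme structure. The bound $P_a\ge(R_{1-a}-v)_{+}$ holds in every continuation equilibrium of every lottery. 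So babbling, buying the other constant action, and mixtures of babbling and informative realizations are already covered, and your planned separate dominance argument is unnecessary.

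\textbf{Secondary problem in the implementation.} You must burn, on the non-prescribed action in \emph{every} state, more than $u_S(\theta,1-a)-u_S(\theta,a)+z_a$, where $z_a$ is the transfer on the prescribed action $a$. Burning only $\Delta_S(\theta)$ on switched states fails. In a switched state, any $z_a>0$ makes the sender strictly prefer her original action. In an unswitched state, any $z_a>\Delta_S(\theta)$ does the same.
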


There are two ways to make recommendations acceptable:
change the recommended actions, or compensate the
receiver for following them. Changing the recommendation
at a disagreeing state sacrifices
$p(\theta)\Delta_S(\theta)$ of original sender utility
and gains $p(\theta)\Delta_R(\theta)$ of original
receiver utility. Private randomization permits
any fraction of this change.

The transfer cost also has a simple structure.
If a proposed action allocation gives the receiver
original expected utility $v$, the minimum total
expected transfer needed for obedience is
\[
    (R_0-v)_+ +(R_1-v)_+,
    \qquad
    x_+:=\max\{x,0\}.
\]
Each term compensates for one of the receiver's
constant-action alternatives. The proof shows
both that these payments are necessary and that
suitable privately realized contracts attain them,
using off-path burns to enforce the prescribed
recommendations.

The algorithm therefore sorts disagreeing states
by $\Delta_R(\theta)/\Delta_S(\theta)$ and processes
recommendation changes in that order, balancing
their sender cost against the reduction in required
transfers. Since the transfer cost is piecewise
linear in $v$, at most one state needs to be
partially switched. Sorting accounts for the
$O(n\log n)$ running time. The proof bounds
arbitrary randomized commitments, so the
optimization also covers babbling and commitments
that induce a different constant action (see Appendix~\ref{sec:proof-thm-send-rand}).

Restricting randomization to one state also
bounds its advantage over deterministic schemes.

\begin{corollary}\label{cor:randomization-gap}
For every sender-receiver game $\Gamma$ with binary actions,
\[
    0\leq
    \OPT_{S,\Delta}^{+}(\Gamma)-\OPT_S^{+}(\Gamma)
    \leq
    \max_{\theta\in\Theta}p(\theta)\Delta_S(\theta).
\]
Moreover, an equilibrium of $\Gamma^S$ whose sender
payoff is at least
\[
    \OPT_{S,\Delta}^{+}(\Gamma)
    -\max_{\theta\in\Theta}p(\theta)\Delta_S(\theta)
\]
can be computed in $O(n\log n)$ arithmetic operations.
\end{corollary}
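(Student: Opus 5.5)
The lower bound $\OPT_{S,\Delta}^{+}(\Gamma)\geq\OPT_S^{+}(\Gamma)$ is immediate. A deterministic scheme $\tau_S$ is the point-mass lottery $\delta_{\tau_S}$. The continuation game $\Gamma^{\delta_{\tau_S}}$ is $\Gamma^{\tau_S,\mathbf 0}$ with a trivial private realization, so the two have the same sequential equilibria and the same payoffs. Taking suprema gives the inequality.

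For the upper bound and the algorithm, I will start from the equilibrium built in Theorem~\ref{thm:opt-send-rand}. It has four properties:
\begin{itemize}
\item its lottery $\rho^*$ is supported on at most two schemes $\tau^1,\tau^2$;
\item these schemes coincide outside a single state $\theta^*$;
\item the receiver's action is deterministic conditional on every state other than $\theta^*$;
\item it gives the sender $\OPT_{S,\Delta}^{+}$.
\end{itemize}
From this I will build a deterministic commitment whose sender payoff loses at most $p(\theta^*)\Delta_S(\theta^*)$.

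Let $\lambda$ be the probability with which the optimal allocation takes, at $\theta^*$, the action less favorable to the sender. Rounding $\theta^*$ to a single action gives two candidate deterministic allocations. I will round toward the action that is safer for obedience. This is the recommendation-change direction, which raises the receiver's original utility $v$, because by the transfer-cost formula $(R_0-v)_++(R_1-v)_+$ the required compensation is nonincreasing in $v$. Fully switching $\theta^*$ costs the sender at most $p(\theta^*)\Delta_S(\theta^*)$ in original utility. It does not increase the required expected transfers, so net sender payoff drops by at most $p(\theta^*)\Delta_S(\theta^*)\leq\max_\theta p(\theta)\Delta_S(\theta)$.

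The rounded allocation is then deterministic in the state. I will implement it with a single scheme:
\begin{itemize}
\item on-path transfers attaining the minimal compensation;
\item state-wise payments making the recommended action the sender's strictly preferred one, as in the knapsack-switch construction in the discussion of Theorem~\ref{thm:opt-send-hard};
\item off-path burns enforcing recommendations.
\end{itemize}
Proposition~\ref{prop:binary-pareto} applied to $\Gamma^{\tau}$ then supplies an obedient sequential equilibrium, since receiver obedience is exactly $d_R\geq R_0,R_1$ in the modified game. Taking babbling continuations after other sender commitments (Proposition~\ref{prop:sen-pessimistic}'s style) gives an equilibrium of $\Gamma^S$. This shows $\OPT_S^{+}\geq \OPT_{S,\Delta}^{+}-\max_\theta p(\theta)\Delta_S(\theta)$. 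The construction uses one extra pass over the $O(n\log n)$ output, so the stated running time follows.

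The main obstacle is the case where the optimal randomized solution is not of the recommendation-change type, for instance when it induces a constant action or babbling. There the rounding must be checked separately; since such solutions are already deterministic, I expect no loss. A second delicate point is that deterministic schemes need the sender's recommendation to be pure per state. So I must make sure the transfer-based obedience slack computed for the lottery is still realizable with one scheme. I would verify this by writing the transfers as constant per recommendation, which the $(R_a-v)_+$ form permits.
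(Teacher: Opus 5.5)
Your lower bound, the upward rounding of the partially switched state, and the single-scheme implementation (constant per-recommendation transfers $z_a$ plus off-path burns) are sound. They prove $\OPT_S^{+}(\Gamma)\geq\OPT_{S,\Delta}^{+}(\Gamma)-\max_\theta p(\theta)\Delta_S(\theta)$, because $\OPT_S^{+}$ only needs a continuation equilibrium of $\Gamma^{\tau}$.

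The gap is in the ``Moreover'' part. You claim that prescribing babbling after every other sender commitment makes the rounded scheme an equilibrium of $\Gamma^S$. This fails whenever the rounded payoff is below $W_{\mathrm{const}}(\Gamma)-b_R(\Gamma)$. Under babbling continuations the sender can deviate to the scheme of Proposition~\ref{prop:sen-pessimistic}, which transfers $b_R(\Gamma)-R_c(\Gamma)$ on a welfare-maximizing constant action $c$, and obtain exactly that amount. She in fact secures it against every continuation selection, so no choice of off-path continuations can rescue the rounded commitment.

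Rounding can genuinely fall below this benchmark. Take two equally likely states. In $\theta_1$ the payoffs $(u_S,u_R)$ are $(0,4)$ at action $0$ and $(2,0)$ at action $1$; in $\theta_2$ they are $(0,0)$ and $(1,3)$. Then $R_0=2$ and $R_1=3/2$, and the sender prefers action $1$ in both states. The algorithm of Theorem~\ref{thm:opt-send-rand} switches $\theta_1$ with probability $1/4$, giving $\OPT_{S,\Delta}^{+}(\Gamma)=5/4$. Rounding up yields sender payoff $1/2$, which is within the allowed loss $\max_\theta p(\theta)\Delta_S(\theta)=1$. However, $W_{\mathrm{const}}(\Gamma)-b_R(\Gamma)=3-2=1>1/2$, so the sender deviates.

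The missing step, which the paper's proof includes, is a final comparison. Output the rounded scheme when its payoff is at least $W_{\mathrm{const}}(\Gamma)-b_R(\Gamma)$, and otherwise output the scheme of Proposition~\ref{prop:sen-pessimistic}. In both cases babbling after all other commitments caps every deviation at $W_{\mathrm{const}}(\Gamma)-b_R(\Gamma)$. The selected payoff is at least the rounded one, hence at least the target, and the comparison costs only $O(n)$ extra operations.

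The obstacle you anticipated, an optimal randomized solution that is constant or babbling, is not where the difficulty lies. The allocation objective in the proof of Theorem~\ref{thm:opt-send-rand} already subsumes constant allocations, so the rounding argument applies uniformly. The real issue is commitment-stage optimality in $\Gamma^S$.
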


Round the partially switched state toward the
receiver's preferred original action. This reduces
the sender's original payoff by at most that
state's expected utility difference, while weakly
reducing the compensation required for obedience.
Comparing the resulting construction with the
sender's pessimistic guarantee gives a deterministic
commitment supported by an equilibrium of the
entire extension. The formal argument is given
in Appendix~\ref{sec:cor-random}.

For arbitrary finite action sets, private
randomization also permits polynomial-time
optimization. Section~\ref{sec:persuasion-connection}
explains this through an equivalence with Bayesian
persuasion and gives a polynomial-size implementation.
The binary-action theorem above provides the
sharper running time and two-scheme support bound.

\section{Connection to Bayesian Persuasion}
\label{sec:persuasion-connection}

Our model gives players the ability to commit to payments,
while leaving communication strategic. Bayesian persuasion
instead gives the sender the ability to commit directly
to how information is disclosed. Despite this difference,
privately randomized sender commitments can attain the
same optimal payoffs as appropriate persuasion benchmarks.
This section explains that connection.

\paragraph{Commitment to disclosure.}
In Bayesian persuasion~\cite{kamenica2011persuasion},
the sender and receiver have the same information and
payoff structure as in a sender-receiver game:
the sender observes the state, and the receiver chooses
an action. The difference is that, before the state
is drawn, the sender publicly commits to a
\emph{signaling scheme}. This scheme specifies a
distribution over messages for each possible state.
Once the state is realized, the message is generated
according to the committed scheme; the sender cannot
replace it with a more favorable message.

The receiver observes the message, updates its beliefs,
and chooses an optimal action. The sender therefore
chooses how much information to disclose, taking the
receiver's subsequent incentives into account.
A signaling scheme can reveal the state, disclose
nothing, or provide partial information.

For the sender's optimization problem, it suffices
to consider direct schemes whose messages recommend
actions and whose recommendations the receiver is
willing to follow. Write $\pi(a\mid\theta)$ for the
probability of recommending action $a$ in state $\theta$.
The receiver's obedience conditions are
\[
    \sum_{\theta}p(\theta)\pi(a\mid\theta)
    \bigl(u_R(\theta,a)-u_R(\theta,b)\bigr)
    \geq0
    \qquad\text{for all }a,b\in A.
\]
Each inequality says that, conditional on recommendation
$a$, following it is at least as good as choosing $b$.
The expression is left unnormalized so that it also
applies to recommendations occurring with probability zero.

Let $\mathrm{BP}(\Gamma)$ denote the optimal sender
payoff over such schemes:
\[
    \mathrm{BP}(\Gamma)
    :=
    \max_{\pi\text{ obedient}}
    \sum_{\theta,a}
    p(\theta)\pi(a\mid\theta)u_S(\theta,a).
\]
As usual, the optimization uses favorable receiver
tie-breaking.

\paragraph{Persuasion with payments.}
Dughmi, Niazadeh, Psomas, and Weinberg
\cite{dughmi2019persuasion} augment this disclosure
commitment with payments for following recommendations.
In their nonnegative-payment setting, the sender also
promises an amount $z_a\geq0$ whenever action $a$ is
recommended and the receiver follows it.
If the receiver chooses another action, no payment
is made.

These payments give the sender an additional way
to influence behavior: a recommendation can be made
acceptable by compensating the receiver for following it.
For each $b\neq a$, obedience becomes
\[
    \sum_{\theta}p(\theta)\pi(a\mid\theta)
    \bigl(u_R(\theta,a)+z_a-u_R(\theta,b)\bigr)
    \geq0.
\]
The sender pays for this additional incentive, obtaining
expected utility
\[
    \sum_{\theta,a}p(\theta)\pi(a\mid\theta)
    \bigl(u_S(\theta,a)-z_a\bigr).
\]
We denote the maximum of this expression over obedient
schemes and nonnegative payments by
$\mathrm{BP}_{+}(\Gamma)$.

\paragraph{Implementing disclosure through payment commitments.}
In our model, the sender cannot directly commit to a
signaling scheme. However, a privately realized payment
scheme can determine which recommendation she prefers
in each state.

To see the idea, consider a deterministic action rule
$f:\Theta\to A$. The sender can commit to sufficiently
large burns on every outcome $(\theta,b)$ with
$b\neq f(\theta)$, while burning nothing on
$(\theta,f(\theta))$. If the receiver follows the
recommendations, the sender strictly prefers to
recommend $f(\theta)$ in every state. These burns
are never realized.

A randomized recommendation rule can similarly be
represented as a lottery over deterministic action
rules. The sender commits to the corresponding lottery
over burning schemes and privately observes the
selected scheme before communicating. The receiver
observes the recommendation but not the selected rule.
Its information about the state therefore matches
that generated by the corresponding persuasion scheme.

Compliance payments can be implemented in the same way.
Under the scheme associated with $f$, the sender
transfers $z_{f(\theta)}$ on the prescribed action
and nothing on other actions. Additional off-path
burning makes the prescribed action strictly preferable
for the sender even after accounting for this transfer.
Although payments in our model do not depend directly
on messages, the realized contract thus reproduces
the payments associated with following a recommendation.

Write $\OPT_{S,\Delta,\bot}^{+}(\Gamma)$ for the sender's
optimistic value when privately randomized commitments
are allowed but payments are restricted to burning.

\begin{proposition}\label{prop:persuasion-equivalence}
For every finite sender-receiver game $\Gamma$,
\[
    \OPT_{S,\Delta,\bot}^{+}(\Gamma)
    =\mathrm{BP}(\Gamma),
    \qquad
    \OPT_{S,\Delta}^{+}(\Gamma)
    =\mathrm{BP}_{+}(\Gamma).
\]
Both values are attained by equilibria of the
corresponding commitment extensions.
\end{proposition}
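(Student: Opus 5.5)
Both equalities will be proved by matching an upper bound (every equilibrium of $\Gamma^S_{\Delta,\bot}$, resp.\ $\Gamma^S_\Delta$, gives the sender at most $\mathrm{BP}$, resp.\ $\mathrm{BP}_+$) with a lower bound (an explicit equilibrium attaining the value). Lemma~\ref{lem:opt-eq} then identifies the optimistic optimum with an equilibrium payoff once attainment is shown.

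\emph{Upper bound.} Fix a lottery $\rho_S$ and a sequential equilibrium of $\Gamma^\rho$. It induces a joint distribution over $(\theta,\tau_S,m,a)$. Define $\pi(a\mid\theta)$ as the probability that action $a$ is played in state $\theta$, averaged over $\tau_S$ and $m$. Since the receiver does not observe $\tau_S$, its optimality at every on-path message, conditioned on its posterior over $(\theta,\tau_S)$, gives $\sum_\theta p(\theta)\pi(a\mid\theta)\bigl(\mathbb E[u_R(\theta,a)+\tau_S^R(\theta,a)\mid\theta,a]-u_R(\theta,b)-\mathbb E[\tau_S^R(\theta,b)\mid\ldots]\bigr)\ge 0$ after summing over messages leading to $a$ (the standard revelation-principle merge). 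Dropping the nonnegative transfer received at $b$ only strengthens the resulting inequality.

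\begin{itemize}
\item \emph{Burning only.} The transfer terms vanish, so $\pi$ is obedient. The sender's net payoff is at most $\sum p\pi u_S$, since burns are nonnegative, and hence at most $\mathrm{BP}$.
\item \emph{With transfers.} Let $z_a$ be the expected on-path transfer conditional on $a$ being played, i.e.\ the total expected transfer at $a$ divided by $\sum_\theta p(\theta)\pi(a\mid\theta)$. This $z_a$ is nonnegative, and the aggregated inequality is exactly the DNPW obedience constraint with $z_a$. The sender's payoff is at most $\sum p\pi(u_S-z_a)\le \mathrm{BP}_+$.
\end{itemize}

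The delicate point is that conditioning is on the action, not the message. I would aggregate the receiver's best-response inequalities over all messages inducing $a$ with positive probability. Mixed receiver responses are handled by weighting each message by its probability of $a$.

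\emph{Lower bound.} Take an optimal $(\pi,z)$; the optimum exists because the problem is a feasible LP in the variables $y=p\pi$ and $y_a z_a$. Decompose $\pi$ as a finite convex combination of deterministic rules $f_k$ with weights $\lambda_k$; a product decomposition, or the polynomial-size one from LP vertex structure, suffices. For each $k$, the scheme $\tau^k$ transfers $z_{f_k(\theta)}$ at $(\theta,f_k(\theta))$ and burns a large $M$ at every $(\theta,b)$ with $b\ne f_k(\theta)$, where $M$ exceeds all payoff differences plus $\max_a z_a$. The sender commits to $\rho_S=\sum_k\lambda_k\delta_{\tau^k}$. The prescribed continuation has the sender with realization $k$ in state $\theta$ send message $f_k(\theta)$, and the receiver follow. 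Because the receiver's posterior after $a$ aggregates over $k$, its conditional expected gain from obeying equals the DNPW obedience slack, which is nonnegative. At off-path messages we assign beliefs under which the receiver plays some action. The sender's strict preference for $f_k(\theta)$ holds as long as every message induces a pure action: every other message then yields an outcome burned by $M$.

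The main obstacle is off-path beliefs and favorable tie-breaking. We must check that no message gives the sender realization $k$ a profitable deviation, and that tie-breaking at obedient messages does not move the receiver off $a$. For off-path messages, I would copy the response to some used message. For ties, favorable tie-breaking is harmless, because the sender strictly prefers the recommended action under every realization, so the tie breaks toward obedience.

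\emph{Commitment stage.} After any other lottery, we select the sender-worst continuation (babbling, say); the sender only needs unilateral optimality. Since the upper bound applies to every lottery and continuation, the construction is optimal and the equilibrium attains the value. Sender payoff is $\sum p\pi u_S$ minus expected transfers, which equals $\mathrm{BP}_+$; with $z\equiv0$ and burns only off path, it equals $\mathrm{BP}$.
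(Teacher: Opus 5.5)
Your proposal is correct and follows essentially the same route as the paper. For the upper bound, you aggregate the receiver's best-response inequalities over messages weighted by action probabilities, drop the nonnegative transfers promised on deviations, and normalize realized transfers into compliance payments $z_a$; for the lower bound, you implement an optimal direct scheme by a private lottery over deterministic rules with compliance transfers, off-path burns, and babbling after other commitments. (Minor points: sender optimality does not need pure responses, since any mixed response gives her at most her strictly preferred prescribed payoff; and if you copy a used message's response at an unused one, copy its posterior too, which trembles proportional to that message's senders make consistent, or just use prior beliefs as the paper does.)
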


The construction above explains how to attain the
persuasion benchmarks. Conversely, any equilibrium
under a privately randomized sender commitment induces
a distribution over actions that satisfies the receiver's
incentive constraints. Its realized transfers can be
aggregated into compliance payments for a direct
persuasion scheme. Removing any realized burning
weakly increases the sender's payoff, giving the
matching upper bound. The formal proof appears in
Appendix~\ref{sec:proof-persuasion-equivalence}.

The proposition identifies optimal values, rather than
equating the two games or their equilibrium sets.
Persuasion commits directly to disclosure; our model
implements disclosure incentives through enforceable
contracts. The equivalence concerns the sender-only
optimistic problem with private randomization and
unrestricted burning. It does not characterize the
incentives created by bilateral commitment opportunities.

The equivalence also yields a polynomial-time algorithm
for arbitrary finite action sets. The persuasion
problems admit polynomial-size linear programs.
To obtain an efficient algorithm in our commitment
model, we must additionally represent the resulting
recommendation rule by a lottery with polynomially
many deterministic payment schemes.

\begin{corollary}\label{cor:general-randomized-computation}
For a sender-receiver game $\Gamma$ with rational
input data, $n$ states, and $m$ actions, an optimal
privately randomized sender commitment and a
supporting sequential equilibrium can be computed
in polynomial time.
This holds both when transfers and burning are
allowed and when only burning is allowed.

In either case, the commitment lottery can be chosen
to have support of size at most
\[
    n(m-1)+1.
\]
\end{corollary}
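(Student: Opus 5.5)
The plan is to solve the persuasion linear programs from Proposition~\ref{prop:persuasion-equivalence}, take a vertex (basic) optimal solution, and then convert it into a lottery over deterministic action rules. Proposition~\ref{prop:persuasion-equivalence} already supplies the translation from a lottery over rules to burning schemes (and, when transfers are allowed, transfer-plus-burning schemes) together with a supporting equilibrium.

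\emph{The linear programs.} For $\mathrm{BP}_{+}(\Gamma)$ the payment $z_a$ enters bilinearly with $\pi$, so I first linearize. I substitute $y_a:=z_a\sum_\theta p(\theta)\pi(a\mid\theta)$, the expected compliance payment on recommendation $a$. The obedience constraints then become linear in $(x,y)$, where $x(\theta,a):=p(\theta)\pi(a\mid\theta)$, and so does the objective $\sum x(\theta,a)u_S(\theta,a)-\sum_a y_a$.

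When a recommendation has zero mass, I allow $y_a>0$ in the LP but observe that it only hurts the objective, so an optimum has $y_a=0$ there; I then set $z_a=0$. Both LPs have $O(nm)$ variables and $O(nm^2)$ constraints with rational data. Hence an optimal vertex can be found in polynomial time by the ellipsoid or interior-point method followed by rounding to a vertex, with polynomial bit size.

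\emph{Decomposition.} Given the optimal $\pi$, I write the row-stochastic matrix $(\pi(\cdot\mid\theta))_\theta$ as a convex combination of deterministic rules $f:\Theta\to A$. I use a greedy Carathéodory-type peeling: the current residual has each row with equal total mass $\lambda$. Let $f$ pick, in each row, some action with positive residual entry, and let $\mu$ be the minimum of those picked entries. Subtract $\mu\mathbf 1_f$.

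Each step zeroes at least one entry. The support has at most $nm$ entries, and the last step zeroes all $n$ remaining entries simultaneously, one per row. This bounds the number of rules by $nm-n+1=n(m-1)+1$, and it takes polynomially many steps.

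Each rule $f_k$ with weight $\mu_k$ becomes the scheme from the construction preceding Proposition~\ref{prop:persuasion-equivalence}. It burns a large amount on $(\theta,b)$ for $b\neq f_k(\theta)$ and, in the transfer case, transfers $z_{f_k(\theta)}$ on $(\theta,f_k(\theta))$. The large burn must exceed the sender's maximal payoff spread plus the maximal $z_a$, which is polynomially encodable.

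\emph{Main obstacle.} The hardest part is verifying that the resulting lottery, with the prescribed recommendation strategy, is a sequential equilibrium of the enlarged continuation game $\Gamma^\rho$, and that it is an equilibrium of the whole extension. In particular, I must specify receiver responses at unused messages, handle ties in obedience under favorable tie-breaking, and make sure the sender does not gain by deviating to another lottery.

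I would take these directly from the proof of Proposition~\ref{prop:persuasion-equivalence}, which establishes attainment for an arbitrary finitely supported implementing lottery. The corollary then only needs to add three checks. First, the specific lottery produced here is of the form that proof handles. Second, all numbers involved, including the burn magnitudes and the off-path play after commitment deviations (for which the pessimistic construction of Proposition~\ref{prop:sen-pessimistic} with babbling continuations can be used), have polynomial size. Third, the off-path rules are finitely describable, in line with the computational conventions.
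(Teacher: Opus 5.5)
Your proposal is correct and follows the paper's proof. It uses the same linearized program with $P_a=\alpha_a z_a$ (your $y_a$), the per-rule burn and transfer schemes with babbling continuations from Proposition~\ref{prop:persuasion-equivalence}, and a lottery over at most $n(m-1)+1$ deterministic rules. The only difference is the decomposition: the paper's common refinement of the statewise cumulative partitions of $[0,1]$ is your greedy peeling in the special case where each row always picks the first action, in a fixed order, with positive residual, and both bounds count the at most $n(m-1)$ entries exhausted before the final step.
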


The persuasion equivalence reduces optimization to
a polynomial-size linear program. The remaining
issue is implementing its recommendation probabilities
without using an exponentially large lottery over
payment schemes. For each state, partition $[0,1]$
into intervals whose lengths are the probabilities
of recommending each action. A single random draw
from $[0,1]$ then determines a recommendation in
every state, and hence a deterministic action rule.
Together, the statewise partitions have at most
$n(m-1)$ interior endpoints, so their common
refinement yields at most $n(m-1)+1$ such rules.
Applying the contract construction from
Proposition~\ref{prop:persuasion-equivalence} to
each rule gives the required commitment lottery.
The full proof appears in
Appendix~\ref{app:general-randomized-computation}.

Thus, private randomization makes optimal sender
commitments computationally tractable for arbitrary
finite action sets. For binary actions,
Theorem~\ref{thm:opt-send-rand} gives a sharper result:
an explicit $O(n\log n)$ algorithm and an optimal
lottery over at most two schemes that differ at
only one state.

Private observation of the realized scheme is essential.
If the realization were publicly observed before
communication, each realization would induce a
deterministic continuation game. Its sender payoff
would be at most $\OPT_S^{+}(\Gamma)$, so averaging
over public realizations could not improve on the
optimal deterministic value.

\section{Conclusion}

We studied payment commitments as a mediating device
in sender-receiver games. By committing to transfers
or money burning, players can change the incentives
for communication without committing directly to
what information is disclosed. The benefits of this
device depend on who can commit, the order of
commitments, and whether equilibrium selection is
favorable.

For arbitrary finite action sets, we characterized
the pessimistic utility guarantees under unilateral
commitments. Sender-only commitments guarantee
exactly the welfare of the best constant action
in the worst equilibrium. This guarantee survives
when the receiver can commit after observing the
sender's commitment. Simultaneous commitments,
however, can introduce equilibria with lower welfare
than every equilibrium of the original game.

The comparison changes when we ask whether an
efficient equilibrium exists. Neither simultaneous
nor sender-first commitments always permit efficiency,
whereas receiver-first commitments do so for every
finite action set. With binary actions, we further
characterized the entire receiver-first equilibrium
payoff region. The characterization identifies how
state-dependent action choices permit off-path
penalties to support different divisions of surplus.
It also shows that receiver-first commitments can
sustain low-welfare equilibria despite admitting
efficient ones. Thus, protecting welfare in every
equilibrium and making efficiency attainable are
distinct benefits of a commitment protocol.

Our computational results reveal a further
distinction between deterministic and privately
randomized commitments. With binary actions,
optimal receiver commitments are computable in
linear time, while optimizing deterministic sender
commitments is NP-hard. Private randomization
admits an exact $O(n\log n)$ algorithm whose
implementation uses at most two schemes differing
at only one state, and yields an additive
approximation by deterministic commitments.
For arbitrary finite action sets, we connected
optimal privately randomized sender commitments
to Bayesian persuasion, with or without nonnegative
compliance payments. This equivalence also gives
polynomial-time algorithms with polynomial-size
commitment lotteries.

A natural next step is to study learning when
players do not know each other's utilities.
Repeated interactions provide observations of
recommendations, actions, and payments that may
allow players to adapt their commitments.
Which feedback permits low regret relative to
complete-information benchmarks, and whether
such learning preserves the benefits of particular
commitment orders, remain open questions.
Other directions include characterizing the
receiver-first payoff region beyond binary actions
and determining which guarantees survive bounds
on transfers and burning.

\section*{AI assistance}

The author used OpenAI's ChatGPT and Anthropic's
Claude during the development and preparation
of this paper. These tools assisted with drafting
and revising text from author-provided material,
developing detailed proofs from the author's
proof outlines, and working through technical
details. They also provided feedback on successive
drafts, helped identify gaps and errors, and
suggested extensions of the arguments, including
generalizations of some results from binary
to arbitrary finite action sets.
The author directed the research and takes
full responsibility for the results, proofs,
references, and final presentation.

\printbibliography

\appendix

\section{Proof of Lemma~\ref{lem:opt-eq}}\label{sec:proof-opt-eq}

Fix the committing player $i$ and write
$G:=\OPT_i^-(\Gamma)$ and $H:=\OPT_i^+(\Gamma)$.
In every equilibrium of the extension, player $i$ receives
at least $G$: otherwise, deviating to a scheme attaining $G$
would improve its payoff regardless of the continuation
selected after that deviation. Its equilibrium payoff is
at most $H$, by the definition of the optimistic optimum.

To attain $G$, select an $i$-worst continuation equilibrium
after every scheme. A scheme attaining $G$ is then an optimal
commitment under this continuation selection. Together, these
choices constitute an equilibrium of the extension with
payoff $G$. Selecting an $i$-best continuation after every
scheme and choosing a scheme attaining $H$ similarly gives
an equilibrium with payoff $H$. The same argument applies
when the available commitments are finitely supported
lotteries and the continuation games include the privately
observed realizations.

\section{Proof of Proposition~\ref{prop:binary-pareto}}\label{sec:proof-binary-pareto}

Fix a sequential equilibrium $e$. Write $q(m)$ for the
receiver's probability of choosing action $0$ after $m$, and
let $q_{\min}$ and $q_{\max}$ be its minimum and maximum over
all messages, including unused ones.

Suppose first that $q_{\min}<q_{\max}$.
Sender optimality and favorable tie-breaking imply that every
state with $a_S(\theta)=0$ sends only messages attaining
$q_{\max}$, except possibly states in which both players are
indifferent between actions. Similarly, every state with
$a_S(\theta)=1$ sends only messages attaining $q_{\min}$.
The exceptional states contribute zero to both players'
action-payoff differences and may be assigned to the
recommendation $0$ without affecting any incentive comparison.

At every on-path message attaining $q_{\max}>0$, action $0$
is a receiver best response. Summing the corresponding
unnormalized obedience inequalities gives
\[
\sum_{\theta:a_S(\theta)=0}p(\theta)
 \bigl(u_R(\theta,0)-u_R(\theta,1)\bigr)\geq0.
\]
The inequality also holds if this recommendation has zero
probability. Since $q_{\min}<1$, the same argument for the
minimum-probability messages gives obedience to recommendation
$1$. Thus the sender-preferred recommendation profile is
obedient. Receiver ties favor following the recommendation:
in every state assigned to a recommendation, its recommended
action weakly maximizes the sender's utility.

Replacing the response at each maximum-probability message
by action $0$, and at each minimum-probability message by
action $1$, does not change the receiver's expected payoff.
Indeed, a response changes only where the receiver originally
mixed and was therefore indifferent. Any remaining on-path
messages are used only by states in which both players are
indifferent and contribute no change either. Consequently,
$d_R(\Gamma)=u_R(e)$. The recommendation profile gives the
sender its statewise maximum payoff, so it weakly
Pareto-dominates $e$.

If $q_{\min}=q_{\max}$, the action distribution is the same
after every message. Averaging receiver optimality over
on-path messages shows that this distribution is supported
on prior-optimal actions. The babbling equilibrium selecting
a sender-best prior-optimal action therefore weakly
Pareto-dominates $e$.

Every equilibrium gives the receiver at least $b_R(\Gamma)$,
since it can ignore the message and use a prior-optimal
action. If $d_R(\Gamma)<b_R(\Gamma)$, the first case above
is impossible, establishing the first part of the proposition.

If $d_R(\Gamma)\geq b_R(\Gamma)$, obedience of the
sender-preferred recommendation profile follows directly from
\[
\begin{aligned}
\sum_{\theta:a_S(\theta)=1}p(\theta)
 \bigl(u_R(\theta,1)-u_R(\theta,0)\bigr)
 &=d_R(\Gamma)-R_0(\Gamma)\geq0,\\
\sum_{\theta:a_S(\theta)=0}p(\theta)
 \bigl(u_R(\theta,0)-u_R(\theta,1)\bigr)
 &=d_R(\Gamma)-R_1(\Gamma)\geq0.
\end{aligned}
\]
Sender optimality follows from the statewise choice of
$a_S(\theta)$, including the secondary tie-breaking criterion.
At unused messages, assign prior beliefs and a prior-optimal
response with favorable tie-breaking. These beliefs are
consistent with state-independent trembles to unused messages;
additional smaller-order trembles ensure complete mixing.
The sender cannot benefit from these responses because the
recommended action already maximizes its payoff, with ties
resolved in the receiver's favor.

This obedient equilibrium weakly Pareto-dominates babbling:
its sender payoff is a statewise maximum and its receiver
payoff is at least $b_R(\Gamma)$. Together with the preceding
case analysis, this proves the second part.

\section{Proof of Proposition~\ref{prop:rec-pessimistic}}\label{sec:proof-rec-pessimistic}

By committing to no payments, the receiver can guarantee
$b_R(\Gamma)$ in every continuation equilibrium: it can
ignore the sender's message and choose a best constant
action under the prior. Hence,
\[
    \OPT_R^{-}(\Gamma)\geq b_R(\Gamma).
\]

For the reverse inequality, fix any receiver payment
scheme $\tau_R$. The resulting communication game admits
a babbling equilibrium: the sender sends a fixed message,
and the receiver chooses a best constant action under
the prior, breaking ties in the sender's favor. Assign
prior beliefs and the same response after every unused
message. These beliefs are consistent with state-independent
trembles, and neither player benefits from deviating.

Since the sender makes no payments, the receiver's
expected utility in this equilibrium is
\[
    b_R(\Gamma^{(\mathbf{0},\tau_R)})
    =
    \max_{a\in A}
    \mathbb{E}_{\theta\sim p}
    \left[u_R(\theta,a)-\|\tau_R(\theta,a)\|\right]
    \leq b_R(\Gamma),
\]
where the inequality follows from nonnegativity of
payments. Thus, every receiver scheme admits a
continuation equilibrium giving the receiver at most
$b_R(\Gamma)$. Taking the supremum over schemes yields
\[
    \OPT_R^{-}(\Gamma)\leq b_R(\Gamma),
\]
which proves the claim.

\section{Proof of Proposition~\ref{prop:sen-pessimistic}}\label{sec:proof-sen-pessimistic}

Write
\[
    B:=W_{\mathrm{const}}(\Gamma)-b_R(\Gamma).
\]
We first prove the upper bound, then construct
a scheme guaranteeing $B$.

\paragraph{Upper bound.}
Fix any sender scheme $\tau_S$. The resulting game
admits a pure babbling equilibrium. In this equilibrium,
the receiver obtains at least $b_R(\Gamma)$ because
incoming transfers are nonnegative. Since the action
is constant, total welfare is at most
$W_{\mathrm{const}}(\Gamma)$: transfers preserve welfare,
and burning can only reduce it.
The sender therefore receives at most $B$.
Every sender scheme thus admits a continuation
equilibrium giving the sender at most $B$, proving
\[
    \OPT_S^{-}(\Gamma)\leq B.
\]

\paragraph{Construction.}
Choose
\[
    c\in\arg\max_{a\in A}W_a(\Gamma),
    \qquad
    t:=b_R(\Gamma)-R_c(\Gamma)\geq0.
\]
Let the sender transfer $t$ to the receiver on action
$c$ in every state, with no other transfers and no
burning on $c$.

Enumerate the other actions as $a_1,\ldots,a_{m-1}$,
where $m=|A|$, and choose
\[
    G:=
    1+
    \max\left(
        \{0\}\cup
        \{u_S(\theta,c)-t-u_S(\theta,a):
          \theta\in\Theta,\ a\neq c\}
    \right).
\]
Set $K_c:=0$ and $K_{a_j}:=jG$.
For each $a\neq c$, let the sender burn
\[
    \tau_S^\bot(\theta,a)
    :=
    u_S(\theta,a)-u_S(\theta,c)+t+K_a.
\]
These amounts are nonnegative by the choice of $G$.

The modified sender utilities are therefore
\[
    u_S^\tau(\theta,a)
    =
    u_S(\theta,c)-t-K_a
    \qquad\text{for every }a\in A.
\]
Thus the sender has the same strict ranking of actions
in every state, determined by the distinct constants
$K_a$, with $c$ ranked first.

\paragraph{Every continuation equilibrium plays $c$.}
Fix any sequential equilibrium under this scheme.
At every receiver information set, the sender's
conditional expected payoff from action $a$ is
\[
    \mathbb{E}[u_S(\theta,c)-t\mid m]-K_a.
\]
Consequently, favorable tie-breaking selects a unique
action among the receiver's best responses: the one
with the smallest $K_a$. Every receiver response
is therefore pure.

Because the sender's ranking of these responses
does not depend on the state, all on-path messages
induce the same action, say $a^*$. Averaging receiver
optimality over on-path messages shows that $a^*$
is optimal under the prior.

The receiver's constant-action payoff from $c$ is
\[
    R_c(\Gamma)+t=b_R(\Gamma),
\]
whereas its payoff from any other action is at most
$b_R(\Gamma)$. If $a^*\neq c$, both must therefore
give the receiver the same prior payoff.

At every on-path message, receiver optimality implies
that $a^*$ gives at least as much conditional utility
as $c$. Since the average difference is zero, the
receiver is indifferent between them after every
such message. But the sender strictly prefers $c$
in every state because $K_c<K_{a^*}$.
Favorable tie-breaking would therefore select $c$
instead of $a^*$, a contradiction.

Hence every continuation equilibrium plays $c$.
No burning is realized, and the sender receives
\[
    S_c(\Gamma)-t
    =
    W_c(\Gamma)-b_R(\Gamma)
    =B.
\]
This proves the lower bound and therefore equality.

Computing the constant-action benchmarks, $G$, and
the payment entries takes $O(n|A|)$ arithmetic
operations. The constructed amounts have polynomial
encoding length for rational input data.

\section{Proof of Theorem~\ref{thm:opt-rec}}\label{sec:proof-thm-opt-rec}

For each state $\theta$, define
\[
\Delta_S(\theta)
:=
|u_S(\theta,0)-u_S(\theta,1)|,
\qquad
\Delta_R(\theta)
:=
|u_R(\theta,0)-u_R(\theta,1)|.
\]

By Proposition~\ref{prop:binary-pareto}, for any receiver
payment scheme, every continuation equilibrium is weakly
Pareto-dominated by either a babbling equilibrium or a
sender-preferred obedient equilibrium. In a babbling
equilibrium, the receiver's expected utility cannot exceed
$b_R(\Gamma)$, since its own payments are nonnegative.
It therefore suffices to optimize the receiver's payoff
under the sender-preferred recommendation profile and
compare the resulting value with $b_R(\Gamma)$.

Fix a state $\theta$, and let $a_S(\theta)$ denote the
sender's preferred action in the original game, under
our tie-breaking convention. If a payment scheme leaves
this preferred action unchanged, the receiver's net
payoff at that state is at most
$u_R(\theta,a_S(\theta))$.

To make the other action sender-preferred, the receiver
must transfer at least $\Delta_S(\theta)$ to the sender
at that action. Thus, its net payoff from switching the
recommendation is at most
\[
u_R(\theta,1-a_S(\theta))-\Delta_S(\theta).
\]
Burning cannot improve either bound, since it reduces
the receiver's payoff without increasing the sender's
utility.

Consequently, switching the recommendation strictly
improves the receiver's statewise payoff precisely when
the players have opposite strict preferences and
$\Delta_R(\theta)>\Delta_S(\theta)$. In that case, a
transfer of exactly $\Delta_S(\theta)$ at the receiver's
preferred action attains the bound. The sender becomes
indifferent and selects that action because it gives
the receiver strictly higher modified utility.

Let $\tau_R^*$ be the scheme that makes this transfer
in every such state and makes no other payments.
By construction,
\[
d_R(\Gamma^{(\mathbf{0},\tau_R)})
\leq
d_R(\Gamma^{(\mathbf{0},\tau_R^*)})
\]
for every receiver scheme $\tau_R$.

Write
\[
D^*:=d_R(\Gamma^{(\mathbf{0},\tau_R^*)}).
\]
If $D^*>b_R(\Gamma)$, then
\[
D^*>b_R(\Gamma)
\geq b_R(\Gamma^{(\mathbf{0},\tau_R^*)}),
\]
where the second inequality follows from nonnegativity
of the receiver's payments. Proposition~\ref{prop:binary-pareto}
therefore guarantees that the sender-preferred
recommendation profile under $\tau_R^*$ can be completed
to a sequential equilibrium. This continuation gives
the receiver $D^*$.

If $D^*\leq b_R(\Gamma)$, select the zero payment scheme
and a babbling continuation, giving the receiver
$b_R(\Gamma)$.

The selected payoff is therefore
\[
\max\{b_R(\Gamma),D^*\}.
\]
The preceding bounds show that no receiver scheme and
continuation equilibrium can yield a higher payoff.
Complete the strategy profile by selecting a babbling
equilibrium after every other receiver scheme. The
selected commitment is then optimal, so this defines
an equilibrium of $\Gamma^R$ attaining $\OPT_R^+(\Gamma)$.

The scheme and both candidate values can be computed
in a single pass through the states. Hence the algorithm
uses $O(n)$ arithmetic operations. For rational input data, the algorithm also has
polynomial bit complexity. Its payment amounts are
differences of input utilities, and its payoff
comparisons use sums of products of these amounts
with prior probabilities. All resulting rational
numbers therefore have encoding length polynomial
in the input length.

\section{Proof of Theorem~\ref{thm:opt-send-hard}}\label{sec:proof-opt-send-hard}

We prove the theorem by reducing from the NP-complete decision
version of $0$--$1$ Knapsack with integer data
\cite{karp1972reducibility}. An instance consists of positive
integer weights $w_1,\ldots,w_n$, positive integer values
$v_1,\ldots,v_n$, a capacity $W$, and a target value $V$.
The question is whether there exists a subset
$J\subseteq\{1,\ldots,n\}$ such that
\[
\sum_{i\in J}w_i\leq W
\qquad\text{and}\qquad
\sum_{i\in J}v_i\geq V.
\]
All integers are encoded in binary. Write
\[
B:=\sum_{i=1}^n w_i,
\qquad
T:=\sum_{i=1}^n v_i.
\]
We may restrict attention to instances with
$1\leq W\leq B$ and $1\leq V\leq T$, since instances outside
these ranges can be decided directly.

\paragraph{Construction and intuition.}

Given a Knapsack instance $K$, construct a sender-receiver
game $\Gamma_K$ with actions $\{0,1\}$ and $n+2$ equally
likely states
\[
\Theta=\{\theta_1,\ldots,\theta_n,L,R\}.
\]
Let
\[
N:=n+2,
\qquad
C_0=C_1:=B+1,
\qquad
\epsilon:=\frac12.
\]
The payoffs, listed in the order $(u_S,u_R)$, are
\[
\begin{array}{c|cc}
 & 0 & 1\\
\hline
L & (0,C_0) & (0,-C_0)\\
\theta_i & (0,C_1v_i) & (w_i,0)\\
R & (0,0) & (\epsilon,C_1(T-V)).
\end{array}
\]

Initially, the sender prefers action $1$ in every item state
and in $R$, and recommends action $0$ in $L$ by favorable
tie-breaking. The receiver's prior-optimal action is $0$:
its prior payoff advantage over action $1$ is
\[
\frac{2C_0+C_1V}{N}>0.
\]
The sender therefore receives zero in the babbling equilibrium.

Under the sender-preferred recommendation profile, the
receiver instead obtains
\[
d_R(\Gamma_K)=\frac{C_0+C_1(T-V)}{N},
\]
whereas
\[
b_R(\Gamma_K)=\frac{C_0+C_1T}{N}.
\]
Thus the recommendation profile fails receiver obedience
by a gap of $C_1V/N$.

To close this gap, the sender can select an item $i$ by
committing to transfer $w_i$ to the receiver at
$(\theta_i,1)$. This makes the sender indifferent between
the two actions in $\theta_i$. Since $C_1v_i>w_i$, favorable
tie-breaking then makes the sender recommend action $0$.
The promised transfer is not executed. Instead, the sender
forgoes utility $w_i$, while the receiver gains utility
$C_1v_i$ at that state.

Selecting a subset $J$ therefore closes the obedience gap
whenever $\sum_{i\in J}v_i\geq V$, at an expected cost to the
sender of $\sum_{i\in J}w_i/N$. This suggests choosing a
minimum-weight subset that reaches the target value.

There are two possible ways to circumvent this construction.
The sender might try to make action $1$ optimal in a babbling
continuation, or use smaller on-path payments to close the
obedience gap without selecting enough items. The constants
$C_0$ and $C_1$ rule out both possibilities. Switching the
babbling action requires payments exceeding all available
sender utility. Moreover, if a selected subset falls short
of $V$, integrality leaves an obedience gap of at least
$C_1$, before normalization by $N$. Closing that gap through
on-path payments would again exhaust all available sender
utility.

We formalize these observations below.

\begin{proposition}\label{prop:knapsack-reduction}
For the game $\Gamma_K$ constructed above,
\[
\OPT_S^+(\Gamma_K)
=
\frac{B-\kappa+\epsilon}{N},
\]
where
\[
\kappa:=
\min\left\{
\sum_{i\in J}w_i:
J\subseteq\{1,\ldots,n\},
\ \sum_{i\in J}v_i\geq V
\right\}.
\]
This optimum is attained by an equilibrium of $\Gamma_K^S$.
Consequently, $K$ is a yes-instance if and only if
\[
\OPT_S^+(\Gamma_K)
\geq
\frac{B-W+\epsilon}{N}.
\]
\end{proposition}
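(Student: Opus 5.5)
The plan is to prove the two inequalities separately, using Proposition~\ref{prop:binary-pareto} to reduce the analysis of any fixed sender scheme $\tau_S$ to two candidate continuations: babbling and the sender-preferred obedient profile of $\Gamma_K^\tau$. Attainment then follows from Lemma~\ref{lem:opt-eq}-style reasoning: select the sender-best continuation after every scheme and choose an optimal scheme.

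\textbf{Lower bound.} Fix a minimizing set $J$ with $\sum_{i\in J}v_i\ge V$ and weight $\kappa$. Let the sender transfer $w_i$ at $(\theta_i,1)$ for $i\in J$ and make no other payments. In $\Gamma_K^\tau$ the sender is indifferent at each $\theta_i$ with $i\in J$, and favorable tie-breaking gives $a_S(\theta_i)=0$ there, because the receiver strictly prefers $0$ since $C_1v_i>0$ versus $w_i$. I then compute
\[
d_R=\frac{C_0+C_1(T-V)+C_1\sum_{J}v_i}{N}\ \ge\ \frac{C_0+C_1T}{N}=b_R(\Gamma_K).
\]
Here $b_R(\Gamma_K^\tau)\le$ this value, with $R_0$ unchanged and $R_1$ raised only by $\sum_J w_i/N$, which is far below $C_0$. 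Proposition~\ref{prop:binary-pareto} therefore yields an obedient equilibrium. The transfers are never realized, so the sender earns $(\sum_{i\notin J}w_i+\epsilon)/N=(B-\kappa+\epsilon)/N$.

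\textbf{Upper bound.} Fix an arbitrary $\tau_S$ and a continuation equilibrium. By Proposition~\ref{prop:binary-pareto}, the sender's payoff is at most that of babbling or of the obedient profile, whichever is an equilibrium. I first treat babbling. Making action $1$ prior-optimal requires expected transfer at action $1$ of at least $(2C_0+C_1V)/N>(B+1)/N$. The sender's payoff is then at most $(B+\epsilon)/N-(B+1)/N<0$, while babbling on $0$ gives her at most $0$. Both lie below the target because $\kappa\le B$.

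For the obedient profile, let $J$ be the set of item states where the recommendation is $0$. Switching a state's recommendation costs the sender at least $w_i$ in that state, whether through a forgone gain or a realized payment. Switching $R$ or $L$ does not help her. I would argue this using receiver-favorable tie-breaking and the fact that transfers only shift utility, so that the sender's net payoff is at most original welfare-adjusted terms. The obedience constraint $d_R^\tau\ge R_0^\tau$ is the binding one. Modified $d_R^\tau$ equals the original receiver payoff plus realized transfers $X$, and $R_0^\tau\ge R_0$. This gives
\[
C_1\Bigl(\sum_J v_i - V\Bigr)+NX\ \ge 0.
\]
If $\sum_J v_i\ge V$, the sender's payoff is at most $(B-\sum_J w_i+\epsilon)/N-X\le(B-\kappa+\epsilon)/N$. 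Otherwise integrality forces $NX\ge C_1=B+1$, so her payoff is at most $(B+\epsilon-B-1)/N<0$.

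\textbf{Main obstacle.} The delicate step is the accounting in the obedient case. Payments at off-path outcomes may alter which action is sender-preferred at $L$ and $R$, and they may raise $R_0^\tau$ through transfers at action $0$. I would handle this by bounding the sender's net payoff statewise as $u_S(\theta,a)-(\text{realized payment})$ and using only that transfers at action-$0$ outcomes raise $R_0^\tau$ at least as much as they raise $d_R^\tau$. The relevant inequality is $d_R^\tau-R_0^\tau\le d_R-R_0+(\text{on-path transfers at action }1)$.

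The final equivalence is then immediate. Since the threshold $(B-W+\epsilon)/N$ is decreasing in $W$, we have $\OPT_S^+\ge(B-W+\epsilon)/N$ if and only if $\kappa\le W$, which is exactly the condition for $K$ to be a yes-instance.
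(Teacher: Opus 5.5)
Your proof is correct and follows essentially the paper's route: the same item-selection transfers for the lower bound, the same babbling bound, and the same integrality argument applied to the obedience inequality for recommendation $1$, which is your $d_R^\tau\geq R_0^\tau$. The differences are cosmetic: you check obedience via Proposition~\ref{prop:binary-pareto} rather than directly, split on whether $\sum_{i\in J}v_i\geq V$ rather than first assuming $U_S>0$, and support the optimum with sender-best rather than babbling continuations after deviations. The one step you leave implicit is that choosing action $1$ at $L$ or action $0$ at $R$ can only lower the receiver's original payoff (using $T\geq V$); this is what bounds $N$ times the receiver's original payoff under the allocation, minus $NR_0$, by $C_1(\sum_{i\in J}v_i-V)$, and the paper makes it explicit with the indicators $\eta_L,\eta_R$.
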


\begin{proof}
We first bound babbling payoffs, then construct a scheme
for each subset reaching $V$, and finally prove that no
other scheme gives the sender a higher payoff.

\medskip
\noindent
\textbf{Babbling continuations give the sender at most zero.}

Fix any sender scheme $\tau_S$, allowing both transfers
and burning. For $a\in\{0,1\}$, write
\[
t_a:=\sum_{\theta\in\Theta}\tau_S^R(\theta,a),
\qquad
c_a:=\sum_{\theta\in\Theta}\|\tau_S(\theta,a)\|.
\]
Thus $c_a\geq t_a\geq0$.

If action $0$ is played independently of the state, the
sender's expected payoff is $-c_0/N\leq0$.

If action $1$ is played with positive probability in a
babbling equilibrium, it must be prior-optimal for the
receiver. Hence
\[
-C_0+C_1(T-V)+t_1
\geq
C_0+C_1T+t_0,
\]
which implies
\[
t_1-t_0\geq2C_0+C_1V.
\]
The sender's expected payoff from always playing $1$ is
therefore at most
\[
\frac{B+\epsilon-c_1}{N}
\leq
\frac{B+\epsilon-2C_0-C_1V}{N}
<0.
\]
A mixture of the two prior-optimal actions also gives the
sender at most zero.

\medskip
\noindent
\textbf{Every subset reaching $V$ yields the claimed payoff.}

Fix $J\subseteq\{1,\ldots,n\}$ with
$\sum_{i\in J}v_i\geq V$. Define $\tau_S^J$ by
\[
(\tau_S^J)^R(\theta_i,1)=w_i
\qquad\text{for }i\in J,
\]
with all other payments equal to zero.

Under this scheme, the sender recommends $0$ in $L$ and
in the item states indexed by $J$, and recommends $1$
in $R$ and in the remaining item states. In a selected
item state, this follows from sender indifference and
\[
C_1v_i>w_i.
\]

The receiver's obedience inequality for recommendation $1$,
multiplied by $N$, is
\[
C_1(T-V)-C_1\sum_{i\notin J}v_i
=
C_1\left(\sum_{i\in J}v_i-V\right)
\geq0.
\]
If equality holds, receiver tie-breaking selects $1$,
since every sender type making this recommendation
strictly prefers $1$.

For recommendation $0$, the corresponding inequality is
\[
2C_0+\sum_{i\in J}(C_1v_i-w_i)>0.
\]
Thus both recommendations are obeyed.

At unused messages, assign prior beliefs and action $0$.
This response is strictly optimal under the prior, since
\[
2C_0+C_1V-\sum_{i\in J}w_i>0.
\]
These beliefs are consistent with uniform trembles, and
no sender type benefits from using an unused message.
The profile therefore defines a sequential equilibrium
of the continuation game.

All promised transfers are off path, so the sender's
expected payoff is
\[
\frac{B-\sum_{i\in J}w_i+\epsilon}{N}.
\]
In particular, a subset attaining $\kappa$ yields
$(B-\kappa+\epsilon)/N>0$.

\medskip
\noindent
\textbf{No arbitrary scheme can do better.}

Fix any sender scheme and any continuation equilibrium
giving the sender strictly positive utility. By
Proposition~\ref{prop:binary-pareto} and the babbling bound
above, this equilibrium is weakly Pareto-dominated by a
sender-preferred obedient equilibrium of the same
continuation game. It therefore suffices to bound such
obedient equilibria.

Let $a(\theta)$ be the action recommended and played in
this obedient equilibrium, and define
\[
J:=\{i:a(\theta_i)=0\}.
\]
Also let
\[
\eta_R:=\mathbf{1}_{\{a(R)=1\}},
\qquad
\eta_L:=\mathbf{1}_{\{a(L)=1\}},
\qquad
\Pi:=\sum_{\theta\in\Theta}
       \|\tau_S(\theta,a(\theta))\|.
\]
Here $\Pi/N$ is the sender's expected executed payment,
including burning.

The sender's expected utility satisfies
\[
NU_S
=
B-\sum_{i\in J}w_i+\eta_R\epsilon-\Pi.
\]
Since $U_S>0$, it follows that
\[
\Pi<B+\epsilon<C_1.
\]

Receiver obedience after recommendation $1$ gives
\[
0\leq
\sum_{\theta:a(\theta)=1}
\left(
u_R(\theta,1)-u_R(\theta,0)
+\tau_S^R(\theta,1)-\tau_S^R(\theta,0)
\right).
\]
The sum of the original utility differences in this
expression is
\[
-C_1\sum_{i\notin J}v_i
+\eta_R C_1(T-V)-2\eta_L C_0
\leq
C_1\left(\sum_{i\in J}v_i-V\right),
\]
where we use $T-V\geq0$. Moreover, nonnegativity of
transfers implies
\[
\sum_{\theta:a(\theta)=1}
\left(
\tau_S^R(\theta,1)-\tau_S^R(\theta,0)
\right)
\leq
\sum_{\theta:a(\theta)=1}\tau_S^R(\theta,1)
\leq \Pi.
\]
Consequently,
\[
C_1\left(V-\sum_{i\in J}v_i\right)
\leq\Pi<C_1.
\]
Since $V$ and the $v_i$ are integers, this forces
\[
\sum_{i\in J}v_i\geq V.
\]
Thus $\sum_{i\in J}w_i\geq\kappa$, and hence
\[
U_S
\leq
\frac{B-\sum_{i\in J}w_i+\epsilon}{N}
\leq
\frac{B-\kappa+\epsilon}{N}.
\]
Nonpositive-payoff equilibria also satisfy this bound,
because $\kappa\leq B$.

Combining the upper bound with the construction above
proves the formula for $\OPT_S^+(\Gamma_K)$.

Finally, select the constructed obedient continuation
after a scheme associated with a subset attaining
$\kappa$, and a babbling continuation after every other
sender scheme. Every commitment deviation gives the
sender at most zero, whereas the selected scheme gives
strictly positive utility. These commitments and
continuations therefore define an equilibrium of
$\Gamma_K^S$ attaining the optimum.

The Knapsack instance is feasible exactly when
$\kappa\leq W$, which is equivalent to the stated
payoff inequality.
\end{proof}

The construction uses $n+2$ states, and all its rational
parameters have encoding length polynomial in the
encoding length of $K$. Computing the optimal sender
payoff, or an equilibrium attaining it, would therefore
decide $0$--$1$ Knapsack in polynomial time. This proves
Theorem~\ref{thm:opt-send-hard}.

\section{Proof of Theorem~\ref{thm:opt-send-rand}}\label{sec:proof-thm-send-rand}

Randomized commitments allow the sender to make a recommendation
change in a state with any desired probability. This makes the
corresponding utility gains and losses divisible and leads to
a greedy algorithm.

The running time counts arithmetic operations. As in the
definition of $\Gamma^S_\Delta$, the sender publicly commits
to a lottery over payment schemes, but its realization is
observed only by the sender before communication.

\paragraph{Intuition.}
For a commitment $\tau_S$, write
\[
    R_a(\Gamma^{\tau_S})
    :=
    \mathbb{E}[u_R^{\tau_S}(\theta,a)]
\]
for the receiver's expected utility from always playing
action $a$. For a randomized commitment, the expectation
also includes its realization. Let
$d_R(\Gamma^{\tau_S})$ be the receiver's expected utility
under the sender-preferred recommendation profile.

With binary actions, this profile can be completed to a
sequential equilibrium if and only if
\[
    d_R(\Gamma^{\tau_S})
    \geq R_a(\Gamma^{\tau_S})
    \qquad\text{for both }a\in\{0,1\}.
\]
Indeed, obedience following recommendation $0$ corresponds
to the comparison with always playing $1$, and obedience
following recommendation $1$ corresponds to the comparison
with always playing $0$.

Start with no payments and the sender's preferred original
action in every state. If both inequalities already hold,
this is optimal: the sender receives her largest possible
original payoff and makes no payments. Otherwise, the
sender must close the remaining obedience gaps.

There are two useful ways to do so. First, consider a
state $\theta$ in which the players have strictly opposing
preferences, and write
\[
    \Delta_i(\theta)
    :=
    |u_i(\theta,0)-u_i(\theta,1)|
    \qquad(i\in\{S,R\}).
\]
The sender can commit to burn $\Delta_S(\theta)$ on her
originally preferred action. She then becomes indifferent
between the two actions and, by tie-breaking in favor of
the receiver, recommends the other action.
This decreases her expected utility by
$p(\theta)\Delta_S(\theta)$ and increases the receiver's
recommended payoff by $p(\theta)\Delta_R(\theta)$.
The burn is never realized, and neither constant-action
payoff changes.

The efficiency of this exchange is therefore
$\Delta_R(\theta)/\Delta_S(\theta)$.
By randomizing the commitment, the sender can obtain any
fraction of this exchange.

If the sender instead promises to transfer
$\Delta_S(\theta)$ to the receiver on her originally
preferred action, the receiver's recommended payoff
increases by
\[
    p(\theta)
    \max\{\Delta_R(\theta),\Delta_S(\theta)\},
\]
but the constant-action payoff for that action also
increases by $p(\theta)\Delta_S(\theta)$.
In particular, when
$\Delta_R(\theta)\geq\Delta_S(\theta)$, burning gives
the same recommended payoffs with weakly easier obedience
constraints.

The second useful operation is to transfer utility to the
receiver on recommended actions. An aggregate expected
transfer $t$ on outcomes recommending action $a$ increases
the receiver's recommended payoff and the constant-action
payoff for $a$ by $t$, leaving the other constant-action
payoff unchanged. The sender loses $t$ in expected utility.
If necessary, an additional off-path burn on the other
action preserves the sender's preference for the
recommended action. This additional burn is not realized.

To compare these operations, suppose first that the two
constant-action payoffs coincide at $R$, while the
recommended payoff is $d<R$. Let $D:=R-d$.
A transfer of $D$ on each recommendation makes obedience
possible: the recommended payoff increases by $2D$,
while each constant-action payoff increases by $D$.
The sender pays $2D$ to close a common gap of $D$.
Thus, paired transfers have efficiency $1/2$.
Burning is preferable whenever
\[
    \frac{\Delta_R(\theta)}{\Delta_S(\theta)}
    >\frac12.
\]

Now suppose that the constant-action payoffs differ,
say $R_0>R_1$, and that $d<R_0$.
A transfer $t$ on recommendations of action $1$ changes
these quantities to
\[
    d+t,\qquad R_0,\qquad R_1+t.
\]
It closes the larger gap at efficiency $1$.
We can continue until either obedience is achieved or
the two constant-action payoffs become equal.
After equality, the preceding comparison with paired
transfers applies.

These observations suggest the following order.
First use burning opportunities with efficiency at least
$1$. Next, if the constant-action payoffs differ, transfer
on recommendations of the worse constant action until
obedience is achieved or the two benchmarks coincide.
Then use the remaining burning opportunities with
efficiency above $1/2$. Finally, close any remaining gap
with paired transfers.

\paragraph{The algorithm.}
Let $a_S(\theta)$ be the sender's preferred original action,
with ties resolved as specified in the model. Initialize
\[
    d=\sum_{\theta}p(\theta)
        u_R(\theta,a_S(\theta)),
    \qquad
    B_a=\sum_{\theta}p(\theta)u_R(\theta,a),
    \qquad
    T_a=0.
\]
Here, $d$ records the receiver's recommended payoff,
including the transfers planned so far; $B_a$ records
the corresponding constant-action payoff; and $T_a$
records the aggregate expected transfer planned on
recommendations of action $a$.

Let $I$ be the set of states with strictly opposing
preferences. For $\theta\in I$, define
\[
    g_\theta=p(\theta)\Delta_R(\theta),
    \qquad
    c_\theta=p(\theta)\Delta_S(\theta),
    \qquad
    r_\theta=\frac{g_\theta}{c_\theta}.
\]
The quantities $g_\theta$ and $c_\theta$ are the receiver's
gain and the sender's cost from switching the action in
that state with probability one.
Sort these states in decreasing order of $r_\theta$,
and initially set their switching probabilities
$x_\theta$ to zero.

Whenever the algorithm increases $x_\theta$ by $\delta$,
it increases $d$ by $\delta g_\theta$.
Whenever it adds an aggregate transfer $t$ on
recommendation $a$, it increases $T_a$, $d$, and $B_a$
by $t$.
The allocation of the aggregate transfers to particular
outcomes is deferred until the final construction.

If $d\geq\max\{B_0,B_1\}$ initially, return the
no-payment sender-preferred equilibrium.
Otherwise, proceed as follows.
\begin{enumerate}
    \item
    Process the states with $r_\theta\geq1$ in sorted
    order. For each such state, increase its switching
    probability by
    \[
        \delta
        =
        \min\left\{
            1,\,
            \frac{\max\{B_0,B_1\}-d}{g_\theta}
        \right\}.
    \]
    If $d=\max\{B_0,B_1\}$, stop the allocation procedure.

    \item
    If the benchmarks differ, let $a$ be the action
    with the smaller benchmark and add the transfer
    \[
        t=\min\{B_{1-a}-d,\ B_{1-a}-B_a\}
    \]
    on recommendation $a$.
    If $d=\max\{B_0,B_1\}$, stop the allocation procedure.
    Otherwise, the two benchmarks now coincide.

    \item
    Process the remaining states with
    $1/2<r_\theta<1$ in sorted order.
    For each state, increase its switching probability by
    \[
        \delta
        =
        \min\left\{
            1,\,
            \frac{B_0-d}{g_\theta}
        \right\}.
    \]
    If $d=B_0=B_1$, stop the allocation procedure.

    \item
    If a common gap $D:=B_0-d=B_1-d$ remains,
    add an aggregate transfer $D$ on each recommendation.
\end{enumerate}

The output consists of the resulting switching probabilities
and aggregate transfers. The construction in the next
subsection converts them into a lottery over at most two
payment schemes and an obedient continuation equilibrium.

Each state is processed at most once, and all payoffs are
updated incrementally. Thus, the allocation procedure
takes linear time after sorting. The following proof
establishes its optimality and gives the implementation.

\subsection{Correctness of the Algorithm}

The proof has three parts. We first identify the minimum
compensation required by an allocation. We then show that
the algorithm selects an allocation maximizing the sender's
payoff after this compensation. Finally, we implement its
output with a randomized commitment.

\paragraph{Necessary compensation.}
In this proof, write
\[
    R_a:=\sum_{\theta}p(\theta)u_R(\theta,a)
\]
for the original constant-action payoffs, excluding
transfers. For any allocation, let $v$ denote the
receiver's expected original payoff, also excluding
transfers. Define
\[
    h(v):=(R_0-v)_+ +(R_1-v)_+,
    \qquad z_+:=\max\{z,0\}.
\]

This expression is the binary-action specialization
of the minimum nonnegative compliance payments
characterized by Dughmi et al.~\cite{dughmi2019persuasion}.
Below, we first show that the same expression bounds
the transfers required by any continuation equilibrium
in our commitment model. We then construct contracts
attaining this bound.

Consider any randomized commitment and continuation
equilibrium, and let $A$ be the action played. For
$a\in\{0,1\}$, let
\[
    P_a
    :=
    \mathbb{E}\!\left[
        \mathbf{1}_{\{A=a\}}\tau_S^R(\theta,a)
    \right].
\]
Thus, $P_a$ is the unconditional expected transfer
actually paid on outcomes where action $a$ is played.

Aggregate the receiver's optimality inequalities for
action $a$ over messages, weighting each inequality by
the probability that the message occurs and that action
$a$ is played. This gives
\[
\begin{aligned}
    0
    &\leq
    \mathbb{E}\!\left[
        \mathbf{1}_{\{A=a\}}
        \left(
            u_R(\theta,a)-u_R(\theta,1-a)
            +\tau_S^R(\theta,a)
            -\tau_S^R(\theta,1-a)
        \right)
    \right]\\
    &\leq v-R_{1-a}+P_a.
\end{aligned}
\]
The second inequality follows from nonnegativity of
transfers and the identity
\[
    v-R_{1-a}
    =
    \mathbb{E}\!\left[
        \mathbf{1}_{\{A=a\}}
        \bigl(u_R(\theta,a)-u_R(\theta,1-a)\bigr)
    \right].
\]
Therefore,
\[
    P_a\geq(R_{1-a}-v)_+,
    \qquad
    P_0+P_1\geq h(v).
\]
Any realized burning further reduces the sender's payoff.
Consequently, the sender's utility is bounded above by
her original payoff from the allocation minus $h(v)$.

\paragraph{Optimality of the switching rule.}
Define
\[
    F_S:=\sum_{\theta}p(\theta)
        u_S(\theta,a_S(\theta)),
    \qquad
    v^0:=\sum_{\theta}p(\theta)
        u_R(\theta,a_S(\theta)).
\]
Outside $I$, replacing any action by $a_S(\theta)$ weakly
increases both players' original utilities. Since $h$
is nonincreasing, this replacement cannot decrease
the preceding upper bound.

It therefore suffices to optimize over switching
probabilities $x_\theta\in[0,1]$ for states in $I$.
The resulting upper bound is
\[
    F_S-\sum_{\theta\in I}c_\theta x_\theta
    -
    h\left(v^0+\sum_{\theta\in I}g_\theta x_\theta\right).
    \tag{$\star$}
\]

The bound also covers babbling equilibria and commitments
that induce a different constant action. Indeed, writing
$S_a:=\mathbb{E}[u_S(\theta,a)]$, a constant-$a$ allocation
has receiver payoff $R_a$ and therefore gives the sender
at most
\[
    S_a-h(R_a)
    =
    S_a-\bigl(\max\{R_0,R_1\}-R_a\bigr).
\]
This includes the original babbling payoff when $a$ is
a receiver-optimal constant action. Although a constant
allocation may prescribe a non-sender-preferred action
outside $I$, the replacement argument above weakly
improves its objective value. Thus, optimizing
$(\star)$ already accounts for both babbling and buying
the other action; no separate comparison is required.

For any fixed increase in the receiver's original payoff,
the sender's switching cost is minimized by processing
states in decreasing order of $g_\theta/c_\theta$.
Indeed, if a more expensive state is used while a cheaper
one remains unfilled, moving an equal amount of receiver
gain to the cheaper state weakly reduces the cost.

Along this sorted order, the marginal saving in
compensation per unit increase in the receiver's original
payoff is
\[
    \begin{cases}
        2,&v<\min\{R_0,R_1\},\\
        1,&\min\{R_0,R_1\}<v<\max\{R_0,R_1\},\\
        0,&v>\max\{R_0,R_1\}.
    \end{cases}
\]
At the benchmarks, the corresponding one-sided rates apply.
Thus, switches with ratio at least $1$ can be taken
until the larger benchmark is reached. Among the remaining
states, a switch is beneficial precisely when its ratio
exceeds $1/2$ and $v$ is still below the smaller benchmark.
Once either condition fails, no later switch can improve
$(\star)$.

These are exactly the algorithm's switching decisions.
To see this for its third step, suppose without loss
of generality that $R_0\geq R_1$, and let $v$ be the
original receiver payoff after the first step.
If $v\geq R_1$, the second step achieves obedience and
the algorithm stops. Otherwise, the second step transfers
$R_0-R_1$ on recommendation $1$, giving
\[
    d=v+R_0-R_1,
    \qquad B_0=B_1=R_0.
\]
Hence, the remaining common gap is $R_1-v$.
The third step therefore switches states precisely until
the original payoff reaches $R_1$, or until the ratios
above $1/2$ are exhausted.

It follows that the algorithm maximizes $(\star)$.
Its aggregate transfers also satisfy
\[
    T_a=(R_{1-a}-v^*)_+,
\]
where $v^*$ is the final receiver payoff before transfers.
Indeed, if only the larger gap is closed by transfers,
this follows directly from step~2. If both gaps remain,
the initial equalizing transfer and the final paired
transfers give the same formula. Consequently, the total
planned transfer is exactly $h(v^*)$.

\paragraph{A weighted-utility interpretation.}
The optimal allocation also admits a simple interpretation:
there exists $\lambda\in[0,2]$ such that every action
played with positive probability in state $\theta$
maximizes
\[
    u_S(\theta,a)+\lambda u_R(\theta,a).
\]
These are the original utilities, before payments.
In a disagreeing state, switching from the sender's
preferred action to the receiver's preferred action
improves this weighted sum precisely when
\[
    \lambda\Delta_R(\theta)>\Delta_S(\theta).
\]
The algorithm's ordering therefore selects states
according to the threshold
$\Delta_S(\theta)/\Delta_R(\theta)$.
At a partially switched state, the two actions tie
under the selected weight.

If the initial sender-preferred allocation is obedient,
we may take $\lambda=0$. If the allocation procedure
terminates during its first burning phase, a threshold
weight in $(0,1]$ supports the allocation. If it
terminates in the equalizing-transfer phase, we may
take $\lambda=1$. Termination during the second
burning phase admits a threshold weight in $(1,2)$,
and use of the final paired transfers permits
$\lambda=2$.

Thus nonnegative payments lead to an endogenous weight
between zero and two on receiver utility. The endpoint
$\lambda=2$ coincides with the two-action allocation
rule obtained by Dughmi et al.~\cite{dughmi2019persuasion}
when payments of either sign are permitted.

\paragraph{Implementation.}
The algorithm fully switches a prefix of the sorted
states and partially switches at most one additional
state. Its allocation can therefore be represented
by a lottery over at most two deterministic action maps
\[
    a^j:\Theta\longrightarrow\{0,1\}.
\]
If a state is partially switched, the two maps differ
only at that state, and their probabilities implement
its switching probability.

Let $\alpha_a$ be the unconditional probability that
the selected map prescribes action $a$, and set
\[
    z_a:=
    \begin{cases}
        T_a/\alpha_a,&\alpha_a>0,\\
        0,&\alpha_a=0.
    \end{cases}
\]
If $\alpha_a=0$, the other action is always prescribed,
so $v^*=R_{1-a}$ and $T_a=0$.

For each map $a^j$, construct a payment scheme as follows.
At state $\theta$, write $a=a^j(\theta)$.
On action $a$, the sender transfers $z_a$ to the receiver.
On the other action, she transfers nothing to the receiver
and burns
\[
    1+\max\left\{
        0,\,
        u_S(\theta,1-a)-u_S(\theta,a)+z_a
    \right\}.
\]
There are no other payments.

The sender privately observes the realized scheme and
recommends its prescribed action. The off-path burn makes
this action strictly preferable for her in every state
and realization.

For a recommendation $a$ occurring with positive
probability, the receiver's unconditional gain from
following it rather than switching is
\[
\begin{aligned}
    &\mathbb{E}\!\left[
        \mathbf{1}_{\{a^j(\theta)=a\}}
        \bigl(u_R(\theta,a)-u_R(\theta,1-a)+z_a\bigr)
    \right]\\
    &\qquad =
        v^*-R_{1-a}+\alpha_a z_a\\
    &\qquad =
        v^*-R_{1-a}+(R_{1-a}-v^*)_+
        \geq 0.
\end{aligned}
\]
Following the recommendation is therefore optimal.
When the receiver is indifferent, following also respects
tie-breaking in favor of the sender, since every compatible
sender type strictly prefers the prescribed action.

If a recommendation is unused, assign it the prior belief
over states and scheme realizations and let the receiver
best respond. These beliefs are consistent with trembles
that send the unused recommendation with the same vanishing
probability at every sender type. The sender cannot benefit
from deviating, since her prescribed action is strictly
preferred to the other action.

This gives a sequential equilibrium of the continuation
game. No burning is realized, and the total expected
transfer is
\[
    \alpha_0z_0+\alpha_1z_1
    =T_0+T_1=h(v^*).
\]
The sender therefore attains the maximum of $(\star)$,
which upper-bounds her utility under every randomized
commitment and continuation equilibrium.

Finally, specify a babbling continuation equilibrium after
every other announced commitment. Since the constructed
commitment attains the global upper bound, the sender
has no profitable commitment-stage deviation. The resulting
profile is a sequential equilibrium of the extended game.

Computing the initial quantities takes $O(n)$ operations,
sorting takes $O(n\log n)$, and the subsequent scan and
construction of at most two payment schemes take $O(n)$.
The algorithm also has polynomial bit complexity.
The quantities $c_\theta$, $g_\theta$, and the
constant-action benchmarks are formed by differences,
products, and sums of input rationals. Sorting the
ratios uses cross-multiplication. The algorithm
partially switches at most one state; its switching
probability is obtained by dividing the remaining
payoff gap by that state's contribution. All preceding
updates are sums of contributions from fully switched
states. Finally, the implementation computes the
probabilities $\alpha_a$, the transfers
$z_a=T_a/\alpha_a$ when $\alpha_a>0$, and the off-path
burns using a constant number of further arithmetic
operations per state. Consequently, every intermediate
quantity and output parameter has encoding length
polynomial in the input length.
This proves the theorem.

\subsection{Proof of Corollary~\ref{cor:randomization-gap}}\label{sec:cor-random}

The randomized algorithm partially switches at most
one state. Round that switch upward to probability one.
The sender's original expected payoff decreases by
at most $p(\theta)\Delta_S(\theta)$, while the receiver's
original expected payoff increases. Since the required
compensation $h$ is nonincreasing, implementing the
rounded allocation loses at most this amount.

If the resulting sender payoff is at least
$W_{\mathrm{const}}(\Gamma)-b_R(\Gamma)$, babbling
continuations after all other commitments support
the scheme as an equilibrium of $\Gamma^S$.
Otherwise, use the optimal babbling construction
of Proposition~\ref{prop:sen-pessimistic}, which gives
the sender a higher payoff.
The additional computation takes $O(n)$ operations.

\section{Proof of Proposition~\ref{prop:simultaneous-welfare}}\label{sec:proof-worst-welfare}

\begin{proof}
Consider the game $\Gamma$ with equally likely states
$\Theta=\{L,R\}$, actions $A=\{0,1\}$, and the following
payoffs, written in the order $(u_S,u_R)$:
\[
\begin{array}{c|cc}
    &0&1\\ \hline
    L&(4,4)&(0,0)\\
    R&(4,0)&(0,2).
\end{array}
\]

We first show that every equilibrium of the original game
has welfare $6$. Since the sender strictly prefers action
$0$ in both states, every message sent with positive
probability must induce the same maximal probability $q$
of action $0$. The receiver's expected utility is therefore
\[
    2q+(1-q)=1+q.
\]
The receiver can guarantee utility $2$ by ignoring the
message and always playing $0$, so equilibrium requires
$q=1$. Thus, every original equilibrium gives utilities
$(4,2)$ and welfare $6$.

Moreover, the maximum original welfare is $8$ in state
$L$ and $4$ in state $R$. Hence,
\[
    W^*(\Gamma)=\frac12(8+4)=6.
\]
Transfers preserve total utility and burning reduces it,
so no equilibrium with commitments can have welfare
greater than $6$.

We next construct a simultaneous-commitment equilibrium
with welfare $5$. The sender commits to burn $5$ at
$(R,0)$, and the receiver commits to burn $12$ at $(R,0)$.
All other payments are zero. The resulting payoffs are
\[
\begin{array}{c|cc}
    &0&1\\ \hline
    L&(4,4)&(0,0)\\
    R&(-1,-12)&(0,2).
\end{array}
\]
At this commitment profile, prescribe truthful state
disclosure followed by action $0$ in state $L$ and
action $1$ in state $R$. Both players strictly prefer
the prescribed action in each state, so this is a
sequential equilibrium of the continuation game.
No burning is realized, and the expected utilities
are $(2,3)$, giving welfare $5$.

After every other commitment profile, prescribe a
babbling continuation equilibrium in which the receiver
plays a pure best response under the prior, with the
specified tie-breaking convention. Such a continuation
exists after every commitment profile: the sender's
message is independent of the state, and beliefs after
unused messages can also be set equal to the prior.

Consider a unilateral deviation by the receiver.
The sender's commitment provides no transfers to the
receiver, and the prescribed continuation is babbling.
The receiver's original expected payoff from either
constant action is at most $2$, and its own payments
can only reduce that payoff. Every different commitment
therefore gives the receiver at most $2$, strictly less
than its equilibrium payoff of $3$.

Now consider a unilateral deviation by the sender.
With the receiver's commitment fixed, its expected
payoffs before incoming transfers are $-4$ from constant
action $0$ and $1$ from constant action $1$.
Let
\[
    T_a:=\mathbb{E}_{\theta\sim p}
    [\tau_S^R(\theta,a)]
\]
be the expected transfer promised to the receiver on
constant action $a$ under the deviating commitment.
If the babbling continuation plays $0$, receiver
optimality requires
\[
    -4+T_0\geq1+T_1,
\]
so $T_0\geq5$. The sender's expected utility is then
at most $4-T_0\leq-1$.
If the continuation plays $1$, the sender's original
payoff is zero, so its utility after payments is at
most zero. Thus, every different commitment gives the
sender at most $0$, strictly less than its equilibrium
payoff of $2$.

Both commitments are therefore strict best responses,
and the constructed profile is a sequential equilibrium
of $\Gamma^{SR}$. Consequently,
\[
    \mathrm{EQ}_{SR,w}^{-}(\Gamma)\leq5.
\]

Finally, we construct a simultaneous-commitment equilibrium
with welfare $6$. Both players commit to no payments,
and a babbling continuation is prescribed after every
commitment profile. On the equilibrium path, the receiver
plays $0$, yielding utilities $(4,2)$.

A receiver deviation gives at most $2$, because the
continuation is babbling, the sender makes no transfers,
and receiver payments only reduce its utility.
Following a sender deviation, the receiver can still
guarantee at least $2$ by playing constant action $0$:
the sender's incoming transfers are nonnegative.
Since total welfare is at most $6$, the sender's utility
is at most $4$. Neither player can therefore profitably
deviate.

This equilibrium attains the welfare upper bound, and hence
\[
    \mathrm{EQ}_{SR,w}^{-}(\Gamma)
    \leq5
    <6
    =
    \mathrm{EQ}_{SR,w}^{+}(\Gamma)
    =
    W^*(\Gamma).
\]
\end{proof}

\section{Proof of Theorem~\ref{thm:sender-first-guarantees}}\label{sec:proof-sender-first-guarantees}

Write
\[
    S_a=\sum_{\theta}p(\theta)u_S(\theta,a),
    \qquad
    R_a=\sum_{\theta}p(\theta)u_R(\theta,a),
    \qquad
    W_a=S_a+R_a.
\]
We first prove the sender's guarantee. The case
$|A|=1$ is immediate, so suppose that $|A|\ge2$.

Fix $\varepsilon>0$, choose
$c\in\arg\max_{a\in A}W_a$, and set
\[
    s=b_R(\Gamma)-R_c+\varepsilon.
\]
The sender commits to transfer $s$ to the receiver
whenever action $c$ is played, and makes no transfers
on other actions. Define
\[
    h(\theta)=u_S(\theta,c)-s,
    \qquad
    r(\theta,a)
    =u_R(\theta,a)+s\mathbf{1}_{\{a=c\}}.
\]
Thus, $r$ is the receiver's utility after accounting
for the sender's transfers, but before accounting
for the receiver's own commitments. Let
\[
    D=\max_{\theta\in\Theta,\ a,b\in A}
       |r(\theta,a)-r(\theta,b)|.
\]

Enumerate the actions as
$c=a_0,a_1,\ldots,a_{m-1}$.
Choose $G>6D$ sufficiently large that
\[
    u_S(\theta,a)-h(\theta)+G\ge0
    \qquad
    \text{for every $\theta$ and $a\ne c$},
\]
and put $K_{a_j}=jG$. In addition to the transfers
above, the sender commits to burn
\[
    u_S(\theta,a)-h(\theta)+K_a
\]
at every outcome $(\theta,a)$ with $a\ne c$, and
burns nothing when $c$ is played. All these amounts
are nonnegative. Before the receiver's commitment,
the sender's resulting utility is
\[
    h(\theta)-K_a.
\]
In particular, her ranking of actions is strict and
independent of the state.

If the receiver commits to no payments, every
continuation equilibrium plays $c$. Indeed, favorable
receiver tie-breaking makes the response to every
message pure: among receiver-optimal actions, the
sender strictly prefers the one with smallest $K_a$.
Since the sender ranks these responses identically
in every state, all on-path messages induce the same
action. That action must be optimal under the prior.
The unique prior-optimal action for the receiver is
$c$, since
\[
    \mathbb{E}_{\theta}[r(\theta,c)]
    =b_R(\Gamma)+\varepsilon,
    \qquad
    \mathbb{E}_{\theta}[r(\theta,a)]
    =R_a\le b_R(\Gamma)
    \quad(a\ne c).
\]
The receiver can therefore obtain
$b_R(\Gamma)+\varepsilon$ by making no payments.

We next show that any optimal receiver commitment
also results in action $c$, with no realized receiver
payments or burns. Fix an arbitrary receiver
commitment and a continuation equilibrium. Let $Q$
and $H$ denote, respectively, the receiver's expected
transfer to the sender and expected burning.

For each message $m$, let
\[
    k_m=\sum_{a\in A}\sigma_R(a\mid m)K_a,
    \qquad
    k_*=\min_{m\in M}k_m.
\]
The minimum includes off-path messages. By always
sending a message attaining $k_*$, the sender obtains
at least $\mathbb{E}_{\theta}[h(\theta)]-k_*$,
because incoming transfers are nonnegative. Her
equilibrium payoff is
$\mathbb{E}_{\theta}[h(\theta)]-\mathbb{E}[K_a]+Q$.
Consequently,
\begin{equation}\label{eq:sender-first-rank-incentive}
    \mathbb{E}[K_a]-k_*\le Q.
\end{equation}

For an on-path message $m$, let $L_m$ be the smallest
$K_a$ among actions played with positive probability
after $m$. Let $q_a(m)$ be the receiver's expected
transfer to the sender conditional on message $m$
and action $a$. Favorable receiver tie-breaking
implies that all actions in the receiver's support
give the same conditional sender payoff. Comparing
a supported action $a$ with a supported action
whose penalty is $L_m$ gives
\[
    q_a(m)\ge K_a-L_m.
\]
Averaging over messages and actions yields
\begin{equation}\label{eq:sender-first-rank-mixing}
    \mathbb{E}[k_m-L_m]\le Q.
\end{equation}

Since $k_m\ge k_*$ and every action played after $m$
has $K_a\ge L_m$, we have
\[
    (k_*-K_a)_+\le k_m-L_m
\]
on every realized outcome. Using
\eqref{eq:sender-first-rank-incentive} and
\eqref{eq:sender-first-rank-mixing}, we obtain
\[
\begin{aligned}
    \mathbb{E}|K_a-k_*|
    &=
    \mathbb{E}[K_a]-k_*
    +2\mathbb{E}[(k_*-K_a)_+]\\
    &\le 3Q.
\end{aligned}
\]
Choose an action $j$ whose $K_j$ is closest to $k_*$.
Because distinct penalties are separated by at
least $G$, every other action has distance at least
$G/2$ from $k_*$. Therefore,
\[
    \Pr(a\ne j)\le\frac{6Q}{G}.
\]

The receiver's expected utility is consequently
bounded by
\[
\begin{aligned}
    \mathbb{E}[r(\theta,a)]-Q-H
    &\le
    \mathbb{E}_{\theta}[r(\theta,j)]
    +D\Pr(a\ne j)-Q-H\\
    &\le
    \mathbb{E}_{\theta}[r(\theta,j)]
    -\left(1-\frac{6D}{G}\right)Q-H\\
    &\le
    b_R(\Gamma)+\varepsilon
    -\left(1-\frac{6D}{G}\right)Q-H.
\end{aligned}
\]
Since $G>6D$ and the receiver can obtain
$b_R(\Gamma)+\varepsilon$ by making no payments,
an optimal receiver commitment must have $Q=H=0$.
The preceding bounds then imply that a single action
is played almost surely, and that this action is $c$.

Thus, regardless of the optimal receiver response,
the sender obtains
\[
    S_c-s
    =W_{\mathrm{const}}(\Gamma)
     -b_R(\Gamma)-\varepsilon.
\]
This commitment deviation is available for every
$\varepsilon>0$. Hence every equilibrium of
$\Gamma^{S\to R}$ gives the sender at least
$W_{\mathrm{const}}(\Gamma)-b_R(\Gamma)$.

The receiver can guarantee at least $b_R(\Gamma)$
after every sender commitment by making no payments
and subsequently choosing a prior-best original
action, independently of the sender's message.
Incoming sender transfers are nonnegative.
Adding the two guarantees gives
\[
    \mathrm{EQ}_{S\to R,w}^{-}(\Gamma)
    \ge W_{\mathrm{const}}(\Gamma).
\]

To prove equality, prescribe a babbling continuation
after every completed commitment history, and let
the receiver make no payments after every sender
commitment. This is an optimal receiver response:
for any fixed sender commitment, a babbling
continuation following a receiver deviation gives
the receiver at most
\[
    \max_{a\in A}
    \mathbb{E}_{\theta}
    [u_R(\theta,a)+\tau_S^R(\theta,a)],
\]
which is precisely what she obtains by making no
payments.

The sender chooses $c\in\arg\max_a W_a$ and transfers
$b_R(\Gamma)-R_c$ on action $c$ in every state,
with no other payments or burns. Action $c$ is
receiver-optimal under the prior. Among any other
prior-optimal actions, favorable tie-breaking selects
one giving the same maximal sender payoff, or selects
$c$. The resulting utilities are
\[
    W_{\mathrm{const}}(\Gamma)-b_R(\Gamma)
    \quad\text{and}\quad
    b_R(\Gamma).
\]
No sender deviation is profitable: under the
prescribed babbling continuation, the receiver
obtains at least $b_R(\Gamma)$, while total welfare
is at most $W_{\mathrm{const}}(\Gamma)$.

This defines a sequential equilibrium attaining
$W_{\mathrm{const}}(\Gamma)$ welfare. Computing the
prior payoffs of all actions, selecting $c$, and
specifying its transfer requires $O(n|A|)$
arithmetic operations.

\section{Proof of Proposition~\ref{prop:bilateral-efficiency-obstruction}}\label{sec:prop-bilateral-efficiency-obstruction}

Consider the game $\Gamma$ with equally likely states
$\Theta=\{L,R\}$, actions $A=\{0,1\}$, and payoffs
written in the order $(u_S,u_R)$:
\[
\begin{array}{c|cc}
    &0&1\\ \hline
    L&(0,2)&(3,0)\\
    R&(3,0)&(0,2).
\end{array}
\]
Both constant actions give expected utilities
\[
    b_S(\Gamma)=\frac32,
    \qquad
    b_R(\Gamma)=1.
\]
Maximum welfare is attained by playing $1$ in state
$L$ and $0$ in state $R$, giving
\[
    W^*(\Gamma)=3.
\]

We first show that every sequential equilibrium under
either simultaneous or sender-first commitments has
welfare at most $5/2$.

Let $x$ be the equilibrium probability that the sender's
originally preferred action is played. The players'
original expected utilities are then $3x$ for the sender
and $2(1-x)$ for the receiver.
Let $T$ and $Q$ denote expected realized transfers from
sender to receiver and from receiver to sender,
respectively, and let $B_S,B_R\geq0$ denote their expected
realized burning. Their equilibrium utilities are
\[
    U_S=3x-T+Q-B_S,
    \qquad
    U_R=2(1-x)+T-Q-B_R.
\]

The sender can guarantee at least $3/2$ by committing
to no payments. To see this, consider any continuation
equilibrium after such a commitment, including any
subsequent receiver commitment in the sender-first
protocol. By sending a fixed message independently
of the state, the sender induces a state-independent
action distribution. Every such distribution gives
her original expected utility $3/2$, and incoming
receiver transfers are nonnegative.
Her equilibrium continuation payoff must therefore
be at least $3/2$. Hence,
\[
    3x-T+Q-B_S\geq\frac32,
\]
or equivalently,
\[
    T\leq3x+Q-B_S-\frac32.
    \tag{1}
\]

The receiver can deviate to no payments while holding
the sender's commitment fixed under either protocol.
For each constant action $a$, define
\[
    t_a
    :=
    \mathbb{E}_{\theta\sim p}
    [\tau_S^R(\theta,a)].
\]
Each constant action gives the receiver original
expected utility $1$. Moreover, nonnegativity of
transfers implies
\[
    t_0+t_1\geq T,
\]
because the left-hand side includes every sender
transfer realized under the equilibrium allocation.
After committing to no payments, the receiver can
therefore guarantee
\[
    1+\max\{t_0,t_1\}\geq1+\frac T2.
\]
Consequently,
\[
    2(1-x)+T-Q-B_R\geq1+\frac T2,
\]
which gives
\[
    T\geq4x+2Q+2B_R-2.
    \tag{2}
\]

Combining (1) and (2) yields
\[
    x+Q+B_S+2B_R\leq\frac12.
\]
In particular, $x\leq1/2$, and equilibrium welfare
satisfies
\[
    U_S+U_R
    =2+x-B_S-B_R
    \leq\frac52.
\]

It remains to show that this bound is attained under
both protocols. Throughout the following constructions,
prescribe a babbling continuation equilibrium after
every commitment profile, with the receiver choosing
a pure best constant action.

For simultaneous commitments, let both players commit
to no payments. A unilateral deviation by the sender
gives at most $3/2$: the receiver still makes no
transfers, every constant action gives the sender
original expected utility $3/2$, and her own payments
only reduce that utility. Similarly, a receiver
deviation gives at most $1$. Thus, zero commitments
followed by babbling form a sequential equilibrium
with welfare $5/2$.

For sender-first commitments, prescribe that the
receiver commits to no payments after every sender
commitment. This is an optimal response under the
specified babbling continuations. Indeed, fixing
the sender's commitment, the receiver can obtain
\[
    \max_{a\in\{0,1\}}(1+t_a)
\]
by making no payments and choosing a best constant
action. No receiver commitment can yield more in
a babbling continuation, since its own payments
only subtract from these constant-action payoffs.
Anticipating this response, the sender obtains at
most $3/2$ from any commitment and attains $3/2$
by committing to no payments. This again defines
a sequential equilibrium with welfare $5/2$.

Therefore,
\[
    \mathrm{EQ}_{SR,w}^{+}(\Gamma)
    =
    \mathrm{EQ}_{S\to R,w}^{+}(\Gamma)
    =
    b_S(\Gamma)+b_R(\Gamma)
    =
    \frac52
    <3
    =
    W^*(\Gamma).
\]

\section{Receiver-First Commitments: Shared Constructions}
\label{sec:receiver-first-constructions}

We first establish two implementation lemmas that apply
to arbitrary finite action sets. These will be used both
to prove the general efficiency theorem and to characterize
equilibrium payoffs with binary actions.

For a receiver commitment $\rho_R$, write
\[
    S_x^\rho
    :=\mathbb{E}_{\theta}
      [u_S^{(\mathbf{0},\rho_R)}(\theta,x)],
    \qquad
    R_x^\rho
    :=\mathbb{E}_{\theta}
      [u_R^{(\mathbf{0},\rho_R)}(\theta,x)],
\]
\[
    W_x^\rho:=S_x^\rho+R_x^\rho,
    \qquad
    b^\rho:=\max_{x\in A}R_x^\rho.
\]
Thus these benchmarks incorporate the receiver's commitment
but no sender payments.

\begin{lemma}\label{lem:receiver-first-babbling}
Fix any receiver commitment $\rho_R$. If babbling is
prescribed after every subsequent sender commitment,
the sender has an optimal response giving utilities
\[
    \bigl(V(\rho_R),b^\rho\bigr),
    \qquad
    V(\rho_R):=\max_{x\in A}W_x^\rho-b^\rho.
\]
Moreover, $b^\rho\leq b_R(\Gamma)$.
\end{lemma}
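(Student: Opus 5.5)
The plan is to treat the game $\Gamma^{(\mathbf{0},\rho_R)}$ as a new sender-receiver game and to reuse the logic behind the babbling construction of Proposition~\ref{prop:sen-pessimistic} and the upper bound in its proof. One subtlety comes first. If $\rho_R$ is a lottery, its realization is private to the receiver. In a babbling continuation, however, the sender's message carries no information, and the receiver, who knows its own realization, simply best-responds to the prior over states given that realization. This is still a pure constant-action choice for each realization. To keep the benchmark $b^\rho$ as stated, I would prescribe, after each sender commitment, a babbling continuation in which every realization of the receiver plays the same action $x$, chosen to be optimal in expectation. If a realization-dependent response is optimal instead, its value only helps the receiver; I would handle this by noting that the lemma only asks for the existence of \emph{an} optimal sender response with these utilities under the prescribed babbling selection. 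If the randomized case becomes delicate, I would first prove the deterministic case, where $b^\rho$ is exactly the receiver's prior-best payoff, and then remark that the argument is realization-wise.

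\textbf{Upper bound.} Fix any sender commitment $\tau_S$ after $\rho_R$, followed by babbling. The action is constant, say $x$, chosen as a receiver best response under the prior with the modified utilities. Incoming sender transfers are nonnegative, and the receiver could choose any $x'$ and earn at least $R_{x'}^\rho$. Hence the receiver gets at least $b^\rho$. Total welfare under a constant action $x$ is $W_x^\rho$ minus any sender burning, since transfers cancel (Remark~\ref{rem:welfare}). It is therefore at most $\max_x W_x^\rho$. So the sender gets at most $V(\rho_R)$.

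\textbf{Attainment.} Pick $c\in\arg\max_x W_x^\rho$ and have the sender transfer $t:=b^\rho-R_c^\rho\ge0$ on action $c$ in every state, with no other payments. Then $c$ gives the receiver $b^\rho$ and is prior-optimal. Any other prior-optimal action $x$ gives the receiver $R_x^\rho=b^\rho$, and favorable tie-breaking selects the action maximizing the sender's payoff. Because the receiver's payoff is the same across these tied actions, the selected action has welfare at least $W_c^\rho$, hence exactly the maximum. So the sender gets exactly $V(\rho_R)$ and the receiver $b^\rho$. This matches the upper bound, so the commitment is an optimal response.

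\textbf{Last claim.} $b^\rho\le b_R(\Gamma)$ follows because $R_x^\rho=R_x(\Gamma)-\mathbb{E}\|\rho_R(\theta,x)\|\le R_x(\Gamma)$ by nonnegativity of the receiver's payments, exactly as in Proposition~\ref{prop:rec-pessimistic}. The main obstacle I expect is the tie-breaking step together with the randomized-realization bookkeeping. I would address it by computing welfare as the sum of the receiver's fixed payoff $b^\rho$ and the sender's tie-broken maximum.
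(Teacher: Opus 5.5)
Your argument is correct and follows the paper's proof essentially step for step. The upper bound combines the receiver's babbling guarantee $b^\rho$ with the bound on constant-action welfare, attainment uses a transfer of $b^\rho-R_c^\rho$ on a welfare-maximizing action $c$ together with favorable tie-breaking, and $b^\rho\le b_R(\Gamma)$ follows from nonnegativity of the receiver's payments. The lottery discussion is unnecessary and should be dropped. Here $\rho_R$ is a deterministic scheme in $\Gamma^{R\to S}$, as the benchmarks average only over $\theta$. Your suggested fix of having every private realization play one expectation-optimal action would not be sequentially rational in general.
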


\begin{proof}
In any babbling continuation, the receiver obtains at
least $b^\rho$, since incoming sender transfers are
nonnegative. Constant-action welfare, including any
sender burning, is at most $\max_x W_x^\rho$.
Thus the sender's payoff is at most $V(\rho_R)$.

To attain this bound, choose
$c\in\arg\max_x W_x^\rho$ and let the sender transfer
$b^\rho-R_c^\rho$ on action $c$ in every state, with
no other payments. Action $c$ is then receiver-optimal
under the prior. Any other receiver-optimal action
$x\neq c$ satisfies $R_x^\rho=b^\rho$ and
\[
    S_x^\rho=W_x^\rho-b^\rho
    \leq W_c^\rho-b^\rho.
\]
Thus favorable tie-breaking permits a babbling continuation
playing $c$, which gives the claimed utilities.

Finally, the receiver's own payments only subtract from
its original utility, so
\[
    b^\rho
    =
    \max_{x\in A}
    \mathbb{E}_{\theta}
    \bigl[u_R(\theta,x)-\|\rho_R(\theta,x)\|\bigr]
    \leq b_R(\Gamma).
\]
\end{proof}

Lemma~\ref{lem:receiver-first-babbling} provides credible
continuation strategies after receiver deviations:
prescribe babbling after every sender commitment and
select the sender's optimal response described in the
lemma. The receiver then obtains at most its original
babbling payoff.

The next lemma implements nonconstant action rules.
Nonconstancy is useful because every action is off path
in some state. The receiver can therefore reduce the
attractiveness of alternative constant actions without
changing the prescribed payoffs.

\begin{lemma}\label{lem:nonconstant-implementation}
Let $\Gamma$ be a finite sender-receiver game, and let
$f:\Theta\to A$ be a nonconstant action rule. Write
\[
    W(f):=\sum_{\theta}p(\theta)w(\theta,f(\theta)),
    \qquad
    \underline S:=\min_{a\in A}S_a(\Gamma).
\]
Every utility profile $(U_S,U_R)$ satisfying
\[
    U_S\geq\underline S,
    \qquad
    U_R\geq b_R(\Gamma),
    \qquad
    U_S+U_R\leq W(f)
\]
is implemented by a sequential equilibrium of
$\Gamma^{R\to S}$ inducing $f$.

For rational input data and a rational target profile,
the equilibrium can be constructed in $O(n|A|)$
arithmetic operations and has a description of
polynomial encoding length.
\end{lemma}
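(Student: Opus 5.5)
The plan is to have the receiver's commitment do almost all the work. It will fully discipline play along $f$ while making every constant-action alternative unattractive for the sender. Fix a target $(U_S,U_R)$ satisfying the three inequalities. Let $a_*\in\arg\min_a S_a(\Gamma)$, so that $S_{a_*}=\underline S$. Since $f$ is nonconstant, every action $x$ is off path, that is $x\neq f(\theta)$, in at least one state. This is the only place nonconstancy enters.

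\textbf{Receiver's scheme $\tau_R$.} At every off-path outcome $(\theta,b)$ with $b\neq f(\theta)$, the receiver burns a large amount. The amount is $M$ if $b=a_*$ and $M+K$ if $b\neq a_*$, with $K,M$ large; the constants are chosen after the transfers below are fixed. On path, the receiver transfers $y(\theta)\geq0$ to the sender at $(\theta,f(\theta))$, with the constraint $y(\theta)=0$ whenever $f(\theta)=a_*$. Let $S_f,R_f$ be the original on-path payoffs, so $S_f+R_f=W(f)$.

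\textbf{Dividing surplus.} If $U_S>S_f$, the receiver supplies the expected amount $U_S-S_f$ through $y$. This is placed only in states with $f(\theta)\neq a_*$, which exist since $f$ is nonconstant. Otherwise $y\equiv0$, and the sender's on-path transfer or burn in her own scheme removes $S_f-U_S$. Any remaining slack $W(f)-U_S-U_R\geq0$ is burned on path by the receiver. The sender's prescribed scheme also burns enough at off-path outcomes $(\theta,b)$, $b\neq f(\theta)$, to make $f(\theta)$ her strict favourite in each state. The continuation is truthful revelation of $\theta$ followed by $f(\theta)$, which uses $|M|\geq n+1$. Both players strictly prefer the prescribed action in every state, by the huge off-path burns. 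At unused messages we assign a prior belief, and the receiver's response there is harmless because the sender strictly prefers her prescribed message. On-path payoffs are then exactly $(U_S,U_R)$ with no off-path burning realized.

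\textbf{Commitment-stage deviations.} After a receiver deviation $\rho_R$, play follows Lemma~\ref{lem:receiver-first-babbling}: babbling after every sender commitment and the sender's optimal babbling response. This gives the receiver $b^\rho\leq b_R(\Gamma)\leq U_R$. After a sender deviation, we prescribe a babbling continuation; by Lemma~\ref{lem:receiver-first-babbling} her payoff is at most
\[
V(\tau_R)=\max_x W_x^{\tau_R}-b^{\tau_R}.
\]
By construction, $R_x^{\tau_R}$ is smaller than $R_{a_*}^{\tau_R}$ by roughly $K\cdot p(\text{off-path states of }x)$ for $x\neq a_*$, while $W_x^{\tau_R}$ drops by the same order. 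Hence for $K$ large the maximizing term is $x=a_*$. The receiver's transfers vanish on every outcome $(\theta,a_*)$: they are zero on path where $f(\theta)=a_*$ by choice, and those outcomes carry burns elsewhere. Therefore $W_{a_*}^{\tau_R}-R_{a_*}^{\tau_R}=S_{a_*}=\underline S\leq U_S$, so no sender deviation is profitable.

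\textbf{Main obstacle.} The main obstacle is exactly this last bound: making $V(\tau_R)$ reduce to $\underline S$. That requires that on-path receiver transfers never leak into the sender's babbling payoff for the cheapest constant action, and that the $K$-gap dominates every other action. I would verify this by writing $W_x^{\tau_R}-b^{\tau_R}$ explicitly and choosing $K$ larger than all finite payoff and transfer magnitudes divided by the minimal prior probability. The operation count $O(n|A|)$ and the polynomial encoding follow because all amounts are explicit sums and differences of input data.
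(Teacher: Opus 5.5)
Your proposal is correct and follows the paper's proof: choose $a_*\in\arg\min_a S_a$, keep every receiver-to-sender transfer off action $a_*$, and reach the target with on-path transfers plus receiver burning. Off-path burns enforce $f$, and extra receiver burns on the other actions make $a_*$ maximize both $R^{\tau_R}_x$ and $W^{\tau_R}_x$, so $V(\tau_R)=S_{a_*}=\underline S\le U_S$, with deviations handled by Lemma~\ref{lem:receiver-first-babbling}.

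Two details need tightening. First, when $U_S\le S_f$ the sender must reduce her share by a \emph{transfer}, since a burn would leave the receiver only $R_f$, possibly below $U_R$. Second, the base burn $M$ does not cancel in $R^{\tau_R}_{a_*}-R^{\tau_R}_x$, because it is weighted by $p(f\neq a_*)$ against $p(f\neq x)$. So $K$ must be chosen after $M$, e.g.\ $K>(M+C)/\min_\theta p(\theta)$ for a suitable constant $C$ depending only on payoffs and on-path payments. This is what the paper's extra burn at a single off-path outcome $(\theta_x,x)$, computed after all other payments are fixed, achieves.
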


\begin{proof}
Choose $a\in\arg\min_{x\in A}S_x(\Gamma)$, and let
\[
    E:=\{\theta:f(\theta)\neq a\}.
\]
Since $f$ is nonconstant, $E$ has positive probability.
Choose nonnegative numbers $k_\theta,d_\theta$, equal
to zero outside $E$, such that
\[
    \sum_\theta p(\theta)k_\theta=U_S-\underline S,
    \qquad
    \sum_\theta p(\theta)d_\theta=W(f)-U_S-U_R.
\]
For example, both sums can be concentrated on one
state in $E$.

Define the desired statewise payoffs
\[
    z_\theta:=u_S(\theta,a)+k_\theta,
    \qquad
    r_\theta:=w(\theta,f(\theta))-z_\theta-d_\theta.
\]
Their expectations are $U_S$ and $U_R$, respectively.

\paragraph{Payments on prescribed outcomes.}
At $(\theta,f(\theta))$, let the sender transfer
\[
    \bigl(u_S(\theta,f(\theta))-z_\theta\bigr)_+
\]
to the receiver, and let the receiver transfer
\[
    \bigl(z_\theta-u_S(\theta,f(\theta))\bigr)_+
\]
to the sender, where $y_+:=\max\{y,0\}$.
The receiver additionally burns $d_\theta$.
There is no sender burning on the prescribed outcome.
The resulting payoffs are exactly
$(z_\theta,r_\theta)$.

\paragraph{Payments on off-path outcomes.}
For each $x\neq f(\theta)$, make no transfers and let
the sender and receiver burn, respectively,
\[
    1+\max\{0,u_S(\theta,x)-z_\theta\},
    \qquad
    1+\max\{0,u_R(\theta,x)-r_\theta\}.
\]
Both players then strictly prefer $f(\theta)$ to
every other action at state $\theta$.

Let $\rho_R$ denote the receiver scheme constructed
so far. The receiver makes no transfers to the sender
on action $a$. Indeed, when $f(\theta)=a$, we have
$k_\theta=d_\theta=0$ and $z_\theta=u_S(\theta,a)$,
so neither player pays on the prescribed outcome.
When $f(\theta)\neq a$, action $a$ is off path and
carries no transfers. Consequently,
\[
    S_a^\rho=S_a(\Gamma)=\underline S.
\]

For each action $x\neq a$, choose a state $\theta_x$
with $f(\theta_x)\neq x$. Such a state exists because
$f$ is nonconstant. Increase the receiver's burn at
$(\theta_x,x)$ until
\[
    R_x^\rho<R_a^\rho
    \qquad\text{and}\qquad
    W_x^\rho<W_a^\rho.
\]
These additional burns are off path, preserve the
statewise incentive comparisons, and do not affect
either benchmark for action $a$. They imply
\[
    V(\rho_R)
    =W_a^\rho-R_a^\rho
    =\underline S.
\]

\paragraph{Equilibrium strategies.}
At the constructed commitment profile, the sender
recommends $f(\theta)$ and the receiver follows.
Both players strictly prefer the prescribed action
in every state. Thus following each on-path recommendation
is strictly optimal for the receiver, and no sender
type benefits from another message. At unused messages,
assign prior beliefs and a receiver best response.
These beliefs are consistent with state-independent
trembles. This defines a continuation equilibrium
with expected utilities $(U_S,U_R)$.

After every different sender commitment following
$\rho_R$, prescribe babbling. By
Lemma~\ref{lem:receiver-first-babbling}, any such deviation
gives the sender at most
$V(\rho_R)=\underline S\leq U_S$.
The prescribed sender commitment is therefore optimal.

After every different receiver commitment, prescribe
babbling after every sender commitment and select
the optimal sender response from
Lemma~\ref{lem:receiver-first-babbling}.
These continuations give the receiver at most
$b_R(\Gamma)\leq U_R$. Its prescribed commitment is
therefore optimal as well.

\paragraph{Computation.}
The benchmarks and prescribed payments require
$O(n|A|)$ arithmetic operations. Using the benchmarks
before the additional burns, the increase at
$(\theta_x,x)$ can be chosen as
\[
    \frac{
        1+\max\{0,R_x^\rho-R_a^\rho,
                   W_x^\rho-W_a^\rho\}
    }{p(\theta_x)}.
\]
Finding the states $\theta_x$ also takes at most
$O(n|A|)$ operations. Off-path behavior is represented
by the rules above. All amounts are obtained through
sums, differences, maxima, and divisions by positive
input probabilities, so their encoding lengths are
polynomial in the rational input length.
\end{proof}

\section{Proof of Theorem~\ref{thm:opt-welfare}}\label{sec:thm-opt-welfare}

Transfers preserve welfare and burning reduces it,
so $W^*(\Gamma)$ is an upper bound. We construct an
equilibrium attaining it.

Suppose first that some constant action $c$ is efficient,
so $W_c(\Gamma)=W^*(\Gamma)$. Let the receiver commit
to no payments. After every receiver commitment,
prescribe babbling after all sender commitments and
select an optimal sender response as in
Lemma~\ref{lem:receiver-first-babbling}.

At the zero receiver scheme, choose the optimal sender
response inducing $c$. It gives utilities
\[
    \bigl(W^*(\Gamma)-b_R(\Gamma),\,b_R(\Gamma)\bigr).
\]
The sender's response is optimal, and any receiver
deviation gives at most $b_R(\Gamma)$ by the lemma.
Thus this is an efficient equilibrium.

Otherwise, choose a statewise efficient rule $f$,
which must be nonconstant, and let
\[
    \underline S:=\min_{a\in A}S_a(\Gamma).
\]
If $j$ is an original receiver-best constant action,
then
\[
    W^*(\Gamma)\geq W_j(\Gamma)
    =S_j(\Gamma)+b_R(\Gamma)
    \geq\underline S+b_R(\Gamma).
\]
Hence the profile
\[
    \bigl(\underline S,\,
          W^*(\Gamma)-\underline S\bigr)
\]
satisfies the conditions of
Lemma~\ref{lem:nonconstant-implementation}.
Applying that lemma to $f$ yields an efficient
equilibrium.

An efficient rule, all constant-action benchmarks,
and the resulting equilibrium description can be
computed in $O(n|A|)$ arithmetic operations.
This proves the theorem.

\section{Proof of Theorem~\ref{thm:receiver-first-payoffs}}\label{sec:thm-receiver-first-payoffs}

Throughout this proof, $A=\{0,1\}$.
We use the notation and constructions of
Section~\ref{sec:receiver-first-constructions}.

In addition to the babbling-response calculation,
Proposition~\ref{prop:sen-pessimistic} provides a guarantee against arbitrary continuation selection. Applying
Proposition~\ref{prop:sen-pessimistic} to the game
after a fixed receiver commitment $\rho_R$ shows that
the sender can guarantee
\[
    V(\rho_R)=\max_{a\in\{0,1\}}W_a^\rho-b^\rho.
\]
Therefore, her payoff following $\rho_R$ in any
equilibrium of the commitment extension is at least
$V(\rho_R)$.

\paragraph{Necessary conditions.}
Fix a sequential equilibrium with expected utilities
$(U_S,U_R)$.

The receiver can commit to no payments and then
guarantee its original best constant-action payoff,
since incoming sender transfers are nonnegative.
Thus
\[
    U_R\geq\overline R.
\]

After any receiver commitment, the sender can commit
to no payments. In any resulting continuation equilibrium,
sending a fixed message independently of the state
induces a state-independent action distribution.
Her original expected utility from this distribution
is at least $\underline S$, and incoming transfers
are nonnegative. Consequently,
\[
    U_S\geq\underline S.
\]

We next show that the equilibrium continuation can
be replaced by one inducing a deterministic action
rule without changing either player's expected payoff.
All preferences in this argument refer to utilities
after the fixed equilibrium commitments.

Let $\ell$ and $h$ be the smallest and largest
probabilities of action $0$ induced by messages.
Suppose first that $\ell<h$. Sender optimality,
including favorable tie-breaking, implies that every
state in which at least one player has a strict
preference uses messages inducing the appropriate
extreme probability.

Replace the receiver's response at every lower-extreme
message by action $1$, and at every upper-extreme
message by action $0$. This preserves sender optimality.
Whenever a response changes, the receiver was mixing
and hence indifferent between actions. Favorable
tie-breaking also implies sender indifference in
conditional expected utility at that message.
Thus these changes preserve both players' expected
payoffs.

States in which both players are indifferent may use
intermediate messages or randomize between extremes.
Assign all these types to one upper-extreme message
and update on-path beliefs by Bayes' rule. Their
contribution to either player's unnormalized payoff
difference between actions is zero, so receiver
optimality and favorable tie-breaking are preserved.
The reassigned types remain indifferent.
Messages that become unused can retain their previous
beliefs and their responses after purification;
appropriate trembles support these beliefs.
The resulting rule is deterministic and preserves
both expected payoffs.

If $\ell=h$, the action distribution is state-independent.
If it is mixed, receiver optimality and favorable
tie-breaking imply that both players are indifferent
under the prior between the two constant actions.
A pure babbling continuation therefore gives the same
payoffs. Thus a deterministic rule can be selected
in this case as well.

Because the replacement preserves payoffs at the
equilibrium commitment profile and leaves all other
continuations unchanged, commitment-stage incentives
are preserved.

If the resulting deterministic rule is nonconstant,
its original welfare is at most $W_{\mathrm{nc}}(\Gamma)$.
Transfers preserve welfare and burning reduces it,
so
\[
    U_S+U_R\leq W_{\mathrm{nc}}(\Gamma).
\]
Together with the individual lower bounds, this gives
\[
    (U_S,U_R)\in\mathcal U_{\mathrm{nc}}(\Gamma).
\]

Suppose instead that the rule always plays action $c$.
Let $\rho_R$ be the equilibrium receiver commitment,
and let $B_i(c)$ denote player $i$'s expected burning
on constant action $c$. Then
\[
    U_S+U_R=W_c-B_R(c)-B_S(c),
    \qquad
    W_c^\rho=W_c-B_R(c).
\]
The sender's guarantee gives
\[
    U_S\geq V(\rho_R)
    \geq W_c-B_R(c)-b^\rho.
\]
Combining these inequalities yields
\[
    U_R+B_S(c)\leq b^\rho\leq\overline R.
\]
Since $U_R\geq\overline R$, we obtain
\[
    U_R=\overline R,
    \qquad
    B_S(c)=0,
    \qquad
    b^\rho=\overline R.
\]

Choose an action $j$ attaining $b^\rho$. Nonnegative
receiver payments imply
\[
    R_j^\rho\leq R_j\leq\overline R.
\]
Equality therefore holds throughout, and the receiver
makes no payments on action $j$ in any state.
In particular, $W_j^\rho=W_j$, so
\[
    U_S\geq V(\rho_R)
    \geq W_j^\rho-b^\rho
    =S_j\geq\underline S_R.
\]
Moreover,
\[
    U_S
    =W_c-B_R(c)-\overline R
    \leq W_{\mathrm{const}}(\Gamma)-\overline R.
\]
Thus
\[
    (U_S,U_R)\in\mathcal U_{\mathrm c}(\Gamma),
\]
proving necessity.

\paragraph{Implementing the nonconstant region.}
Fix $(U_S,U_R)\in\mathcal U_{\mathrm{nc}}(\Gamma)$.
Choose a nonconstant rule $f$ attaining
$W_{\mathrm{nc}}(\Gamma)$. The defining inequalities
of this region are exactly the conditions of
Lemma~\ref{lem:nonconstant-implementation}.
That lemma therefore implements $(U_S,U_R)$.

\paragraph{Implementing the constant region.}
Fix $(U_S,\overline R)\in\mathcal U_{\mathrm c}(\Gamma)$.
Choose an original receiver-best action $j$ with
$S_j=\underline S_R$, and choose
$c\in\arg\max_a W_a$. Put
\[
    D:=W_{\mathrm{const}}(\Gamma)-\overline R-U_S
    \geq0.
\]

If $c=j$, the inequalities defining the constant
region imply $U_S=S_j$ and $D=0$; let the receiver
commit to no payments. Otherwise, let the receiver
burn $D$ on action $c$ in every state and make no
other payments. Denote this scheme by $\rho_R$.

Action $j$ retains receiver payoff $\overline R$,
and no action has a higher receiver payoff, so
\[
    b^\rho=\overline R.
\]
Also,
\[
    W_c^\rho
    =W_{\mathrm{const}}(\Gamma)-D
    =U_S+\overline R.
\]
If $c\neq j$, binary actions imply that $j$ is the
only other action, and
\[
    W_j^\rho=S_j+\overline R
    \leq U_S+\overline R=W_c^\rho.
\]
If $c=j$, it already maximizes constant-action welfare.
Hence in either case,
\[
    V(\rho_R)=U_S.
\]

Prescribe babbling after every sender commitment and
choose the optimal sender response inducing $c$
from Lemma~\ref{lem:receiver-first-babbling}.
The resulting utilities are $(U_S,\overline R)$.
After every other receiver commitment, prescribe
the optimal babbling response from the same lemma.
No receiver deviation then yields more than
$\overline R$.

This implements every profile in
$\mathcal U_{\mathrm c}(\Gamma)$ and completes
the characterization.

\section{Proof of Proposition~\ref{prop:persuasion-equivalence}}\label{sec:proof-persuasion-equivalence}

We first show that every equilibrium under a privately
randomized sender commitment is bounded by an appropriate
persuasion scheme. We then implement an optimal persuasion
scheme using a private lottery over payment schemes.

\paragraph{From commitments to persuasion.}
Fix any privately randomized sender commitment and
continuation equilibrium. Let
\[
    x_{\theta a}
    :=\Pr(\text{receiver chooses }a\mid\theta),
    \qquad
    \alpha_a:=\sum_\theta p(\theta)x_{\theta a},
\]
and let $P_a$ be the unconditional expected transfer
actually paid to the receiver on outcomes where
action $a$ is chosen.

For every pair of actions $a,b$, aggregate the
receiver's optimality inequalities over messages
at which action $a$ is chosen, weighting by the
probability of that message and action. Since
transfers promised on deviations are nonnegative,
the resulting inequality implies
\[
    \sum_\theta p(\theta)x_{\theta a}
    \bigl(u_R(\theta,a)-u_R(\theta,b)\bigr)
    +P_a
    \geq0.
\]
Indeed, omitting the receiver's incoming transfer
on the alternative action can only increase the
left-hand side.

If $\alpha_a>0$, set $z_a:=P_a/\alpha_a$.
If $\alpha_a=0$, then $P_a=0$, and set $z_a:=0$.
The direct persuasion scheme recommending $a$
with probability $x_{\theta a}$ and paying $z_a$
for compliance is therefore obedient. Its sender
payoff is
\[
    \sum_{\theta,a}p(\theta)x_{\theta a}u_S(\theta,a)
    -\sum_a P_a.
\]
This is at least the sender's equilibrium payoff
under the original commitment, since any realized
burning further reduces that payoff. Consequently,
\[
    \OPT_{S,\Delta}^{+}(\Gamma)
    \leq\mathrm{BP}_{+}(\Gamma).
\]

If the commitment permits only burning, all $P_a$
are zero. The same argument then gives an obedient
persuasion scheme without payments, proving
\[
    \OPT_{S,\Delta,\bot}^{+}(\Gamma)
    \leq\mathrm{BP}(\Gamma).
\]

\paragraph{From persuasion to commitments.}
Consider an obedient direct persuasion scheme
$\pi(a\mid\theta)$ with nonnegative compliance
payments $z_a$. Represent its recommendation rule
as a lottery over deterministic maps $f:\Theta\to A$.
For example, independently select $f(\theta)$
according to $\pi(\cdot\mid\theta)$ for every state.
The resulting distribution over maps is finite,
independent of the realized state, and satisfies
\[
    \Pr(f(\theta)=a)=\pi(a\mid\theta).
\]

For each map $f$, define a sender payment scheme
$\tau^f$ as follows. At state $\theta$, let
$a=f(\theta)$. On action $a$, the sender transfers
$z_a$ to the receiver and burns nothing.
On every other action $b\neq a$, she makes no
transfer and burns
\[
    1+\max\{0,u_S(\theta,b)-u_S(\theta,a)+z_a\}.
\]
Thus action $a$ gives the sender
$u_S(\theta,a)-z_a$, strictly more than any other
action under the realized scheme.

The sender publicly commits to the lottery over
these schemes and privately observes its realization.
She then recommends the prescribed action $f(\theta)$.
Following recommendation $a$, the receiver obtains
the same conditional payoffs as in the persuasion
scheme: it receives $z_a$ if it follows and no
transfer otherwise. Obedience therefore follows
from the persuasion constraints.

If the receiver is indifferent, favorable tie-breaking
also selects the recommendation, since every sender
type compatible with that recommendation strictly
prefers compliance under its realized scheme.
The sender has no profitable message deviation.
At unused messages, prior beliefs and receiver
best responses complete the assessment; these
beliefs are consistent with state-independent
trembles. Hence the constructed strategies form
a sequential equilibrium of the continuation game.

No burning occurs on path, so the sender receives
exactly the persuasion payoff. Setting all $z_a=0$
gives the same implementation using burning alone.

Optimal persuasion schemes exist here. In the
payment model, one can use the variables
\[
    P_a=\alpha_a z_a
\]
to express the problem as a finite linear program:
\[
\begin{aligned}
\max_{x,P}\quad&
    \sum_{\theta,a}p(\theta)x_{\theta a}u_S(\theta,a)
    -\sum_a P_a\\
\text{subject to}\quad&
    \sum_a x_{\theta a}=1
    &&\text{for every }\theta,\\
&
    \sum_\theta p(\theta)x_{\theta a}
    \bigl(u_R(\theta,a)-u_R(\theta,b)\bigr)+P_a
    \geq0
    &&\text{for every }a\neq b,\\
&
    x_{\theta a}\geq0,\qquad P_a\geq0.
\end{aligned}
\]
It is feasible and has a finite optimal value.
At an optimum, $\alpha_a=0$ implies $P_a=0$,
so the compliance payments can be recovered as
above. Setting $P_a=0$ for every action gives
the ordinary persuasion problem.

Finally, implement an optimal persuasion scheme
and prescribe a babbling continuation after every
other announced commitment. The upper bounds
already proved apply to every such deviation.
Thus the constructed commitment is itself optimal,
and the implementation extends to an equilibrium
of the entire commitment game. This proves both
equalities and attainment.

\section{Proof of Corollary~\ref{cor:general-randomized-computation}}
\label{app:general-randomized-computation}
The linear program in the proof of
Proposition~\ref{prop:persuasion-equivalence} has
$nm+m$ variables and $O(n+m^2)$ constraints,
apart from nonnegativity constraints. It therefore
admits polynomial-time optimization for rational
input data. Setting all payment variables to zero
gives the corresponding linear program for
burning-only commitments.

Let $\pi(a\mid\theta)$ be an optimal recommendation
rule, and let $z_a$ be its nonnegative compliance
payments, with $z_a=0$ in the burning-only case.
We construct a small lottery over deterministic
action rules having the same recommendation
probabilities.

Fix an ordering $a_1,\ldots,a_m$ of the actions.
For each state $\theta$, define the cumulative
probabilities
\[
    c_{\theta,0}=0,
    \qquad
    c_{\theta,k}
    =\sum_{\ell=1}^{k}\pi(a_\ell\mid\theta)
    \quad(k=1,\ldots,m).
\]
Collect all these numbers, remove repetitions,
and sort them:
\[
    0=t_0<t_1<\cdots<t_K=1.
\]
There are at most $n(m-1)$ distinct interior
breakpoints, so $K\le n(m-1)+1$.

Each interval $(t_{j-1},t_j)$ defines a deterministic
action rule $f_j$: in state $\theta$, select the
action whose cumulative-probability interval
contains $(t_{j-1},t_j)$.
Choose $f_j$ with probability $t_j-t_{j-1}$.
By construction,
\[
    \Pr(f_j(\theta)=a)=\pi(a\mid\theta)
    \qquad
    \text{for every $\theta$ and $a$}.
\]

Apply the implementation construction of
Proposition~\ref{prop:persuasion-equivalence} to
each $f_j$, using the compliance payments $z_a$.
The resulting lottery has at most $K$ payment
schemes and attains the optimal persuasion payoff.
By that proposition, prescribing babbling
continuations after other announced commitments
extends the construction to a sequential equilibrium
of the commitment game.

An optimal rational solution of the linear program
has polynomial encoding length. Computing the
compliance payments, sorting the breakpoints,
and constructing the payment schemes all require
polynomial time and produce rational parameters
of polynomial encoding length.

\end{document}